
\documentclass[10pt,a4paper]{article}

\usepackage{a4wide}
\usepackage{amsthm}
\newtheorem{theorem}{Theorem}
\newtheorem{lemma}{Lemma}
\newtheorem{proposition}{Proposition}
\newtheorem{corollary}{Corollary}
\usepackage{hyperref} 

\usepackage{xcolor}
\usepackage{graphicx}
\usepackage{amssymb}
\usepackage{amsmath}
\usepackage{stackrel}
\usepackage{cite}
\usepackage{txfonts}
\usepackage{wrapfig}
\usepackage{mathrsfs}
\usepackage{subfigure}


\newcounter{number}

\newenvironment{varitemize}
{
\begin{list}{\labelitemi}
{\setlength{\itemsep}{0pt}
 \setlength{\topsep}{0pt}
 \setlength{\parsep}{0pt}
 \setlength{\partopsep}{0pt}
 \setlength{\leftmargin}{15pt}
 \setlength{\rightmargin}{0pt}
 \setlength{\itemindent}{0pt}
 \setlength{\labelsep}{5pt}
 \setlength{\labelwidth}{10pt}
}}
{
 \end{list} 
}

\newenvironment{varnumlist}
{
\begin{list}{\arabic{number}.}
{\usecounter{number}
 \setlength{\itemsep}{0.0mm}
 \setlength{\topsep}{0.0mm}
 \setlength{\parindent}{0.0mm}
 \setlength{\parskip}{0.0mm}
 \setlength{\parsep}{0.0mm}
 \addtolength{\leftmargin}{-\labelsep}}}
{
 \end{list} 
}


\newcommand{\symone}{\alpha} 


\newcommand{\lblone}{\ell} 
\newcommand{\lbltwo}{\nu}
\newcommand{\lblthree}{\xi}
\newcommand{\netone}{\rho} 
\newcommand{\nettwo}{\pi} 
\newcommand{\netthree}{\theta}
\newcommand{\netfour}{\xi}
\newcommand{\netfive}{\upsilon}


\newcommand{\midd}{\; \; \mbox{\Large{$\mid$}}\;\;} 
\newcommand{\bnf}{::=}


\newcommand{\natone}{n}

\author{Ugo Dal Lago\and Ryo Tanaka\and Akira Yoshimizu}
\title{The Geometry of Concurrent Interaction:\\ Handling Multiple Ports by Way of Multiple Tokens\\(Long Version)}
\date{}

\setcounter{totalnumber}{10}

\begin{document}
\maketitle
\begin{abstract}
We introduce a geometry of interaction model for Mazza's
multiport interaction combinators, a graph-theoretic formalism which
is able to faithfully capture concurrent computation as embodied by
process algebras like the $\pi$-calculus. The introduced model is
based on token machines in which not one but \emph{multiple}
tokens are allowed to traverse the underlying net \emph{at the same time}.
We prove soundness and adequacy of the introduced model. The former is
proved as a simulation result between the token machines one obtains
along any reduction sequence. The latter is obtained by a fine analysis of
convergence, both in nets and in token machines.
\end{abstract}

\section{Introduction}
Game semantics~\cite{HylandO00,AbramskyJM00} and the geometry of
interaction (GoI for short)~\cite{Girard89,DanosRegnier} are semantic frameworks in
which programs and proofs are interpreted as mathematical or
computational objects exhibiting nontrivial interactive behaviours
(e.g. strategies~\cite{HylandO00}, token machines~\cite{DanosRegnier},
operators~\cite{Girard89}).  This allows for a number of nice
properties.  First of all, these models can be defined so as to be
\emph{compositional}, but close to \emph{contexts} as for their
discriminating power; this is particularly true in game semantics, for
which many full abstraction results have been proved. Secondly,
interactive models can often be presented concretely, either as
circuits~\cite{GhicaGOS-I,GhicaGOS-II},
automata~\cite{DanosRegnier}, or abstract machines for
strategies~\cite{FredrikssonGhicaLICS2013}, thus enabling direct
compilation of higher-order programs into low-level languages.

The bulk of the huge amount of literature on interactive semantic models is
about sequential languages~\cite{HylandO00,AbramskyJM00} but,
especially in the last fifteen years, the research community has been
able to devise game models sufficiently powerful to
interpret not only advanced features like polymorphisms and general
references~\cite{AbramskyJ03,AbramskyHM98}, but also
concurrency~\cite{AbramskyMellies99LICS,Mellies-AsynchronousGames,RideauWinskelLICS,WinskelLICS2012},
thus going significantly beyond game semantics as originally conceived
(more on this is in Section~\ref{sect:relatedwork} below). The same cannot
be said about the geometry of interaction, which until very recently has
been able to interpret only sequential, although potentially
effectful, forms of computation \cite{HoshinoMH14,MuroyaHH16}.

A crucial observation, which is the starting point of this work, is
that the geometry of interaction, when formulated in terms of so-called token
machines, can be made parallel by allowing \emph{more than} a single token to
float around \emph{at the same time}~\cite{lics2014}. The
consequences of this idea have been analysed for sequential
languages~\cite{lics2015}, also in presence of probabilistic and quantum
effects~\cite{preprintpopl}. Are multiple tokens enough to model 
fully-fledged concurrency, as embodied by process
algebras? This is the question we will try to address in this
paper. The answer will be positive, although the walk to it will not
be easy.

When looking for a GoI semantics for concurrent models of
computation, one could of course proceed by considering any concrete
process algebra, and define a token machine for it \emph{directly},
being inspired by the literature. As an example, the way the
higher-order $\pi$-calculus is classically encoded into the
usual name-passing $\pi$-calculus~\cite{Sangiorgi93} can be seen reminiscent
of the usual GoI construction, since higher-order process passing is
encoded into a somehow more basic, essentially first-order
calculus. This route would however be biased to a specific process
algebra, thus losing in generality and canonicity. In turn, relying on
a (possibly complicated) computational model would mean hiding the
\emph{structure} of the introduced GoI, understanding what is our main
aim here.

For these reasons, we will purposely take a minimalistic approach,
being inspired by Lafont's \emph{interaction nets} (\emph{INs} for
short) and \emph{interaction combinators} (for short,
\emph{ICs})~\cite{Lafont}.  The latter is a system of interaction nets
which is \emph{universal}, and thus embodies a vast class of
graph-theoretic models for sequential interaction, including logical
systems~\cite{GonthierAL92}, programming languages~\cite{Mackie}, and
optimal reduction algorithms~\cite{Lamping90}. ICs are not only very
simple themselves, but also admit an elementary geometry of
interaction model, arguably the simplest of all~\cite{Lafont}. ICs,
however, cannot form the basis on which to build a GoI model of
concurrent computation: they are strongly confluent and simply lack
the mixture of parallelism and nondeterminism which is at the heart of
concurrency, although being a very good model of (low-level)
sequential computation.

A picture similar to the one drawn by Lafont but envisioned with
concurrency in mind is the one due to Alexiev, Mazza, and
coauthors~\cite{phdAlexiev,MazzaThesis,MazzaCONCUR2005,DormanM13}. In
the last twenty years, in particular, concurrent extensions of
interaction nets, called \emph{multiport} interaction nets
(\emph{MINs} for short), have been proved to be powerful enough to
faithfully encode process algebras~\cite{MazzaThesis,MazzaCONCUR2005},
to admit an event-structure model~\cite{MazzaMSCS}, and to be strictly
more expressive than multi\emph{rule} interaction
nets~\cite{DormanM13}. Remarkably, MINs have also been shown to admit
a universality theorem with respect to a specific interaction system,
namely \emph{multiport interaction combinators} (for short,
\emph{MICs})~\cite{MazzaThesis}. In a sense, then, we have the
following equation:
$$
\frac{\mathit{INs}}{\mathit{ICs}}=\frac{\mathit{MINs}}{\mathit{MICs}}.
$$
MINs generalise INs in that cells can have \emph{more than one}
principal port, this way allowing for a very general form of
nondeterministic interaction, and MICs, being universal for MINs, are
a natural candidate model for studying GoI models for concurrent
systems, as suggested by Mazza
himself~\cite{MazzaThesis}. Nevertheless, we are not aware of any
attempt to give geometry of interaction models for any system of MINs.

When trying to define a GoI model for MICs, one immediately realises
that classic token machines are simply \emph{not} adequate to
faithfully model concurrent interaction. In particular, multiport
interaction, i.e., nondeterminism as found in MICs, cannot be captured
by automata in which just one token is allowed to float around the
underlying graph, as will be explained in
Section~\ref{sect:whynotenough} below. The only way out consists in
allowing for the simultaneous presence of multiple
tokens~\cite{lics2014,lics2015}, which becomes essential here. In
particular, it allows for \emph{nondeterministic} and \emph{nonlocal}
interaction, which is an essential ingredient of MICs' dynamics, and
of concurrency in general.

This paper is devoted to presenting the first geometry of interaction
model for multiport interaction combinators. Our GoI model is a
substantial extension of Lafont's classic token machine model for
ICs~\cite{Lafont}. We allow multiple tokens to move around at the same
time and make use of them to realise the process of resolving
nondeterminism and keeping track of choices, that we call
\emph{marriages}. This requires four different kinds of token, three
of them being static, and only one meant to really travel inside the
net. This way, we get a stateless notion of a machine even if, of
course, static tokens could be replaced by stateful cells.

Our \emph{Multi-token machines} can be seen as \emph{locative transition
  systems}, a special class of labelled transition systems. This
framework provides us with a natural way of defining parallel
composition, and we will prove our semantics to be indeed
compositional, i.e., that the parallel composition of two nets can be
interpreted as the parallel composition of the two respective
machines, modulo bisimilarity. We can thus establish soundness of the
model in terms of labelled (bi)similarity. What makes everything much
more complex than in single-token machines is the nondeterministic
nature of the reduction system: a net can make nondeterministic
choices, thus losing the capability of behaving in a certain way. If a
net $\netone$ reduces to another net $\nettwo$ by performing such a
reduction step, then the interpretation of $\netone$ is not
necessarily behaviourally \emph{equivalent} to that of $\nettwo$, but
can rather be ``larger''. As a consequence, soundness of our GoI model
needs to be spelled out in the form of \emph{similarity} which, by the
way, turns out to come from bisimilarity between other associated
states of the two involved machines.

We also show that our model is not too coarse but \emph{adequate}, in
the sense that token machines reflect the convergence behaviour of the
nets they interpret, both in the ``may'' and in the ``must''
sense. Our proof heavily exploits the bisimulation relations we use to
prove our soundness result, which can relate an execution of a token
machine to another one along net reduction.

\subsection{Related Work}\label{sect:relatedwork}
Concurrent (or asynchronous) game semantics, first
introduced in \cite{AbramskyMellies99LICS} and pursued later by Melli\'es
\cite{Mellies-AsynchronousGames}, is a generalisation of usual game
semantics  for sequential computation \cite{HylandO00,AbramskyJM00}.
It yielded a fully abstract model of multiplicative additive
linear logic proofs, followed by (again fully abstract) models of many
concurrent calculi, e.g.\ CSP or Parallel Algol
\cite{Laird01,GhicaM04}.  Recently, Winskel and his coauthors have
studied winning conditions and determinacy results for such games,
that may lead to applications in verification of concurrent systems
\cite{RideauWinskelLICS,WinskelLICS2012}. We are not aware of any
attempt to relate all this to geometry of interaction and token
machines.

Another formalism of concurrent and distributed systems that is worth
mentioning here is the one of Petri nets~\cite{Petri66CommunicationWithAutomata}.
Indeed, our token machines resemble Petri nets to a large extent:
multiple tokens circulate around a graph structure, dynamically
enabling or disabling each other's transition.  However, there is one
remarkable difference: while in Petri nets the underlying graph
consists itself of places and transitions and computation is
inherently \emph{local}, out token machines indeed allow tokens lying
next to cells which are far away from each other in a net to
communicate, meaning that interaction is \emph{non}local.  This
is an inevitable price to pay if we want token machines to properly
reflect the behaviour of the MIC reduction rules.  See
Section~\ref{sect:whynotenough} for the details, and
Section~\ref{sect:discussion} for more observations in this direction.

Differential interaction nets~\cite{EhrhardR05} are a graphical
calculus for differential linear logic~\cite{Ehrhard05}, and can be
seen as a multi\emph{rule} variant of interaction nets, for which a
single token GoI model already exists~\cite{deFalco}.  They exhibit
nondeterministic behaviour and are able to encode finitary fragments
of Milner's $\pi$-calculus~\cite{EhrhardLaurent10}, although the
encoding \emph{cannot} be completely satisfactory, as highlighted by
Dorman and Mazza \cite{DormanM13}.  We chose MICs as our target
calculus with this observation in mind: multirule interaction nets
simply lack the expressive power which is necessary to model
concurrency in its generality.  On the other hand, the solid logical
basis and the accompanying type system of differential interaction
nets may offer us a more structural way to deal with problems in
concurrency theory. The authors believe that studying the nature and
structure of the token-flowing can be a way to devise appropriate type
structures and logical systems for calculi which lacks any of those,
like MICs.  This is a topic the authors are currently working on, but
which lies outside the scope of this paper.

\section{Multiport Interaction Combinators at a Glance}\label{sect:MIC}
This section is devoted to introducing the objects of study of this
paper, namely multiport interaction combinators.

\newcommand{\wrone}{e} 
\newcommand{\wrtwo}{f} 
\newcommand{\wrthree}{g} 
\newcommand{\wrset}{\mathcal{WIR}}
\newcommand{\prone}{p} 
\newcommand{\prtwo}{q} 
\newcommand{\prthree}{r} 
\newcommand{\prset}{\mathcal{POR}}
\newcommand{\fprset}{\mathcal{FPOR}}
\newcommand{\clset}{\mathcal{CEL}}
\newcommand{\ctset}{\mathcal{CT}} 
\newcommand{\redrules}{\mathop{\mathcal{R}}}
\newcommand{\netset}{\mathsf{NETS}}
\newcommand{\ckset}{\mathscr{K}}
\newcommand{\lsone}{\mathcal{L}}
\newcommand{\lstwo}{\mathcal{M}}
\newcommand{\lsthree}{\mathcal{N}}
\newcommand{\setone}{X}
\newcommand{\settwo}{Y}
\newcommand{\elmone}{x}
\newcommand{\elmtwo}{y}
\newcommand{\mst}[1]{\{{\kern-1.5pt}|#1|{\kern-1.5pt}\}}
\newcommand{\msts}[1]{\mathbb{M}(#1)}
\newcommand{\fmsts}[1]{\mathbb{FM}(#1)}
\newcommand{\NN}{\mathbb{N}}
\newcommand{\pinjone}{\sigma}
\newcommand{\inv}[1]{#1^{-1}}
\newcommand{\dom}[1]{\mathit{dom}(#1)}
\newcommand{\rng}[1]{\mathit{rng}(#1)}
\newcommand{\setdiff}{\triangle}
\newcommand{\pc}[3]{#1||_{#2}#3}

\subsection{Nets}
A \emph{cell} is a triple $(\symone, \vec{\prone}, \vec{\prtwo})$ of a
symbol $\symone$, a sequence $\vec{\prone}$ of \emph{auxiliary ports},
the length of which is the \emph{arity} of $\symone$, and a sequence
$\vec{\prtwo}$ of \emph{principal ports}, the length of which is the
\emph{coarity} of $\symone$. A cell is drawn in Figure~\ref{fig:cell}.
\begin{figure*}
  \begin{center}
   \fbox{
     \begin{minipage}{.97\textwidth}
      \hspace{32pt} 
      \subfigure[A Cell]{\label{fig:cell}
      \begin{minipage}[c]{.0843\textwidth}
        \centering
        \includegraphics[scale=0.7]{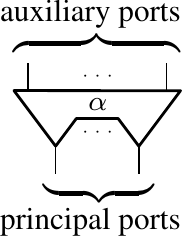}
        \vspace{7pt}
      \end{minipage}}
      \hspace{32pt}
      \subfigure[A Net]{\label{fig:labellednet}
      \begin{minipage}[c]{.099\textwidth}
        \centering
        \includegraphics[scale=0.7]{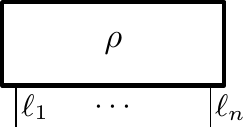}
        \vspace{7pt}
      \end{minipage}}
      \hspace{32pt}
      \subfigure[Interaction Combinators]{\label{fig:mics}
      \begin{minipage}[c]{.199\textwidth}
        \centering
        \includegraphics[scale=0.8]{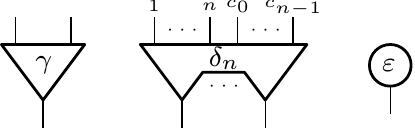}
        \vspace{7pt}
      \end{minipage}}
      \hspace{32pt}
      \subfigure[A Vicious Circle]{\label{fig:vcs}
      \begin{minipage}[c]{.1539\textwidth}
        \centering
        \includegraphics[scale=0.6]{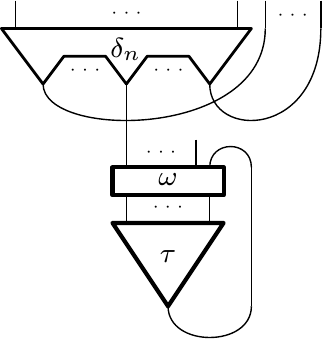}
        \vspace{7pt}
      \end{minipage}}
      \hspace{32pt}
  \caption{Cells, Nets, Combinators, and Vicious Circles.}
  \end{minipage}}
\end{center}
\end{figure*}
A \emph{net} consists of finitely many cells, \emph{free ports} and
\emph{wires} which connect each ports with another one.
Formally, then, given a set of cell kinds $\ckset$, a net $\netone$ is
  $\clset^\netone$ and $\wrset^\netone$ where:
  \begin{varitemize}
  \item
    $\prset^\netone$ is a finite set of ports.
  \item
    $\clset^\netone$ is a finite set of cells of a kind $\symone \in \ckset$, whose ports are elements of $\prset^\netone$.
  \item
    $\wrset^\netone$ is a finite set of unordered pairs of ports.
  \item
  Each port appears at least once in $\wrset^\netone$ and
  at most twice in $\clset^\netone \cup \wrset^\netone$.
  \end{varitemize}
The ports of $\netone$ which appear only once in $\clset^\netone \cup
\wrset^\netone$ are said to be \emph{free}.  The set of free
ports of $\netone$ is $\fprset^\netone$.  Nets are indicated with
metavariables like $\netone$ or $\nettwo$.
Given a set of cell kinds $\ckset$, the set of all nets is
denoted by $\netset_\ckset$.  We assume that each free port $\prone$ of any
net is labelled with a distinct label $\lblone$. Often, nets are
presented graphically, e.g. a net $\netone$ with free ports labelled
with $\lblone_1,\ldots,\lblone_n$ looks as in
Figure~\ref{fig:labellednet}. Metavariables like $\lsone$ and $\lstwo$
stands for finite sets of locations $\{\lblone_1,\ldots,\lblone_n\}$.

In \emph{multiport interaction combinators}, one considers cells of
three different kinds, namely $\gamma$ (with arity $2$ and coarity
$1$), $\delta_n$ (with arity $2n$ and coarity $n$, for each $n \geq
1$), and $\varepsilon$ (with arity $0$ and coarity $1$). These are
pictured in Figure~\ref{fig:mics}.
Thus the set of nets in MICs is $\netset_{\{\gamma, \delta_n, \varepsilon \,|\, n \geq 1\}}$.
A \emph{vicious circle} is a subnet of a net like the one depicted in
Figure~\ref{fig:vcs}, where 
$\tau$ is a tree consisting of $\gamma$ cells, and
$\omega$ is a \emph{wiring}, i.e. consists of wires only.
An example of an MIC net can be found in Figure~\ref{fig:netexample}.
It consists of two $\delta_2$ cells, one $\gamma$ cell, and nine free
ports, connected in the way depicted.
The reason why MICs are a relevant
instance of multiport interaction nets is that they allow for a
Universality Theorem (see \cite{MazzaThesis}, Theorem~6.16, page~209), which
states that \emph{any other} system of multiport interaction nets can
be encoded into MICs. This makes MICs a minimal, but extremely
powerful graph-rewriting formalism, since multiport interaction nets
are well known to be expressive enough to faithfully capture the
dynamics of process algebras, and of the $\pi$-calculus in
particular~\cite{MazzaCONCUR2005,phdAlexiev}. This is in contrast to multirule interaction
nets, like differential interaction nets, which are known to be strictly
less expressive than their multiport siblings~\cite{DormanM13}.

\begin{figure}
  \begin{center}
    \fbox{
      \begin{minipage}{.68\textwidth}
        \hspace{29pt}
        \subfigure[Net]{\label{fig:netexample}
        \begin{minipage}[c]{.20\textwidth}
          \centering
          \includegraphics[scale=0.6]{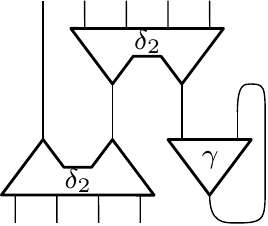}
          \vspace{7pt}
        \end{minipage}}
        \hspace{29pt}
        \subfigure[Reduction]{\label{fig:reductionexample}
        \begin{minipage}[c]{.45\textwidth}
          \centering
          \includegraphics[scale=0.6]{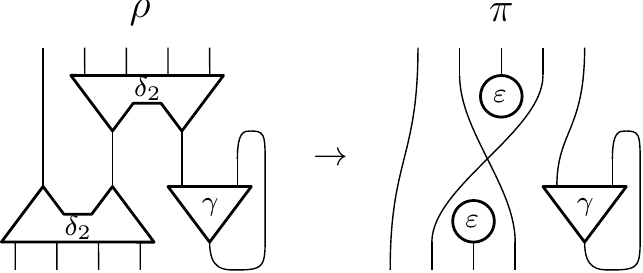}
          \vspace{7pt}
        \end{minipage}}
        \hspace{29pt}
        \caption{An Example}          
    \end{minipage}}
  \end{center}
\end{figure}
\begin{figure}
  \begin{center}
    \fbox{
    \begin{minipage}[c]{.47\textwidth}
      \centering
      \includegraphics[scale=0.8]{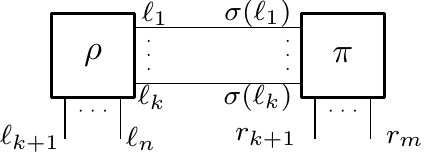}
      \caption{Net Parallel Composition}
      \label{fig:pcnets}
    \end{minipage}}
  \end{center}
\end{figure}

Sometimes it is very convenient to form the \emph{parallel
  composition} of two nets that share certain locations. Any partial
injection $\pinjone$ from a set $\setone$ to a set $\settwo$ can be
seen as a bijection from $\dom{\pinjone}\subseteq\setone$ to
$\rng{\pinjone}\subseteq\settwo$. Its inverse, as usual, is indicated
with $\inv{\pinjone}$. Given nets $\netone$ on $\lsone$ and $\nettwo$
on $\lstwo$ (where $\lsone$ and $\lstwo$ are disjoint) and a partial
injection $\pinjone$ from $\lsone$ and $\lstwo$, the parallel
composition of $\netone$ and $\nettwo$ is defined to be the net
$\pc{\netone}{\pinjone}{\nettwo}$ depicted in Figure~\ref{fig:pcnets},
where $\lsone=\{\ell_1,\ldots,\ell_n\}$ and $\lstwo=\{r_1,\ldots,r_m\}$.

\subsection{Reduction}
\newcommand{\red}{\rightarrow}
\newcommand{\pred}[1]{\rightarrow_{#1}}

The \emph{interaction rules} for the multiport interaction combinators
are the graph rewriting rules (also called \emph{reduction rules}) in
Figure~\ref{fig:redrules}.  We will refer to the three rules as
$\gamma\gamma$, $\delta\delta$ and $\gamma\delta$, respectively.  A
pair of principal ports facing each other in a rule is called a
\emph{redex}.
\begin{figure*}
  \begin{center}
  \fbox{
    \begin{minipage}{.97\textwidth}
      \centering
      \begin{minipage}{.21\textwidth}
      \includegraphics[scale=0.65]{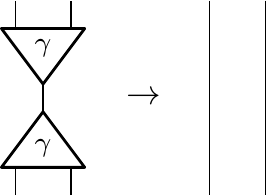}
      \end{minipage}
      \hspace{30pt}
      \begin{minipage}{.37\textwidth}
        \includegraphics[scale=0.65]{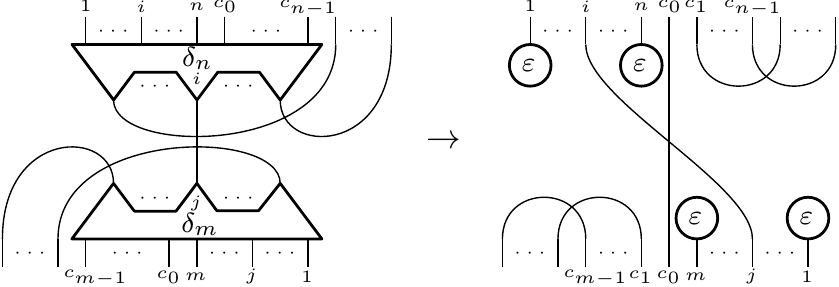}
      \end{minipage}
      \\
      \begin{minipage}{.36\textwidth}
        \includegraphics[scale=0.65]{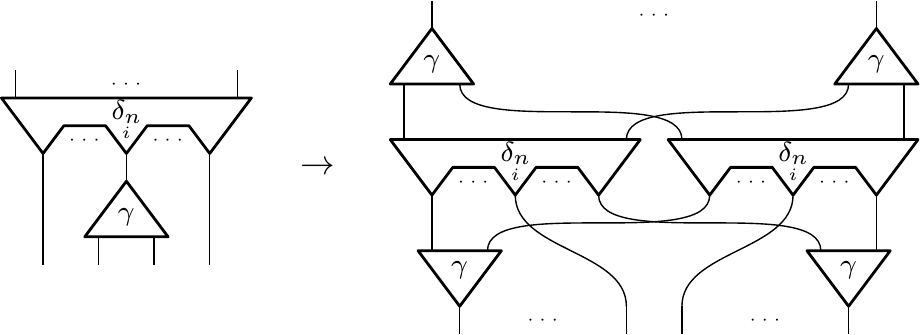}
      \end{minipage}
      \caption{MIC Interaction Rules}\label{fig:redrules}
  \end{minipage}}
  \end{center}
\end{figure*}
Rules can be applied anywhere in a net, giving rise to a binary
relation between nets that we indicate as $\red$.

MICs as defined by Mazza \cite{MazzaThesis} include three more
interaction rules, called $\varepsilon$ rules, which allow to handle
the situation in which an $\varepsilon$ cell faces another cell
through its principal port. All what we say in this paper holds for
the system Mazza considers, and proofs of that can be found in the
Appendix. The choice of considering a simplified system is motivated
by the desire to make our techniques and results easier to
understand, and by the fact that the
Universality Theorem holds also in absence of $\varepsilon$ rules
\cite{MazzaThesis,EV}. For closely related reasons, we decided to
consider a system of multiport interaction combinators in which the
co-arity $n$ of $\delta_n$ cells is always equal to $2$ hereafter in
the paper, and write $\netset$ for
$\netset_{\{\gamma,\delta_2,\varepsilon\}}$.  This, again, allows for
a very simple presentation without sacrificing Universality of MICs.

Consider, as an example, the net in Figure~\ref{fig:netexample}, and
call it $\netone$. It has a redex consisting of two $\delta_2$ cells
facing each other through their principal ports. The net $\netone$ can
thus be rewritten according to the $\delta\delta$ rule, to another net
$\pi$, as described in Figure~\ref{fig:reductionexample}. The net
$\pi$ has no redex, and is thus said to be in \emph{normal form}.
Please observe that a vicious circle occurs in $\netone$, while no
vicious circle can be found in $\nettwo$. This, in other words, is a
witness of the fact that vicious circles are \emph{not} preserved by
reduction, in general. By the way, we consider vicious circles as a
form of a divergent net, given the cyclic chain of $\gamma$ cells they
contain.

\subsection{Why One Token is Not Enough}\label{sect:whynotenough}
\begin{wrapfigure}{R}{.3\textwidth}
\fbox{
\begin{minipage}{.27\textwidth}
\centering
  \includegraphics[scale=1]{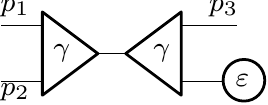}
  \caption{An IC Net.}
  \label{fig:goiexample}
\end{minipage}}
\end{wrapfigure}
In the geometry of interaction, at least in its classic
incarnation~\cite{Girard89,DanosRegnier}, the behaviour of a net is
captured by how the net itself transforms a so-called \emph{token}
when the latter travels inside it. The net, in other words, is seen as
a token transformer, and correctness of the semantics means that
reduction turns nets into \emph{equivalent} ones, i.e., into nets which
behave the same when seen as token transformers.  An example IC net
can be found in Figure~\ref{fig:goiexample}. Seen as a token
transformer, the net sends any token which comes from the free port
$\prone_1$ to $\prone_3$; on the other hand any token coming from
$\prone_2$ gets stuck. This is because the $\gamma$ cells in the middle
work by pushing a symbol on one of the token's stacks when traversed
from their secondary to their principal port, and, dually, they pop
the symbol when traversed the other way round. The net rewrites to
another simple net consisting of an $\varepsilon$ cell connected to
the free port $\prone_2$ and a wire connecting two free ports
$\prone_1$ and $\prone_3$, which (obviously) exhibits the same
behaviour seen as a token transformer. This is indeed the way token
machines are proved to be a sound model of net
reduction in ICs~\cite{Lafont}.

The same kind of framework cannot work in MICs. To convince yourself about
that, please consider the net in Figure~\ref{fig:delta2example}, and
call it $\netone$.
\begin{figure}
  \begin{center}
  \fbox{
    \begin{minipage}{.47\textwidth}
        \centering
        \includegraphics[scale=0.7]{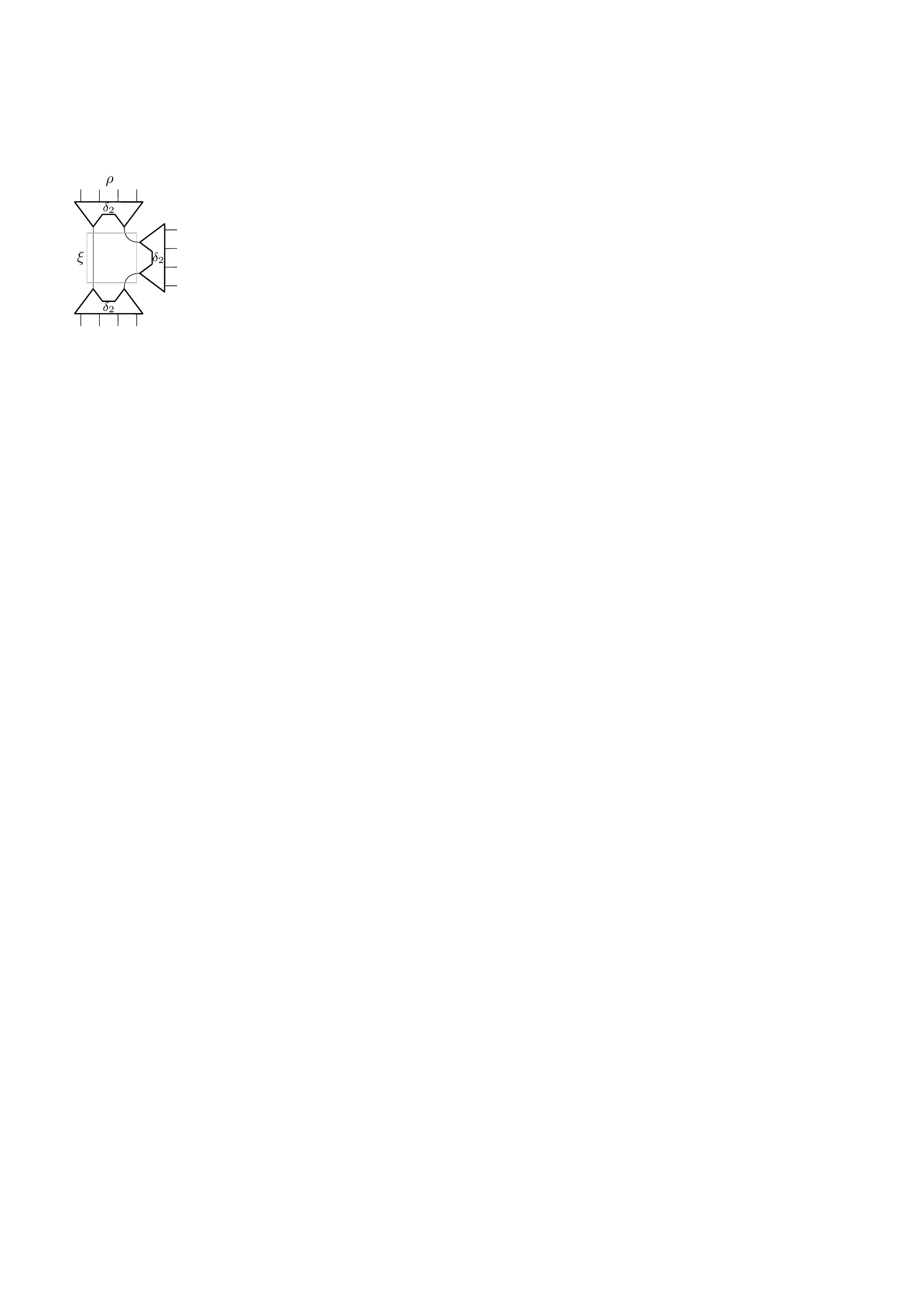}
        \caption{Three $\delta_2$ Cells Facing Each Other}\label{fig:delta2example}
  \end{minipage}}
\end{center}
\end{figure}
\begin{figure}
  \begin{center}
    \fbox{
      \begin{minipage}[c]{.47\textwidth}
        \centering
        \includegraphics[scale=0.7]{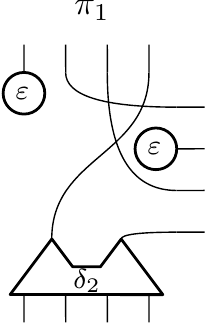}
        \qquad\quad
        \includegraphics[scale=0.7]{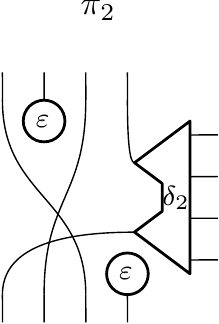}
        \qquad\quad
        \includegraphics[scale=0.7]{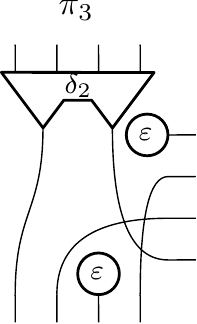}
        \caption{Three Possible Outcomes}\label{fig:delta2examplereduct}
      \end{minipage}}
  \end{center}
\end{figure}
It can reduce to three essentially different nets, namely the ones in
Figure~\ref{fig:delta2examplereduct}, call them
$\nettwo_1,\nettwo_2,\nettwo_3$. By analogy with the previous example,
it is clear that the way tokens travel inside $\rho$ should ``mimick''
the way they travel in \emph{each} of the $\nettwo_i$. But each of the
three $\nettwo_i$ behaves very differently from the others, i.e., the
\emph{nondeterminism} in the choice of which two of the three
$\delta_2$ cells in $\netone$ should interact is of a genuine kind. On
top of that, there is also the fact that the subnet $\xi$ of $\netone$
could be replaced by a much more complicated net (e.g. that in
Figure~\ref{fig:netexample}) and thus nondeterministic choices cannot
be (all) resolved initially, when the whole computation starts. Some
of them need to wait for other choices to be made, have to be
\emph{delayed}, and thus naturally become \emph{nonlocal}. One way to deal
with nonlocal nondeterminism is to charge tokens the task of resolving
it. The inevitable price to pay, however, is the fact that \emph{more
  than one token} may flow around the net at the same time: there is
no way to decide which ones of the many $\delta_2$ cells will be the
first ones to meet, and the way out is to be ``lazy'' and allow
\emph{all} $\delta_2$ cells to look for a partner. In $\netone$, as an
example, we would have six tokens, each starting at a principal port
of one of the three $\delta_2$ cells. Their task is to look for
another cell with which the $\delta_2$ cell they started from could
\emph{marry}. Once this is identified, the actual \emph{marriage} can
happen, but this must be an indivisible and irreversible operation,
after which the same cell cannot marry anyone else. This, in turn,
requires the presence of another kind of token, which is static and
whose role is to keep track of the ``civil state'' of any
$\delta_2$ cell, i.e. whether it is married or not.

The ones just described are the basic ingredients of the geometry
of interaction model we are going to introduce in Section~\ref{sect:MTM}
below. Before doing that, however, we need to setup a simple theory
of labelled bisimilarity without which one would not even have
a way to \emph{define} whether our semantics is correct.
\section{Locative Transition Systems and Bisimilarity}
\newcommand{\asone}{A}
\newcommand{\astwo}{B}
\newcommand{\actone}{a}
\newcommand{\acttwo}{b}
\newcommand{\dual}[1]{\mathrm{dual}_{#1}}
\newcommand{\portho}[3]{#2\bot_{#1} #3}
\newcommand{\ortho}[2]{#1\bot #2}
\newcommand{\sortho}[1]{\bot_{#1}}
\newcommand{\ltsone}{\mathbf{A}}
\newcommand{\ltstwo}{\mathbf{B}}
\newcommand{\ltsthree}{\mathbf{C}}
\newcommand{\fmaone}{\mathsf{a}}
\newcommand{\fmatwo}{\mathsf{b}}
\newcommand{\fmathree}{\mathsf{c}}
\newcommand{\fmafour}{\mathsf{d}}
\newcommand{\ssone}{S}
\newcommand{\sstwo}{T}
\newcommand{\trone}{\rightarrow}
\newcommand{\trtwo}{\Rightarrow}
\newcommand{\trthree}{\leadsto}
\newcommand{\isone}{s}
\newcommand{\istwo}{t}
\newcommand{\sone}{u}
\newcommand{\stwo}{t}
\newcommand{\sthree}{v}
\newcommand{\sfour}{w}
\newcommand{\tr}[4]{#2\stackrel{#3}{#1}#4}
\newcommand{\relone}{\mathcal{R}}

Usual, single-token machines, can be seen as token-transformers, i.e.,
as partial functions capturing the input-output behaviour of a net
seen as an automaton turning any token flowing \emph{into} the net
into a token flowing \emph{out} of it. This simple picture does not
hold anymore in the kind of multi-token machines we work with in this
paper. The causal dependencies between tokens are much more
complicated here, and we cannot simply proceed by letting token
machines to take in input \emph{all} tokens (like in
\cite{lics2014,lics2015}). It is thus natural to see the token machines
as labelled transition systems which interact with their environment
by letting tokens to flow in and out of them.

It is convenient to introduce a special class of labelled
transition systems, called \emph{locative transition systems}, where
labels are of a peculiar kind, namely consist of \emph{multisets} of actions
played at certain locations. Allowing more than one action to be
played together, atomically, in turn allows more than one token to
flow in and out at the same time. Locative transition systems support
a natural form of parallel composition, and notions of (bi)similarity
for which parallel composition can be shown to be a congruence. We
will give the status of a locative transition system to multi-token
machines in Section \ref{sect:MTM} below.
\subsection{Locative Transition Systems}\label{sect:lts}
\emph{Action sets}, i.e., finite set of actions, are ranged over by
metavariables like $\asone$ and $\astwo$.  For each action set
$\asone$, we assume the existence of an involution
$\dual{\asone}:\asone\rightarrow\asone$. If for some
$\actone,\acttwo\in\asone$, $\dual{\asone}(\actone)=\acttwo$, we often
write $\portho{\asone}{\actone}{\acttwo}$ or simply
$\ortho{\actone}{\acttwo}$. For every set $\setone$, let
$\msts{\setone}$ be the collection of all multisets on $\setone$,
i.e., of all functions with domain $\setone$ and codomain $\NN$.
$\fmsts{\setone}$ is the subset of $\msts{\setone}$ consisting of
\emph{finite} multisets only. To distinguish between sets and
multisets, the latter are denoted with expressions like
$\mst{\elmone_1,\dots,\elmone_n}$ instead of
$\{\elmone_1,\ldots,\elmone_n\}$, the latter standing for a set,
as usual. Let
$\lsone$ be a finite set of labels, and let $\asone$ be a set of
actions. The expression $\lsone_\asone$ stands for the collection
$\fmsts{\lsone\times\asone}$.  Elements of $\lsone_\asone$ are denoted
as $\fmaone,\fmatwo$, etc. Further, if $\pinjone$ is a partial
injection on $\lsone$, a multiset in $\fmsts{\lsone\times\asone}$ is
said to be \emph{$\pinjone$-dual} if it is in the form
$$
\mst{
  (\lblone_1,\actone_1),\ldots,(\lblone_n,\actone_n),
  (\pinjone(\lblone_1),\acttwo_1),\ldots,(\pinjone(\lblone_n),\acttwo_n)
}
$$
where $\acttwo_i=\dual{\asone}(\actone_i)$ for every $1\leq i\leq n$.
The set of $\pinjone$-dual multisets is $\sortho{\pinjone}$.

We are here interested in labelled transition systems whose labels
are drawn from a set in the form $\lsone_\asone$.  This is said to be
a \emph{locative transition system on $\lsone_\asone$} (or a
$\lsone_\asone$-LTS) and is an ordinary LTS
$\ltsone=(\ssone_\ltsone,\trone_\ltsone,\isone_\ltsone)$ where
$\ssone_\ltsone$ is a set of \emph{states},
$\isone_\ltsone\in\ssone_\ltsone$ is the \emph{initial state}, and
$\trone_\ltsone\subseteq\ssone_\ltsone\times\lsone_\asone\times\ssone_\ltsone$
is the \emph{transition relation}.
We write $\tr{\trone_\ltsone}{\sone}{\fmaone}{\stwo}$
for ${(\sone, \fmaone, \stwo)} \in {\trone_\ltsone}$.

The parallel composition of two
locative transition systems requires a slightly complicated machinery
to be defined. Given an $\lsone_\asone$-LTS $\ltsone$, an
$\lstwo_\asone$-LTS $\ltstwo$ (where $\lsone$ and $\lstwo$ are
disjoint), and a partial injection $\pinjone:\lsone\rightarrow\lstwo$,
one can define the \emph{$\pinjone$-parallel composition}
$\pc{\ltsone}{\pinjone}{\ltstwo}$ of $\ltsone$ and $\ltstwo$ as the
$\lsthree_\asone$-LTS
$(\ssone_\ltsone\times\ssone_\ltstwo,\trthree,(\isone_\ltsone,\isone_\ltstwo))$
where $\lsthree=\lsone\cup\lstwo-\dom{\pinjone}-\rng{\pinjone}$, the
transition relation $\trthree$ is
$\trthree_\ltsone\cup\trthree_\ltstwo\cup\trthree_{\ltsone\ltstwo}$,
where
\begin{align*}
  \trthree_\ltsone=&\{((\sone_\ltsone,\sone_\ltstwo),\fmaone,(\stwo_\ltsone,\sone_\ltstwo))\mid
  \fmaone\in(\lsone-\dom{\pinjone})_\asone\land\tr{\trone_\ltsone}{\sone_\ltsone}{\fmaone}{\stwo_\ltsone}\};\\
  \trthree_\ltstwo=&\{((\sone_\ltsone,\sone_\ltstwo),\fmaone,(\sone_\ltsone,\stwo_\ltstwo))\mid
  \fmaone\in(\lstwo-\rng{\pinjone})_\asone\land\tr{\trone_\ltstwo}{\sone_\ltstwo}{\fmaone}{\stwo_\ltstwo}\};\\
  \trthree_{\ltsone\ltstwo}=&\{((\sone_\ltsone,\sone_\ltstwo),\fmaone,(\stwo_\ltsone,\stwo_\ltstwo))\mid
  \exists\fmafour\in\sortho{\pinjone}.\tr{\trone_\ltsone}{\sone_\ltsone}{\fmatwo}{\stwo_\ltsone}
  \land\tr{\trone_\ltstwo}{\sone_\ltstwo}{\fmathree}{\stwo_\ltstwo}\\
  &\land\fmaone\cup\fmafour=\fmatwo\cup\fmathree
  \}.
\end{align*}
Informally, $\pc{\ltsone}{\pinjone}{\ltstwo}$ can perform an action
$\fmaone$ iff either one of $\ltsone$ or $\ltstwo$ can perform it
\emph{on one of the non-shared locations}, or both $\ltsone$ and
$\ltstwo$ perform some actions which, \emph{when put in parallel}, result
precisely in $\fmaone$.

Again, please observe that a locative transition system evolves
not by performing one (located) action $(\ell,\actone)$, but
a multiset of such actions.
\subsection{Bisimilarity}
The notion of (strong) simulation and bisimulation relations on a $\lsone_\asone$-LTS
can be defined as usual:
  \begin{varitemize}
  \item
    given two such LTSs $\ltsone$ and $\ltstwo$,
    a \emph{simulation} is any subset $\relone$ of $\ssone_\ltsone\times\ssone_\ltstwo$ such
    that 
    \begin{varnumlist}
    \item
      $\isone_\ltsone\;\relone\;\isone_\ltstwo$;
    \item
      whenever $\sone\;\relone\;\stwo$ and
      $\tr{\trone}{\sone}{\fmaone}{\sthree}$, it holds that
      $\tr{\trone}{\stwo}{\fmaone}{\sfour}$, where
      $\sthree\;\relone\;\sfour$.
    \end{varnumlist}
  \item
    given two such LTSs $\ltsone$ and $\ltstwo$, a \emph{bisimulation}
    is any subset $\relone$ of $\ssone_\ltsone\times\ssone_\ltstwo$
    such that 
    \begin{varnumlist}
    \item
      $\isone_\ltsone\;\relone\;\isone_\ltstwo$;
    \item
      whenever $\sone\;\relone\;\stwo$ and
      $\tr{\trone}{\sone}{\fmaone}{\sthree}$, it holds that
      $\tr{\trone}{\stwo}{\fmaone}{\sfour}$, where
      $\sthree\;\relone\;\sfour$;
    \item
      whenever $\sone\;\relone\;\stwo$ and
      $\tr{\trone}{\stwo}{\fmaone}{\sfour}$, it holds that
      $\tr{\trone}{\sone}{\fmaone}{\sthree}$, where
      $\sthree\;\relone\;\sfour$.
    \end{varnumlist}
  \end{varitemize}

  Similarity and bisimilarity are themselves defined as usual:
  \begin{align*}
    {\sim} &=\bigcup\{\relone\mid\mbox{$\relone$ is a bisimulation}\};\\
    {\precsim} &=\bigcup\{\relone\mid\mbox{$\relone$ is a simulation}\}.
  \end{align*}
  Weak simulation and bisimulation relations, similarity and bisimilarity
  on $\lsone_\asone$-LTSs are also defined as usual.
  Given any $\lsone_\asone$-LTS, the \emph{weak transition relation}
  $\trtwo$ is defined as
    \[
    \stackrel{\fmaone}{\trtwo}\;=\begin{cases}
    \stackrel{\emptyset}{\trone}^\ast\stackrel{\fmaone}{\trone}
    \stackrel{\emptyset}{\trone}^\ast & \text{if }\fmaone\neq\emptyset  \\
    \stackrel{\emptyset}{\trone}^\ast & \text{if }\fmaone=\emptyset.
    \end{cases}
    \]
    \begin{varitemize}
    \item
      given two such LTSs $\ltsone$ and $\ltstwo$,
      a \emph{weak simulation} is any subset $\relone$ of $\ssone_\ltsone\times\ssone_\ltstwo$ such
      that 
      \begin{varnumlist}
      \item
        $\isone_\ltsone\;\relone\;\isone_\ltstwo$;
      \item
        whenever $\sone\;\relone\;\stwo$ and
        $\tr{\trone}{\sone}{\fmaone}{\sthree}$, it holds that
        $\tr{\trtwo}{\stwo}{\fmaone}{\sfour}$, where
        $\sthree\;\relone\;\sfour$.
      \end{varnumlist}
    \item
      given two such LTSs $\ltsone$ and $\ltstwo$, a \emph{weak bisimulation}
      is any subset $\relone$ of $\ssone_\ltsone\times\ssone_\ltstwo$
      such that 
      \begin{varnumlist}
      \item
        $\isone_\ltsone\;\relone\;\isone_\ltstwo$;
      \item
        whenever $\sone\;\relone\;\stwo$ and
        $\tr{\trone}{\sone}{\fmaone}{\sthree}$, it holds that
        $\tr{\trtwo}{\stwo}{\fmaone}{\sfour}$, where
        $\sthree\;\relone\;\sfour$;
      \item
        whenever $\sone\;\relone\;\stwo$ and
        $\tr{\trone}{\stwo}{\fmaone}{\sfour}$, it holds that
        $\tr{\trtwo}{\sone}{\fmaone}{\sthree}$, where
        $\sthree\;\relone\;\sfour$.
      \end{varnumlist}
    \end{varitemize}
  Weak similarity and bisimilarity are defined as:
  \begin{align*}
    {\approx}&=\bigcup\{\relone\mid\mbox{$\relone$ is a weak bisimulation}\};\\
    {\precapprox}&=\bigcup\{\relone\mid\mbox{$\relone$ is a weak simulation}\}.
  \end{align*}
  Noticeably, parallel composition is commutative, modulo strong bisimilarity.

  \begin{lemma}
    $\pc{\ltsone}{\pinjone}{\ltstwo}\sim\pc{\ltstwo}{\pinjone^{-1}}{\ltsone}$. 
  \end{lemma}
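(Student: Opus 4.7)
The plan is to exhibit an explicit bisimulation and then verify the matching conditions symbolically, leveraging the fact that $\dual{\asone}$ is an involution and that $\pinjone^{-1}$ literally inverts $\pinjone$. Concretely, I would take
\[
\relone \;=\; \{\,((\sone_\ltsone,\sone_\ltstwo),(\sone_\ltstwo,\sone_\ltsone)) \mid \sone_\ltsone\in\ssone_\ltsone,\ \sone_\ltstwo\in\ssone_\ltstwo\,\}
\]
as my candidate relation between the state spaces of $\pc{\ltsone}{\pinjone}{\ltstwo}$ and $\pc{\ltstwo}{\pinjone^{-1}}{\ltsone}$. The initial-state requirement $(\isone_\ltsone,\isone_\ltstwo)\;\relone\;(\isone_\ltstwo,\isone_\ltsone)$ is immediate from the definition of parallel composition, and the label set $\lsthree=\lsone\cup\lstwo-\dom{\pinjone}-\rng{\pinjone}$ is visibly the same as $\lstwo\cup\lsone-\dom{\pinjone^{-1}}-\rng{\pinjone^{-1}}$, so both sides transition on labels drawn from the same $\lsthree_\asone$.

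Next I would verify the transfer property by a case analysis over the three disjoint classes $\trthree_\ltsone$, $\trthree_\ltstwo$, $\trthree_{\ltsone\ltstwo}$ in the definition of the transition relation. A transition in $\trthree_\ltsone$ of $\pc{\ltsone}{\pinjone}{\ltstwo}$ uses a label $\fmaone\in(\lsone-\dom{\pinjone})_\asone$ and moves only the $\ltsone$-component; in $\pc{\ltstwo}{\pinjone^{-1}}{\ltsone}$ the role of $\ltsone$ is swapped to the second position, so the corresponding clause that applies is $\trthree_\ltstwo$ (relative to the swapped composition), whose labels live in $(\lsone-\rng{\pinjone^{-1}})_\asone=(\lsone-\dom{\pinjone})_\asone$. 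This is exactly the right label set, and the transition reaches the swap of the target pair, staying inside $\relone$. The case for $\trthree_\ltstwo$ is completely symmetric.

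The main (and only genuinely non-trivial) case is $\trthree_{\ltsone\ltstwo}$. Here I need to show that a $\pinjone$-dual multiset $\fmafour$ witnessing a synchronisation between $\ltsone$ and $\ltstwo$ gives rise to a $\pinjone^{-1}$-dual witness for the swapped composition. The key observation is that if $\fmafour$ has the form $\mst{(\lblone_i,\actone_i)}_i \cup \mst{(\pinjone(\lblone_i),\acttwo_i)}_i$ with $\acttwo_i=\dual{\asone}(\actone_i)$, then since $\dual{\asone}$ is an involution we also have $\actone_i=\dual{\asone}(\acttwo_i)$ and $\lblone_i = \pinjone^{-1}(\pinjone(\lblone_i))$, so the very same multiset, read by listing the $\pinjone(\lblone_i)$-labelled entries first, belongs to $\sortho{\pinjone^{-1}}$. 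Combined with the fact that $\fmaone\cup\fmafour=\fmatwo\cup\fmathree$ is symmetric in the contributions of $\ltsone$ and $\ltstwo$, the swapped composition can fire the matching transition to a state in $\relone$.

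I expect the bookkeeping in the synchronisation case to be the only thing that requires care: one has to track which component of the product each label and each partial action lives in and make sure that the roles of $\fmatwo$ and $\fmathree$ get exchanged correctly when $\ltsone$ and $\ltstwo$ are swapped. Once $\relone$ is shown to satisfy both transfer clauses, it is by construction also the inverse of a bisimulation from $\pc{\ltstwo}{\pinjone^{-1}}{\ltsone}$ to $\pc{\ltsone}{\pinjone}{\ltstwo}$, so we conclude $\pc{\ltsone}{\pinjone}{\ltstwo}\sim\pc{\ltstwo}{\pinjone^{-1}}{\ltsone}$.
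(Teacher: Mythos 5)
Your proposal is correct and follows essentially the same route as the paper: the same swap relation $\relone$, the same three-way case analysis on $\trthree_\ltsone$, $\trthree_\ltstwo$, $\trthree_{\ltsone\ltstwo}$, and the same key observation that $\pinjone$-duality coincides with $\pinjone^{-1}$-duality. Your treatment is in fact slightly more explicit than the paper's about the label-set bookkeeping ($\rng{\pinjone^{-1}}=\dom{\pinjone}$), but there is no substantive difference.
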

    \begin{proof}
      We show that
      $$
      \relone=\{((\sone_\ltsone,\sone_\ltstwo),(\sone_\ltstwo,\sone_\ltsone))\mid
      \sone_\ltsone\in\ssone_\ltsone,\sone_\ltstwo\in\ssone_\ltstwo\}
      $$
      is a bisimulation. We have $(\isone_\ltsone,\isone_\ltstwo)\;\relone\;(\isone_\ltstwo,\isone_\ltsone)$.
      Assume $(\sone_\ltsone,\sone_\ltstwo)\;\relone\;(\sone_\ltstwo,\sone_\ltsone)$ and
      $\tr{\trone}{(\sone_\ltsone,\sone_\ltstwo)}{\fmaone}{(\stwo_\ltsone,\stwo_\ltstwo)}$
      in $\pc{\ltsone}{\pinjone}{\ltstwo}$. Let us distinguish three cases:
      \begin{varitemize}
      \item
        If $((\sone_\ltsone,\sone_\ltstwo),\fmaone,(\stwo_\ltsone,\stwo_\ltstwo))\in\;\trthree_\ltsone$,
        then $\sone_\ltstwo=\stwo_\ltstwo$ and $\tr{\trone}{\sone_\ltsone}{\fmaone}{\stwo_\ltsone}$.
        One can see that $\pc{\ltstwo}{\pinjone}{\ltsone}$ is able to perform
        $\tr{\trone}{(\sone_\ltstwo,\sone_\ltsone)}{\fmaone}{(\sone_\ltstwo,\stwo_\ltsone)}$
        since $((\sone_\ltstwo,\sone_\ltsone),\fmaone,(\sone_\ltstwo,\stwo_\ltsone))\in\;\trthree_\ltsone$.
      \item
        If $((\sone_\ltsone,\sone_\ltstwo),\fmaone,(\stwo_\ltsone,\stwo_\ltstwo))\in\;\trthree_\ltstwo$,
        then we can proceed in the similar way.
      \item
        If $((\sone_\ltsone,\sone_\ltstwo),\fmaone,(\stwo_\ltsone,\stwo_\ltstwo))\in\;\trthree_{\ltsone\ltstwo}$,
        then there exist $\fmatwo\in\lsone_\asone$, $\fmathree\in\lstwo_\asone$ and $\fmafour\in\sortho{\pinjone}$
        such that $\tr{\trone}{\sone_\ltsone}{\fmatwo}{\stwo_\ltsone}\wedge
        \tr{\trone}{\sone_\ltstwo}{\fmathree}{\stwo_\ltstwo}\wedge\fmaone\cup\fmafour=\fmatwo\cup\fmathree$.
        By definition, $\pinjone$-dual is equivalent to $\pinjone^{-1}$-dual, thus we indeed have
        $((\sone_\ltstwo,\sone_\ltsone),\fmaone,(\stwo_\ltstwo,\stwo_\ltsone))\in\;\trthree_{\ltstwo\ltsone}$.
      \end{varitemize}
      The opposite direction is similar.
    \end{proof}
  As expected, strong bisimilarity is a congruence for parallel composition:
  \begin{proposition}\label{prop:bisimcong}
    If $\ltsone\sim\ltstwo$, then $\pc{\ltsone}{\pinjone}{\ltsthree}\sim\pc{\ltstwo}{\pinjone}{\ltsthree}$.
  \end{proposition}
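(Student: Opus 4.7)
The plan is to build a bisimulation between $\pc{\ltsone}{\pinjone}{\ltsthree}$ and $\pc{\ltstwo}{\pinjone}{\ltsthree}$ from a bisimulation witnessing $\ltsone\sim\ltstwo$, staying within the pointwise discipline suggested by the pairing of states of the two parallel compositions. Concretely, fix a bisimulation $\relone_0\subseteq\ssone_\ltsone\times\ssone_\ltstwo$ with $\isone_\ltsone\;\relone_0\;\isone_\ltstwo$, and let
\[
\relone=\{((\sone_\ltsone,\sone_\ltsthree),(\sone_\ltstwo,\sone_\ltsthree))\mid \sone_\ltsone\;\relone_0\;\sone_\ltstwo,\;\sone_\ltsthree\in\ssone_\ltsthree\}.
\]
The initial states are related because $\isone_\ltsone\;\relone_0\;\isone_\ltstwo$ and the third component coincides.

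To verify the transfer property, I would fix a pair $(\sone_\ltsone,\sone_\ltsthree)\;\relone\;(\sone_\ltstwo,\sone_\ltsthree)$ and a transition $\tr{\trone}{(\sone_\ltsone,\sone_\ltsthree)}{\fmaone}{(\stwo,\stwo_\ltsthree)}$ in $\pc{\ltsone}{\pinjone}{\ltsthree}$, and split into the three cases induced by the definition of the transition relation. If the transition lies in $\trthree_\ltsone$, then $\stwo_\ltsthree=\sone_\ltsthree$ and $\sone_\ltsone\stackrel{\fmaone}{\trone_\ltsone}\stwo_\ltsone$ with $\fmaone$ supported on non-shared locations of $\lsone$; using $\relone_0$, pick $\stwo_\ltstwo$ with $\sone_\ltstwo\stackrel{\fmaone}{\trone_\ltstwo}\stwo_\ltstwo$ and $\stwo_\ltsone\;\relone_0\;\stwo_\ltstwo$, and lift this to a $\trthree_\ltstwo$-transition in $\pc{\ltstwo}{\pinjone}{\ltsthree}$. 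If the transition lies in $\trthree_\ltsthree$, then $\sone_\ltsone$ is unchanged, so the very same transition of $\ltsthree$ is available in $\pc{\ltstwo}{\pinjone}{\ltsthree}$, producing a related pair. The interesting case is the synchronisation $\trthree_{\ltsone\ltsthree}$: there exist $\fmatwo,\fmathree$ and $\fmafour\in\sortho{\pinjone}$ with $\sone_\ltsone\stackrel{\fmatwo}{\trone_\ltsone}\stwo_\ltsone$, $\sone_\ltsthree\stackrel{\fmathree}{\trone_\ltsthree}\stwo_\ltsthree$, and $\fmaone\cup\fmafour=\fmatwo\cup\fmathree$. Applying $\relone_0$ to $\sone_\ltsone\stackrel{\fmatwo}{\trone_\ltsone}\stwo_\ltsone$ yields $\stwo_\ltstwo$ with $\sone_\ltstwo\stackrel{\fmatwo}{\trone_\ltstwo}\stwo_\ltstwo$ and $\stwo_\ltsone\;\relone_0\;\stwo_\ltstwo$; reusing the same $\fmathree$, $\fmafour$ gives a $\trthree_{\ltstwo\ltsthree}$-transition in $\pc{\ltstwo}{\pinjone}{\ltsthree}$ labelled by $\fmaone$ to $(\stwo_\ltstwo,\stwo_\ltsthree)$, which is $\relone$-related to $(\stwo_\ltsone,\stwo_\ltsthree)$.

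The symmetric clause (matching moves of $\pc{\ltstwo}{\pinjone}{\ltsthree}$ by moves of $\pc{\ltsone}{\pinjone}{\ltsthree}$) is handled identically, using the other half of the bisimulation $\relone_0$. Closure under the three cases and under the initial-state requirement then shows that $\relone$ is a bisimulation, hence $\pc{\ltsone}{\pinjone}{\ltsthree}\sim\pc{\ltstwo}{\pinjone}{\ltsthree}$.

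I expect no deep obstacle here; the only place where the argument might be miswritten is the synchronisation case, where one must be careful to observe that the duality witness $\fmafour$ depends purely on the partial injection $\pinjone$ and not on the underlying states, so exactly the same decomposition $\fmaone\cup\fmafour=\fmatwo\cup\fmathree$ that justified the $\ltsone$-side transition also certifies the $\ltstwo$-side one once $\sone_\ltstwo\stackrel{\fmatwo}{\trone_\ltstwo}\stwo_\ltstwo$ has been produced by $\relone_0$. Everything else is bookkeeping.
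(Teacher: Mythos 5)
Your proposal is correct and follows essentially the same route as the paper: the same pointwise relation $\hat{\relone}$ pairing $(\sone_\ltsone,\sone_\ltsthree)$ with $(\sone_\ltstwo,\sone_\ltsthree)$, and the same three-way case split on $\trthree_\ltsone$, $\trthree_\ltsthree$, and $\trthree_{\ltsone\ltsthree}$. If anything, your handling of the synchronisation case is slightly more careful than the paper's, which transfers the $\ltsone$-transition with label $\fmaone$ where the label should be $\fmatwo$ before reusing the witness $\fmafour$ and the decomposition $\fmaone\cup\fmafour=\fmatwo\cup\fmathree$, exactly as you do.
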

  \begin{proof}
    Let $\relone$ be a bisimulation relation between $\ltsone$ and $\ltstwo$. We show that
    $$
    \hat{\relone}=\{((\sone_\ltsone,\sone_\ltsthree),(\sone_\ltstwo,\sone_\ltsthree))\mid
    (\sone_\ltsone,\sone_\ltstwo)\in\relone,\sone_\ltsthree\in\ssone_\ltsthree\}
    $$
    is a bisimulation. We have $(\isone_\ltsone,\isone_\ltsthree)\;\hat{\relone}
    \;(\isone_\ltstwo,\isone_\ltsthree)$ since $\isone_\ltsone\;\relone\;\isone_\ltstwo$.
    Assume that $(\sone_\ltsone,\sone_\ltsthree)\;\hat{\relone}\;(\sone_\ltstwo,\sone_\ltsthree)$,
    and $\tr{\trone}{(\sone_\ltsone,\sone_\ltsthree)}{\fmaone}{(\stwo_\ltsone,\stwo_\ltsthree)}$.
    \begin{varitemize}
    \item
      If $((\sone_\ltsone,\sone_\ltsthree),\fmaone,(\stwo_\ltsone,\stwo_\ltsthree))\in\;\trthree_\ltsone$,
      then $\sone_\ltsthree=\stwo_\ltsthree$ and $\tr{\trone}{\sone_\ltsone}{\fmaone}{\stwo_\ltsone}$.
      The latter and $\sone_\ltsone\;\relone\;\sone_\ltstwo$ imply
      $\tr{\trone}{\sone_\ltstwo}{\fmaone}{\stwo_\ltstwo}$ and $\stwo_\ltsone\;\relone\;\stwo_\ltstwo$
      for some $\stwo_\ltstwo$. Therefore $\pc{\ltstwo}{\pinjone}{\ltsthree}$ have the corresponding transition
      $((\sone_\ltstwo,\sone_\ltsthree),\fmaone,(\stwo_\ltstwo,\stwo_\ltsthree))\in\;\trthree_\ltsone$.
    \item
      If $((\sone_\ltsone,\sone_\ltsthree),\fmaone,(\stwo_\ltsone,\stwo_\ltsthree))\in\;\trthree_\ltsthree$,
      then $\sone_\ltsone=\stwo_\ltsone$ and $\tr{\trone}{\sone_\ltsthree}{\fmaone}{\stwo_\ltsthree}$.
      From the latter, it immediately follows that
      $\tr{\trone}{(\sone_\ltstwo,\sone_\ltsthree)}{\fmaone}{(\sone_\ltstwo,\stwo_\ltsthree)}$.
    \item
      If $((\sone_\ltsone,\sone_\ltsthree),\fmaone,(\stwo_\ltsone,\stwo_\ltsthree))\in\;\trthree_{\ltsone\ltsthree}$,
      then there exist $\fmatwo\in\lsone_\asone$, $\fmathree\in\lstwo_\asone$ and $\fmafour\in\sortho{\pinjone}$
      such that $\tr{\trone}{\sone_\ltsone}{\fmatwo}{\stwo_\ltsone}\wedge
      \tr{\trone}{\sone_\ltsthree}{\fmathree}{\stwo_\ltsthree}\wedge\fmaone\cup\fmafour=\fmatwo\cup\fmathree$.
      We use $\sone_\ltsone\;\relone\;\sone_\ltstwo$ to obtain $\stwo_\ltstwo$ such that
      $\tr{\trone}{\sone_\ltstwo}{\fmaone}{\stwo_\ltstwo}$ and $\stwo_\ltsone\;\relone\;\stwo_\ltstwo$.
      Now observe
      $((\sone_\ltstwo,\sone_\ltsthree),\fmaone,(\stwo_\ltstwo,\stwo_\ltsthree))\in\;\trthree_{\ltstwo\ltsthree}$.
    \end{varitemize}
    This concludes the proof.
  \end{proof}
  The same holds for weak bisimilarity:
  \begin{proposition}\label{prop:weakbisimcong}
    If $\ltsone\approx\ltstwo$, then $\pc{\ltsone}{\pinjone}{\ltsthree}\approx\pc{\ltstwo}{\pinjone}{\ltsthree}$.
  \end{proposition}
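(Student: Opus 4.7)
The plan is to replay the proof of Proposition \ref{prop:bisimcong} with each ``strong response'' in $\ltstwo$ replaced by a weak one, using silent padding to lift that padding into the parallel composition. Starting from a weak bisimulation $\relone$ witnessing $\ltsone \approx \ltstwo$, I would take exactly the same candidate relation as in the strong case,
$$
\hat{\relone} = \{((\sone_\ltsone, \sone_\ltsthree), (\sone_\ltstwo, \sone_\ltsthree)) \mid (\sone_\ltsone, \sone_\ltstwo) \in \relone,\ \sone_\ltsthree \in \ssone_\ltsthree\},
$$
and check that it is a weak bisimulation between $\pc{\ltsone}{\pinjone}{\ltsthree}$ and $\pc{\ltstwo}{\pinjone}{\ltsthree}$. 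The initial-state clause is immediate. The key enabling observation is that any $\emptyset$-labeled transition of $\ltstwo$ lifts, via the $\trthree_\ltstwo$ component, to an $\emptyset$-labeled transition of $\pc{\ltstwo}{\pinjone}{\ltsthree}$ that leaves the $\ltsthree$-component untouched: the empty multiset trivially satisfies the location-set side condition. Hence a weak $\stackrel{\fmaone}{\trtwo}$ of $\ltstwo$ lifts to a weak $\stackrel{\fmaone}{\trtwo}$ of the product, provided the middle non-silent step can also be lifted.

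The transfer clause is established by the same three-way split on the challenge transition in $\pc{\ltsone}{\pinjone}{\ltsthree}$ as in the strong case. The $\trthree_\ltsone$ and $\trthree_\ltsthree$ subcases are cosmetic adaptations: when only $\ltsone$ moves by $\fmaone$, I invoke weak bisimilarity to get $\sone_\ltstwo \stackrel{\fmaone}{\trtwo} \stwo_\ltstwo$ with $\stwo_\ltsone \,\relone\, \stwo_\ltstwo$, and lift each step through $\trthree_\ltstwo$; when only $\ltsthree$ moves, the very same transition is strongly (hence weakly) available in the other product, with the $\ltstwo$-component unchanged.

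The main obstacle is the synchronized case $\trthree_{\ltsone\ltsthree}$, where the challenge decomposes as $\ltsone$ doing $\fmatwo$, $\ltsthree$ doing $\fmathree$, and a dual $\fmafour \in \sortho{\pinjone}$ with $\fmaone \cup \fmafour = \fmatwo \cup \fmathree$. Weak bisimilarity yields $\sone_\ltstwo \stackrel{\emptyset}{\trone}^* \sone_\ltstwo' \stackrel{\fmatwo}{\trone} \sone_\ltstwo'' \stackrel{\emptyset}{\trone}^* \stwo_\ltstwo$ with $\stwo_\ltsone \,\relone\, \stwo_\ltstwo$, and I build the matching in three stages: first, silent $\trthree_\ltstwo$-moves from $(\sone_\ltstwo, \sone_\ltsthree)$ to $(\sone_\ltstwo', \sone_\ltsthree)$; then one synchronized $\trthree_{\ltstwo\ltsthree}$ step from $(\sone_\ltstwo', \sone_\ltsthree)$ to $(\sone_\ltstwo'', \stwo_\ltsthree)$, reusing the very same witnesses $\fmatwo, \fmathree, \fmafour$ (legal because $\ltsthree$'s component is still $\sone_\ltsthree$, so $\fmathree$ remains enabled); finally, silent $\trthree_\ltstwo$-moves from $(\sone_\ltstwo'', \stwo_\ltsthree)$ to $(\stwo_\ltstwo, \stwo_\ltsthree)$. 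The resulting pair lies in $\hat{\relone}$ by construction. The degenerate case $\fmatwo = \emptyset$ forces $\fmafour = \emptyset$, since any unpaired $\lsone$-action inside the $\pinjone$-dual $\fmafour$ would have to come from $\fmatwo$; this collapses the argument to the $\trthree_\ltsthree$ subcase after silent padding. The opposite direction is completely symmetric because $\relone$ is a bisimulation.
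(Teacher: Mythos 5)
Your proposal is correct, and it is precisely the argument the paper leaves implicit: Proposition~\ref{prop:weakbisimcong} is stated without proof as the weak analogue of Proposition~\ref{prop:bisimcong}, and your adaptation --- same candidate relation, silent $\ltstwo$-steps lifted componentwise, the $\fmatwo$-step of the weak response reused as the synchronising step with the same witnesses $\fmatwo,\fmathree,\fmafour$ --- is the intended one. You also correctly identify and resolve the only delicate point, namely that $\fmatwo=\emptyset$ forces $\fmafour=\emptyset$ (since the $\dom{\pinjone}$-located half of $\fmafour$ can only be supplied by $\fmatwo$), so that case degenerates to a $\trthree_\ltsthree$-step with silent padding.
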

\section{Multi-Token Machines}\label{sect:MTM}
This section is devoted to defining the multi-token machines which are
the object of study of this paper, and to proving some basic properties of them.
In particular, we will give a Compositionality Theorem that will be very
helpful in the following section.
\subsection{States}
\newcommand{\lft}{\mathtt{p}}
\newcommand{\rgt}{\mathtt{q}}
\newcommand{\emptystk}{\epsilon}
\newcommand{\skone}{\mathbf{s}}
\newcommand{\sktwo}{\mathbf{t}}
\newcommand{\skthree}{\mathbf{u}}
\newcommand{\skfour}{\mathbf{v}}
\newcommand{\skset}{\mathcal{STK}}
\newcommand{\conone}{C}
\newcommand{\contwo}{D}
\newcommand{\conthree}{E}
\newcommand{\conset}{\mathcal{CON}}
\newcommand{\ctone}{c}
\newcommand{\cttwo}{d}
\newcommand{\cnj}[1]{\overline{#1}}
\newcommand{\tkns}{\mathcal{TKS}}
\newcommand{\tknone}{T}
\newcommand{\tkntwo}{X}
\newcommand{\tknthree}{Y}
\newcommand{\tknfour}{Z}
\newcommand{\stone}{\mathsf{S}}
\newcommand{\sttwo}{\mathsf{U}}
\newcommand{\stthree}{\mathsf{V}}
\newcommand{\stfour}{\mathsf{T}}

States of a multi-token machine are just multisets of tokens, each
of them consisting of a port in the underlying net, modelling \emph{where}
the token is, and of some auxiliary information (e.g. the token's
origin, some stacks, etc.), which varies depending on the token's kind.
All this will be formalised in this section.

A \emph{stack} $\skone$ is any sequence whose elements are either
symbols from the alphabet $\{\lft,\rgt\}$ or natural
numbers. Formally:
$$
\skone,\sktwo,\skthree\bnf\emptystk\midd\skone\lft\midd\skone\rgt\midd\skone\natone,
$$
where $\natone\in\NN$ and $\emptystk$ is the empty stack. The set of
all stacks is $\skset$.  A \emph{configuration} is a pair of stacks
$(\skone,\sktwo)$, and is denoted with metavariables like
$\conone,\contwo$. The set of all configurations is $\conset$.  We
define the \emph{conjugate} $\cnj{\skone}$ of a stack $\skone$ as
follows:
\begin{align*}
  \cnj{\emptystk} &= \emptystk &
  \cnj{\skone\lft} &= \cnj{\skone}\rgt &
  \cnj{\skone\rgt} &= \cnj{\skone}\lft &
  \cnj{\skone\natone} &= \cnj{\skone}\natone
\end{align*}
Given a configuration $\conone=(\skone, \sktwo)$, its conjugate
$\cnj{\conone}$ is defined to be $(\cnj{\skone}, \sktwo)$.

Throughout this section, wires of the underlying net will be ranged
over by metavariables like $\wrone,\wrtwo,\wrthree$.  A
\emph{cell type} is any element of the set $\ctset=\{\gamma, \delta\}$.
Cell types will be denoted with metavariables like $\ctone$ an $\cttwo$.
\emph{Tokens} can be of one of four different kinds:
\begin{varitemize}
\item
  \emph{Single status tokens}, which are elements of
  $\prset^\netone\times\skset$.  Graphically, single status tokens are
  denoted with $\blacksquare$.
\item
  \emph{Married status tokens}, which are elements of
  $\prset^\netone\times\skset\times\ctset$.  Graphically, married
  status tokens are denoted with $\times$.
\item
  \emph{Marriage tokens}, which are elements of
  $\prset^\netone\times\conset\times\prset^\netone\times\conset$.
  Graphically, marriage tokens are denoted with $\medbullet$.
\item
  \emph{Matching tokens}, which are elements of
  $\prset^\netone\times\conset\times\prset^\netone\times\conset$.
  Graphically, matching tokens are denoted with $\medcirc$.
\end{varitemize}
The first components of tokens indicate their current positions.
Status tokens must be placed at principal ports while matching tokens
must lie at free ports. The second components of marriage and
matching tokens are their current configurations.  At the third and
fourth components, marriage and matching tokens keep their
origins and configurations as initially installed.  The set of all tokens
for the net $\netone$ is indicated as $\tkns^\netone$.

\emph{States} of the machine we are defining will just be multisets of
tokens, and are denoted with metavariables like $\stone$ and
$\sttwo$.  We assume that marriage tokens on auxiliary or principal
ports of cells are always going out of the cells while those on free
ports are coming in from the environment. For example, we depict two
marriage tokens $(\prone, (\skone, \sktwo), \prtwo, (\skthree,
\skfour))$ on a principal port $\prone$ of a $\gamma$ cell and
$(\prone', (\skone', \sktwo'), \prtwo', (\skthree', \skfour'))$ on a
port $\prone'$ which is connected with $\prone$ by wire ($\prone'$ may
be either a port of another cell or a free port) as in
Figure~\ref{fig:graphmarriagetokens}.

The \emph{initial state} of the machine comprises one marriage token
$(\prone, (\emptystk,\emptystk),\prone, (\emptystk,\emptystk))$
and one single status token $(\prone, \emptystk)$
for each principal port $\prone$ of each $\delta_2$ cell in the underlying net,
as shown in Figure~\ref{fig:initstate}.
Intuitively, those tokens in the initial state are in charge of
keeping track of the status of the $\delta_2$ cell, and to look for
possible partners for it.

\begin{figure}
\begin{center}
\fbox{
  \begin{minipage}{.47\textwidth}
      \centering
      \includegraphics[scale=1]{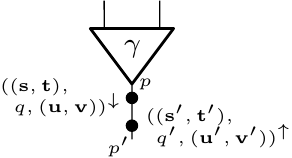}
      \caption{Marriage Tokens: an Example}\label{fig:graphmarriagetokens}
  \end{minipage}}
\end{center}
\end{figure}
\begin{figure}
\begin{center}
\fbox{
  \begin{minipage}{.47\textwidth}
      \centering
      \includegraphics[scale=1]{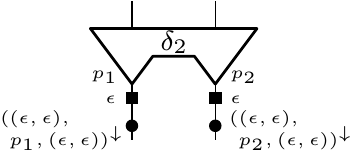}
      \caption{Initial State of Token Machine}\label{fig:initstate}
  \end{minipage}}
\end{center}
\end{figure}

\subsection{Internal Transition Rules}
The behaviour of multi-token machines is given by a series of
transition rules which prescribe how and when a marriage token can move inside
a net, and the protocol governing marriages. 

Transition rules prescribing how token \emph{moves} are in
Figure~\ref{fig:movtrans}.
\begin{figure*}
  \begin{center}
  \fbox{
    \begin{minipage}{.97\textwidth}
      \begin{center}
      \includegraphics[scale=0.8]{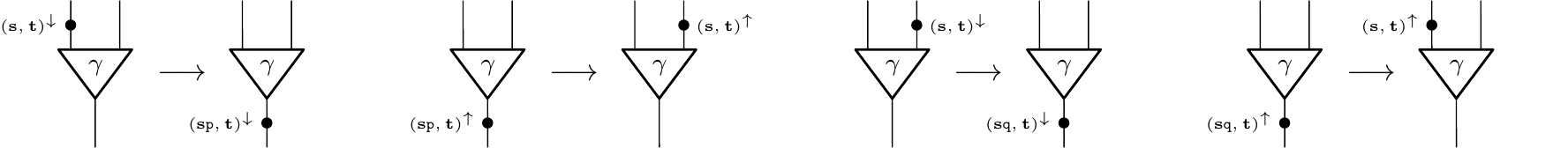}
      \end{center}
      \begin{center}
      \includegraphics[scale=0.8]{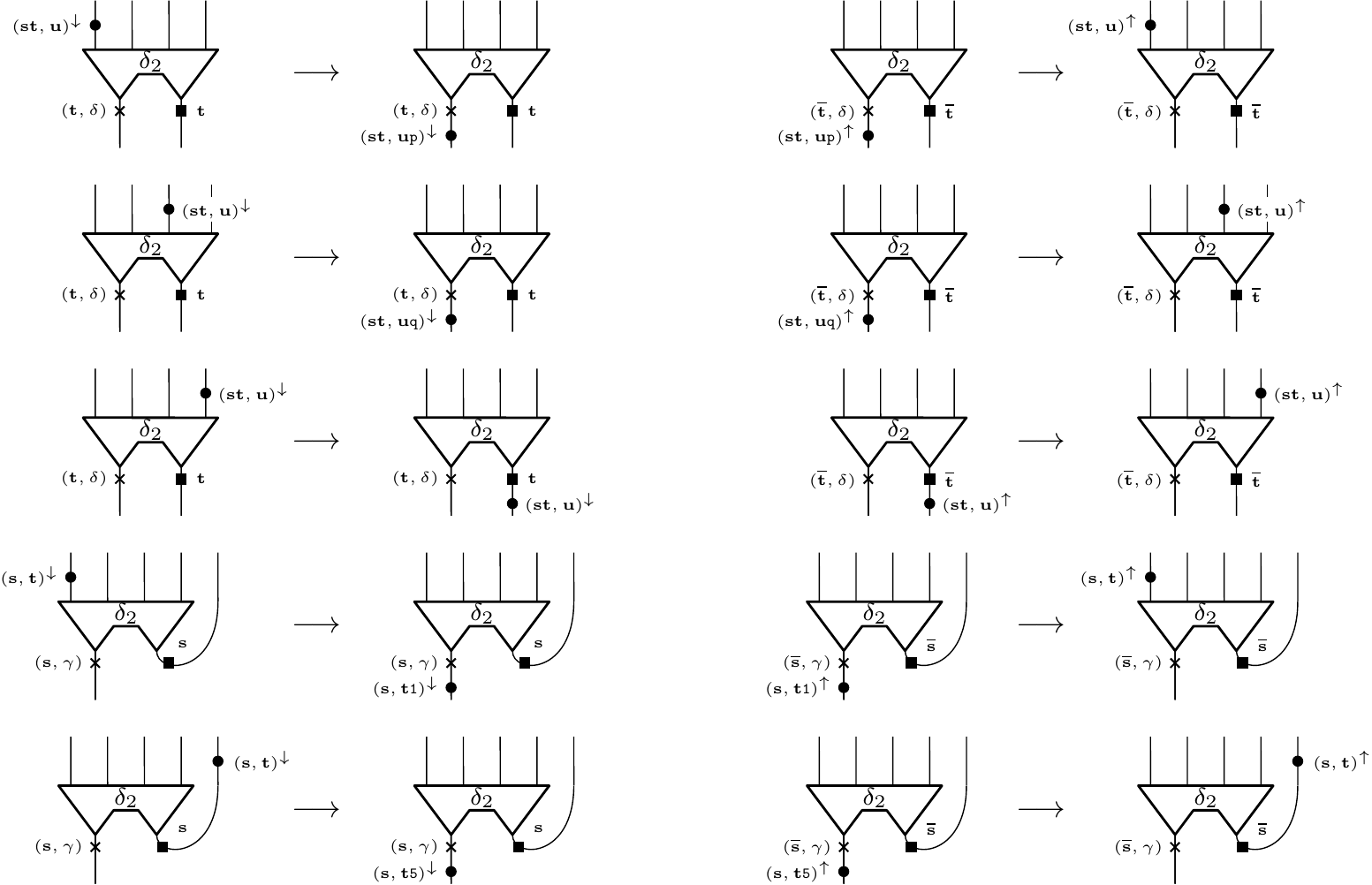}
      \end{center}
      \caption{Internal Moving Transition Rules}\label{fig:movtrans}
  \end{minipage}}
  \end{center}
\end{figure*}
Please observe how marriage tokens can flow through $\gamma$ cells
more or less the same way they do in interaction combinators~\cite{Lafont}:
some symbol is either pushed or popped from their first stack.
On the other hand, when tokens face a $\delta_2$ cell,
there are in principle more than one possibility as for how they should
move, all this depending on the presence of certain married status tokens.

As for transition rules performing \emph{marriages}, they are in
Figure~\ref{fig:martrans}.
\begin{figure*}
  \begin{center}
  \fbox{
    \begin{minipage}{.97\textwidth}
      \begin{center}
      \includegraphics[scale=0.9]{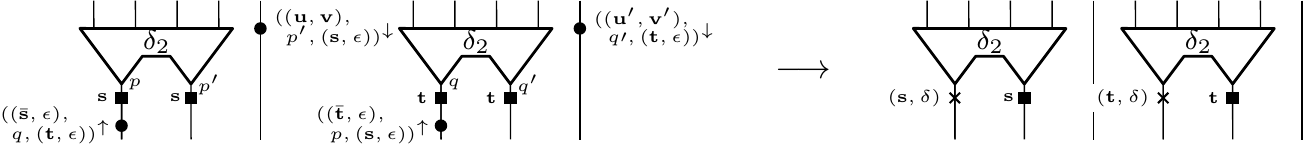}
      \end{center}
      \begin{center}
      \includegraphics[scale=0.9]{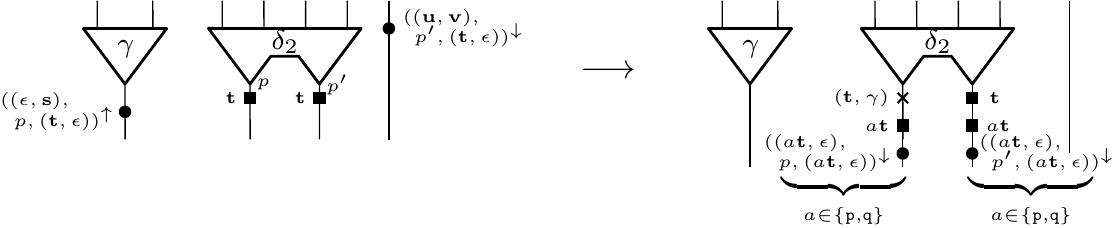}
      \end{center}
      \caption{Internal Marriage Transition Rules}\label{fig:martrans}
  \end{minipage}}
  \end{center}
\end{figure*}
The first rule handles the marriage between two $\delta_2$ cells,
while the second one is charged of handling the case in which a
$\delta_2$ cell marries a $\gamma$ cell. In the first case, it is as
if the two $\delta_2$ cells have interacted with each other, thus
annihilating themselves, and producing some $\varepsilon$ cells.
Notice also how the marriage token which originated from \emph{the
  other} principal port, is annihilated in the process; summing up,
there are four tokens involved altogether.  In the second
case, the $\delta_2$ is virtually duplicated, and indeed some new
(status and marriage) tokens are created, each one corresponding to a
virtual copy of the $\delta_2$ cell. In both cases, (single and
married) status tokens are in charge of guaranteeing atomicity, 
and of keeping track of whether each (copy of a) $\delta_2$ cell
is single or married, in the latter case remembering also the nature
of the cell's partner. It is instructive to notice how marriages
between two $\delta_2$ cells are bidirectional and symmetric in
nature, while those between a $\delta_2$ cell and a $\gamma$ cell
are asymmetric and unidirectional: the latter are inert and marriage
tokens are meant to start their journey from $\delta_2$ cells
uniquely.
\subsection{External Transition Rules}
\newcommand{\outa}[1]{\mathsf{out}(#1)}
\newcommand{\ina}[1]{\mathsf{in}(#1)}
\newcommand{\kla}[1]{\mathsf{kill}(#1)}
\newcommand{\klda}[1]{\mathsf{cokill}(#1)}
\newcommand{\maa}[2]{\mathsf{ma}_{#1}(#2)}
\newcommand{\eaone}{A}
\newcommand{\eatwo}{B}
\newcommand{\easet}{\mathcal{EA}}
\newcommand{\tkm}[1]{\mathsf{TM}_{#1}}

As already pointed out, it is quite convenient to see token machines
not simply as automata evolving as described in the previous
section, but also as labelled transition systems having a nontrivial
\emph{interactive} behaviour. More specifically, token machines can
interact with their environment through free ports by inputting or outputting a token,
by letting a marriage happen, or by killing a token as a part of a
marriage. This makes locative transition systems an ideal candidate
for the kind of LTSs one needs here. \emph{External actions}, which express
the interactions which could happen at such a location,
are generated by the following grammar:
$$
\eaone,\eatwo\bnf\outa{\conone}\midd\ina{\conone}\midd\maa{\ctone\cttwo}{\conone}
\midd\kla{\conone}\midd\klda{\conone}
$$
where $\conone$ ranges over $\conset$ and $\ctone,\cttwo$ ranges over
$\ctset$. The set of external actions is indicated as $\easet$. We define
$\dual{\easet}$ by the following rules:
$$
  \outa{\conone}\bot\ina{\conone};\qquad
  \maa{\ctone\cttwo}{\conone}\bot\maa{\cttwo\ctone}{\overline{\conone}};\qquad
  \kla{\conone}\bot\klda{\conone}.
$$ 
Actually, any net $\netone$ can be turned into an
  $\fprset^\netone_{\easet}$-LTS
  $\tkm{\netone}=(\fmsts{\tkns^\netone},\red,\isone_\netone)$, 
which is said to be the \emph{token machine for $\netone$}.  This is
done considering internal transition rules as producing the empty
multiset of action, and by giving some \emph{external} transition
rules, namely those in Figure~\ref{fig:externaltrans} and
Figure~\ref{fig:killtrans}.  Multiple external actions involving
distinct tokens can be combined in just one labelled transition: this
is possible because labels of $\tkm{\netone}$ are multisets.
\begin{figure*}
  \begin{center}
  \fbox{
    \begin{minipage}{.97\textwidth}
      \begin{center}
        \includegraphics[scale=0.9]{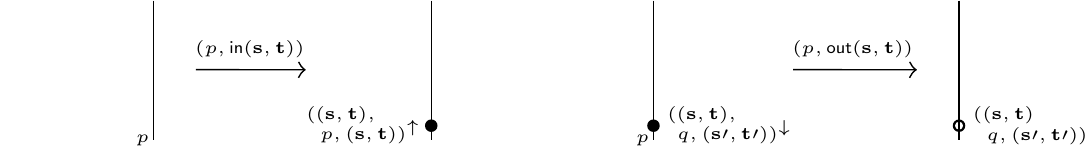}
      \end{center}
      \begin{center}
        \includegraphics[scale=0.9]{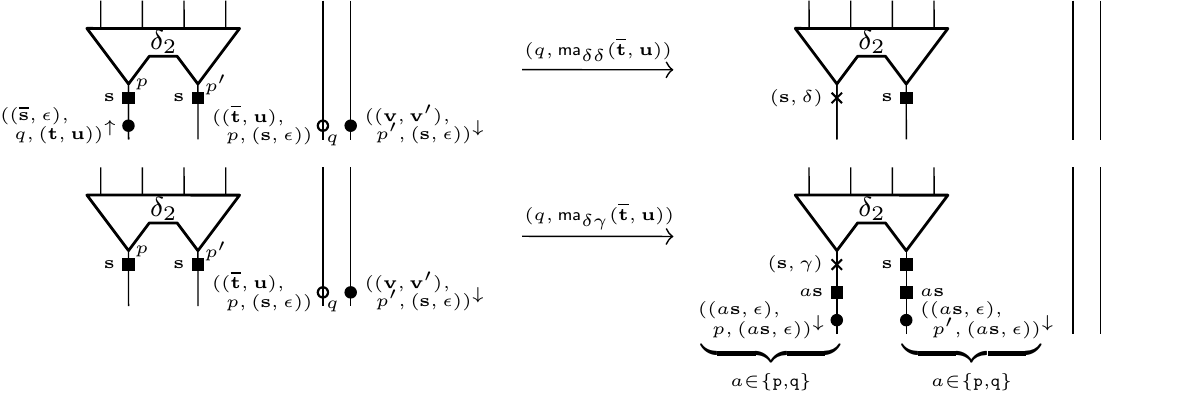}
      \end{center}
      \begin{center}
        \includegraphics[scale=0.9]{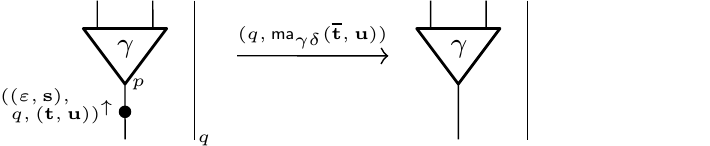}
      \end{center}
      \begin{center}
        \includegraphics[scale=0.9]{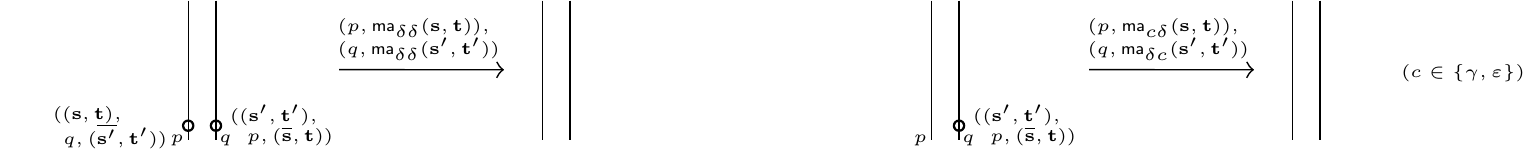}
      \end{center}
      \caption{External Transition Rules for Input, Output and Marriages}\label{fig:externaltrans}
  \end{minipage}}
  \end{center}
\end{figure*}
\begin{figure*}
  \begin{center}
  \fbox{
    \begin{minipage}{.97\textwidth}
      \begin{center}
        \includegraphics[scale=0.9]{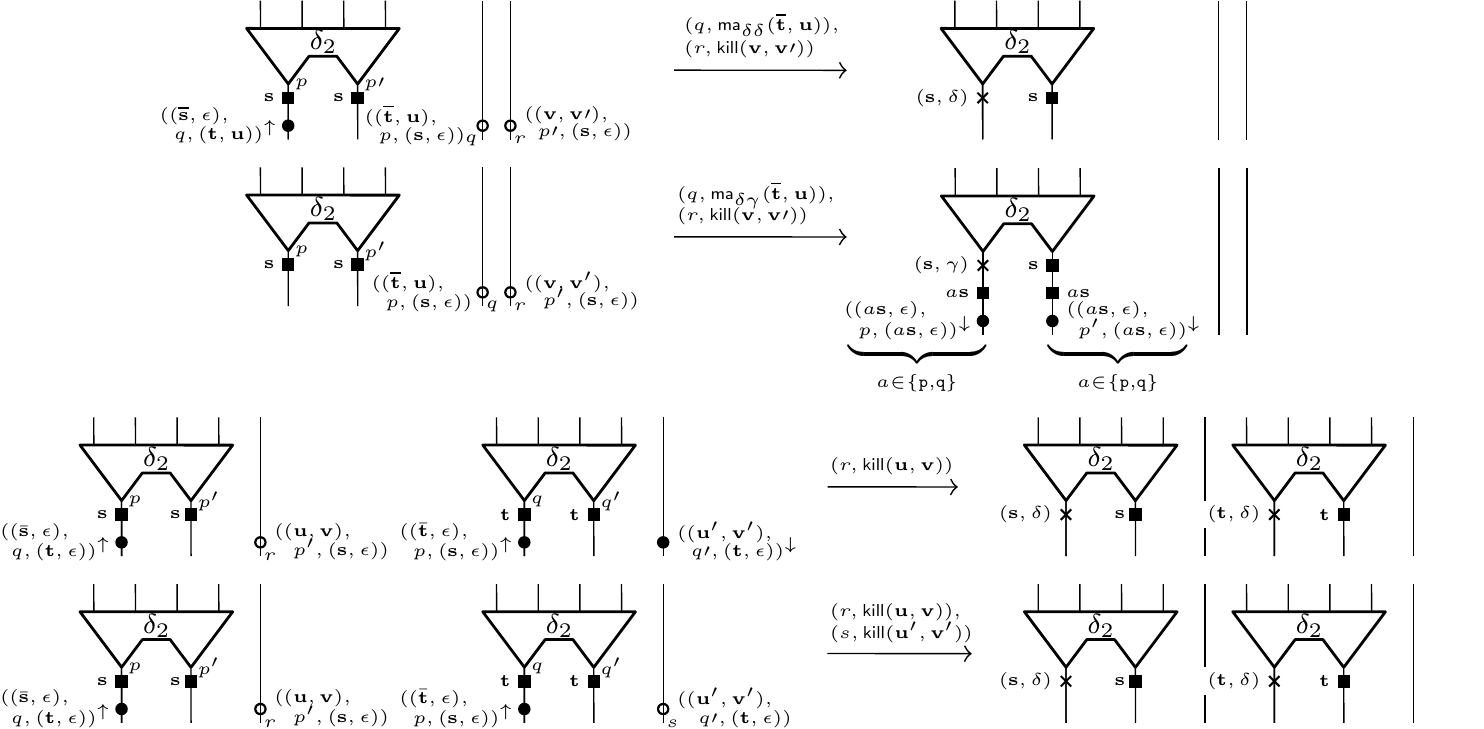}
      \end{center}
      \begin{center}
        \includegraphics[scale=0.9]{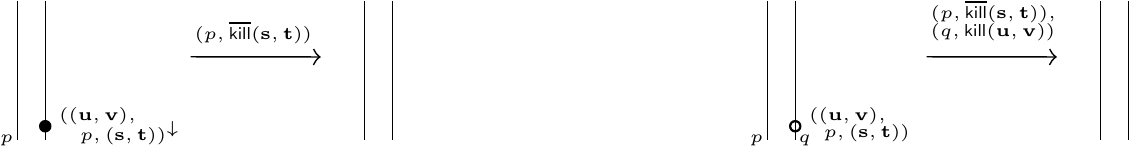}
      \end{center}
      \caption{External Transition Rules for Killing Tokens}\label{fig:killtrans}
  \end{minipage}}
  \end{center}
\end{figure*}
Let us briefly comment on the role of each rule:
\begin{varitemize}
\item
  The first two rules in Figure~\ref{fig:externaltrans}
  allow tokens to flow into the net (thus giving rise to
  an action $\ina{\conone}$) or to flow out of the net
  (thus giving rise to an action $\outa{\conone}$).
  In the latter case, something needs to keep track of the fact
  that a marriage token has indeed left the net (in a certain
  configuration): this is precisely the role of matching tokens,
  that is to say tokens of the fourth (and last) kind.
\item
  The fourth and fifth rules in Figure~\ref{fig:externaltrans}
  model the situation in which  a $\delta_2$ cell performs a
  marriage ``with the environment'', by way of a pair of
  marriage tokens, the first of which is as usual next to
  the cell, but the other has flown out the net, a fact
  witnessed by the presence of a matching token. The fourth
  rule takes care of the case in which the cell in the
  environment is itself a $\delta_2$ cell, while the
  fifth rule accounts for an asymmetric marriage with a
  $\gamma$ cell.
\item
  We also have a rule (the third one in Figure~\ref{fig:externaltrans})
  that models the marriage between a $\gamma$ cell and the
  environment.
\item
  The sixth and seventh rules in Figure~\ref{fig:externaltrans}
  forward marriage actions via the matching tokens, emitting
  dual actions.
\item
  In the third and fourth rules in Figure~\ref{fig:externaltrans}, the
  marriage token which needs to be killed as a result of the marriage
  still lies in the net, but nothing guarantees that it has not flown
  out of the net itself.  It is thus necessary to have some further
  rules, for example the first rule in
    Figure~\ref{fig:killtrans}, in which we not only
  marry a $\delta_2$ cell with the environment but also kill a
  marriage token which has already flown out of the net, expressed by
  a $\kla{\conone}$ action.
\item
  The last two rules in Figure~\ref{fig:killtrans}
  describe how a $\kla{\conone}$ action is forwarded and how it
  actually acts on a marriage token, by emitting the dual
  action $\klda{\conone}$.
\end{varitemize}
Having those external rules, two nets can now interact via external actions.
The precise situation which it yields will be analysed in the next sections.
\subsection{Basic Properties}
\newcommand{\fred}[1]{\stackrel[#1]{}{\rightarrow}}
\newcommand{\lblred}[1]{\stackrel{#1}{\rightarrow}}
\newcommand{\flblred}[2]{\stackrel[#2]{#1}{\rightarrow}}
\newcommand{\op}[1]{{#1}^{\mathit{OP}}}
\newcommand{\opworig}[2]{{#1}^{\mathit{OP}}_{#2}}
\newcommand{\fcone}{\Theta} 
\newcommand{\fctwo}{\Xi}
\newcommand{\fcthree}{\Pi}
\newcommand{\mtm}[1]{\mathsf{M2M}(#1)}
\newcommand{\orig}[1]{\mathsf{Orig}(#1)}

The expression $\lblred{\emptyset}$ where $\emptyset$ is the
empty multiset will be indicated simply as $\red$.
\begin{lemma}\label{lemma:statusinflationary}
  Marriage status tokens are inflationary, i.e., if $\stone=\sttwo\cup\{\tknone\}$,
  $\tknone$ is a status token, and $\stone\lblred{\lblone}\stthree$,
  then $\tknone\in\stthree$.
\end{lemma}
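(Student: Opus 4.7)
The plan is to proceed by exhaustive case analysis on which transition rule justifies $\stone \lblred{\lblone} \stthree$. I would first observe that every transition of $\tkm{\netone}$ is produced by exactly one of the rule schemata in the four figures defining the internal moving, internal marriage, external input/output/marriage, and external killing rules, and that each such rule is a local rewrite on a bounded multiset of tokens, with the remainder of the state left strictly untouched. It therefore suffices to verify, rule by rule, that any status token present on the left-hand side also appears on the right-hand side.

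Next, I would work through the rule groups in turn. For the internal moving rules, the only tokens actually rewritten are marriage tokens: status tokens feature in the preconditions solely as enabling guards determining which of the possible $\delta_2$-crossing behaviours fires, and they are copied verbatim into the consequent. For the internal marriage rules, the key observation is that although several tokens are simultaneously created and destroyed when two cells marry, the status tokens at the principal ports involved are preserved, and in the $\gamma\delta_2$ case are supplemented by fresh married status tokens reflecting the virtual duplication of the $\delta_2$ cell. An analogous analysis handles the external rules: the input, output, and kill-forwarding rules manipulate only marriage and matching tokens, while the external marriage rules treat status tokens exactly as their internal counterparts do.

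From this exhaustive check the conclusion $\tknone \in \stthree$ follows directly, since no rule ever erases a status token from the state. I expect the main obstacle to be nothing more than clerical thoroughness: one has to traverse each rule schema in all four figures and confirm the invariant case by case, paying particular attention to the marriage rules, where status tokens play their most active role in enforcing atomicity and the single-versus-married distinction. No deeper global, combinatorial, or inductive argument is required beyond this bookkeeping, so the proof should be a finite and essentially mechanical verification.
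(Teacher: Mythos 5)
Your proof is correct and follows essentially the same route as the paper's: a rule-by-rule case analysis over all internal and external transition schemata, observing that status tokens may be created but are never erased or modified once present. The paper states this in a single sentence, while you spell out the bookkeeping in more detail; there is no substantive difference in the argument.
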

\begin{proof}
  This is simple case analysis: whenever we perform any internal or
  external action, status tokens can be created, but once they are
  part of the current state, they never change their status (and, in
  particular, they never disappear).
\end{proof}
In the rest of this section, we develop notions and lemmas to show
Theorem~\ref{thm:compositionality}, which will be proved in the next
Section \ref{sect:compositionality}. Suppose we are working with
$\tkm{\netone}$ and suppose that
$\stone\cup\sttwo\lblred{\lblone}\stthree$ where, however, only the
tokens from $\sttwo$ are allowed to evolve, while those from $\stone$
cannot change their status. If this is the case, then $\stthree$ is
necessarily in the form $\stone\cup\stfour$, and we write
$\sttwo\flblred{\lblone}{\stone}\stfour$.
Given a marriage token $\tknone = (\prone, \conone, \prtwo, \contwo)$ and
a pair $(\prthree, \conthree)$ of a port $\prthree$ and a configuration $\conthree$,
a token $\opworig{\tknone}{\prthree, \conthree}$ is defined as
$(\prone, \cnj{\conone}, \prthree, \conthree)$.

Given a token $\tknone$ and multiset of tokens $\stone$, an
\emph{$(\stone,\tknone)$-focused sequence ending in $\tknthree$} is
any sequence $\fcone$ of tokens in the form
$$
\fcone:\tkntwo_1\fred{\stone}\tkntwo_2\fred{\stone}\cdots\fred{\stone}\tkntwo_n,
$$ 
where $\tkntwo_1=\tknone$ and $\tkntwo_n=\tknthree$. The sequence
$\fcone$ are also called an \emph{$(\stone,\tknone)$-focused sequence}
or even an \emph{$\stone$-sequence}. A state $\stone$ is said to be
\emph{reachable from $\sttwo$} iff there is a sequence of labelled
transitions leading $\sttwo$ to $\stone$:
$$
\sttwo\lblred{\lblone_1}\cdots\lblred{\lblone_n}\stone.
$$
A state $\stone$ is said to be \emph{reachable} iff it is reachable
from the initial state of the underlying net. A marriage token is said
to be an \emph{origin token} iff its origin and current positions
coincide. For every marriage token
$\tknone=(\prone,\conone,\prtwo,\contwo)$, \emph{its origin token} is
$\orig{\tknone}=(\prtwo,\contwo,\prtwo,\contwo)$.  For every matching
token $\tknone$, there is a naturally defined marriage token
$\mtm{\tknone}$, namely the one that originated $\tknone$.  Given a
state $\stone$ and a marriage token $\tknone$, $\tknone$ is said to be
\emph{$\stone$-canonical} iff there is an
$(\stone,\orig{\tknone})$-focused sequence ending in $\tknone$. A
matching token $\tknone$ is \emph{$\stone$-canonical} if
$\mtm{\tknone}$ is. If $\stone$ is such that for every marriage or
matching token $\tknone$ in $\stone$, $\tknone$ is
$(\stone-\{\tknone\})$-canonical, then $\stone$ is said to be, simply,
\emph{canonical}. Canonicity is preserved by internal or external
interaction:
\begin{lemma}\label{lemma:preservecanonicity}
  If $\stone$ is canonical and
  $\stone\lblred{\lblone}\sttwo$, then
  $\sttwo$ is canonical, too.
\end{lemma}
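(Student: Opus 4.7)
The proof goes by case analysis on the transition rule used in $\stone \lblred{\lblone} \sttwo$. For each rule, and for each marriage or matching token $\tknfour \in \sttwo$, we must exhibit a focused sequence witnessing $(\sttwo - \{\tknfour\})$-canonicity of $\tknfour$ (equivalently, of $\mtm{\tknfour}$ when $\tknfour$ is a matching token).

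First I would dispose of the tokens \emph{surviving} from $\stone$ to $\sttwo$. By hypothesis each such $\tknfour$ already admits a focused sequence $\orig{\tknfour} \fred{\stone - \{\tknfour\}} \cdots \fred{\stone - \{\tknfour\}} \tknfour$. The crucial observation is that focused sequences consist only of internal moving transitions (Figure~\ref{fig:movtrans}), each of which depends locally on the moving marriage token, the cell it traverses, and at most one status token attached to that cell. By Lemma~\ref{lemma:statusinflationary}, no status token is ever removed, so every status token used by the old sequence survives into $\sttwo - \{\tknfour\}$; inspection of the moving rules then shows that each step is stable under enlargement of the status context. The other marriage and matching tokens that may have been modified, consumed, or created by the current transition play no active role in any focused sequence for $\tknfour$ and are thus harmless.

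Next I would treat the genuinely \emph{new} tokens produced by each rule. For an internal move, the replaced token $\tknone'$ is the one-step successor of an old $\tknone$, so extending $\tknone$'s focused sequence by a single step gives one for $\tknone'$. A $\delta_2\delta_2$ marriage emits only status tokens, so no new marriage or matching token needs attention. A $\gamma\delta_2$ marriage creates fresh marriage and status tokens for the virtual copies of the $\delta_2$ cell; each new marriage token sits at a principal port with trivial configuration $(\emptystk,\emptystk)$ and is therefore an origin token, canonical via a singleton sequence. External input similarly introduces origin tokens at free ports. External output, external free marriages, and killing rules all produce matching tokens $\tknfour$; in every such case $\mtm{\tknfour}$ is precisely the marriage token consumed by the transition, and $\sttwo - \{\tknfour\} = \stone - \{\mtm{\tknfour}\}$ (the old marriage token has been swapped for the new matching one), so the pre-existing focused sequence for $\mtm{\tknfour}$ continues to witness canonicity.

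The main obstacle is the monotonicity claim in the first step: one must verify, by direct inspection of Figures~\ref{fig:movtrans} and~\ref{fig:martrans}, that each move step remains enabled after the enlargements of the context a marriage transition can produce — new married status tokens at $\delta_2$ cells, fresh tokens for virtual copies, and new matching tokens — and that the side conditions on the moving rules are not destroyed by such additions. Once this stability is confirmed, all sub-cases close uniformly and $\sttwo$ is canonical.
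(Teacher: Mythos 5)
Your proof is correct and follows essentially the same route as the paper, which simply states that the lemma follows by ``a simple case analysis on the rule for $\stone\lblred{\lblone}\sttwo$'' and gives no further detail. Your elaboration --- splitting into surviving versus newly created tokens, invoking Lemma~\ref{lemma:statusinflationary} for the stability of the status-token context, and observing that fresh marriage tokens are origin tokens while fresh matching tokens inherit canonicity from the marriage tokens they replace --- is exactly the case analysis the paper has in mind, spelled out.
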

\begin{proof}
This can be proved by a simple case analysis on the rule for
$\stone\lblred{\lblone}\sttwo$.
\end{proof}
The way one builds $\stone$-focused sequences is essentially
a deterministic process:
\begin{lemma}
  Given two $\stone$-focused sequences $\fcone$ and $\fctwo$, either
  $\fcone$ is a prefix of $\fctwo$ or vice versa.
\end{lemma}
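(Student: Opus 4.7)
The plan is to reduce the lemma to determinism of the underlying single-step relation: for any token $\tkntwo$ and frozen context $\stone$, there is at most one token $\tkntwo'$ with $\tkntwo \fred{\stone} \tkntwo'$. Granted this, an $(\stone,\tknone)$-focused sequence is uniquely determined by its first element and its length, so any two such sequences sharing the same starting token must be comparable under the prefix order. (Here I read the statement as implicitly concerning sequences with a common initial token; otherwise the prefix claim is vacuous.)

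First, I would verify that $\fred{\stone}$ is a partial function on tokens by case analysis on the internal moving rules of Figure~\ref{fig:movtrans}. A marriage token $\tkntwo$ sits at a port $\prone$ of $\netone$, and its applicable rule is governed by: (i) the kind of cell incident to $\prone$ and the orientation of $\prone$ (principal versus auxiliary), both fixed data of the net; and (ii) when $\prone$ touches a $\delta_2$ cell, the presence or absence of a married status token in $\stone$ recording the partner's cell type. Since $\stone$ is frozen along $\fred{\stone}$ steps and, by Lemma~\ref{lemma:statusinflationary}, status tokens never disappear, these data do not change, and the several $\delta_2$ moving rules have mutually exclusive guards (single versus married, and in the married case a unique partner type). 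Hence exactly one rule can fire on $\tkntwo$ in the context of $\stone$, producing a unique successor.

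Second, given two $(\stone,\tknone)$-focused sequences
$$\fcone:\tkntwo_1\fred{\stone}\cdots\fred{\stone}\tkntwo_n,\qquad \fctwo:\tkntwo_1'\fred{\stone}\cdots\fred{\stone}\tkntwo_m',$$
with $\tkntwo_1=\tkntwo_1'=\tknone$, a straightforward induction on $\min(n,m)$ using the determinism just established gives $\tkntwo_i=\tkntwo_i'$ for every $i\leq\min(n,m)$. Thus whichever of $\fcone$, $\fctwo$ is shorter is a prefix of the other.

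The main obstacle is the case analysis establishing determinism, and specifically the mutual exclusivity of the guards of the $\delta_2$ moving rules. One must observe that at any given $\delta_2$ cell the status tokens in a canonical (indeed in any reachable) state form a mutually exclusive configuration—a single status token records that the cell is still single, a married status token of type $\ctone\in\ctset$ records exactly one partner type—so the $\delta_2$ movement rules cannot compete. Once this bookkeeping is unpacked, the rest of the argument is immediate.
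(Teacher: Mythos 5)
The paper states this lemma without any proof, offering only the gloss that building $\stone$-focused sequences ``is essentially a deterministic process''; your argument --- establishing that $\fred{\stone}$ is a partial function on tokens by case analysis on the moving rules, then inducting on the length of the shorter sequence --- is precisely the intended one, and your reading that the two sequences share the initial token $\tknone$ is the only one under which the statement is non-vacuous. The one point worth sharpening in your case analysis is the $\delta_2$ guard: after a $\gamma\delta$ marriage a single $\delta_2$ cell carries \emph{several} married status tokens, one per virtual copy, distinguished by their stacks, so mutual exclusivity is not ``exactly one partner type'' but rather uniqueness of the match between the incoming token's configuration and the status tokens' stacks --- this still yields a unique successor, so your conclusion stands.
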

No nondeterminism is involved in building $\stone$-focused sequences:
\begin{lemma}\label{lemma:invertibility}
  $\stone$-focused sequences are invertible, i.e.\ for every such computation
  $$
  \fcone:\tkntwo_1\fred{\stone}\tkntwo_2\fred{\stone}\cdots\fred{\stone}\tkntwo_n
  $$
  and for every pair $(\prthree, \conthree)$ of a port $\prthree$ and a configuration $\conthree$,
  there exists another $\stone$-focused sequence 
  $\opworig{\fcone}{\prthree, \conthree}$ such that
  $$
  \opworig{\fcone}{\prthree, \conthree}:
  \opworig{\tkntwo_n}{\prthree, \conthree} \fred{\stone}
  \opworig{\tkntwo_{n-1}}{\prthree, \conthree} \fred{\stone}
  \cdots \fred{\stone} \opworig{\tkntwo_1}{\prthree, \conthree}.
  $$
\end{lemma}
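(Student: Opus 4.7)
The plan is to induct on the length $n$ of $\fcone$, reducing the claim to a single-step invertibility statement: whenever $\tkntwo_i \fred{\stone} \tkntwo_{i+1}$ is a valid internal moving transition of a marriage token, then $\opworig{\tkntwo_{i+1}}{\prthree,\conthree} \fred{\stone} \opworig{\tkntwo_i}{\prthree,\conthree}$ is also valid. The base case $n=1$ is trivial since $\opworig{\tkntwo_1}{\prthree,\conthree}$ is itself a singleton sequence. For the inductive step, I apply the induction hypothesis to the prefix $\tkntwo_1 \fred{\stone} \cdots \fred{\stone} \tkntwo_{n-1}$, obtaining a reversed sequence ending at $\opworig{\tkntwo_1}{\prthree,\conthree}$; the final-step invertibility then lets me prepend $\opworig{\tkntwo_n}{\prthree,\conthree} \fred{\stone} \opworig{\tkntwo_{n-1}}{\prthree,\conthree}$ to it.

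The per-step invertibility is established by case analysis on the internal moving rules in Figure~\ref{fig:movtrans}. For a $\gamma$-cell traversal, the token either pushes or pops a symbol from $\{\lft,\rgt\}$ on the first component of its configuration while moving between a principal and an auxiliary port; because $\cnj{\skone\lft} = \cnj{\skone}\rgt$ and $\cnj{\skone\rgt} = \cnj{\skone}\lft$, the conjugate configuration is compatible with the mirror rule that traverses the same cell in the opposite direction, giving precisely the step we need. For $\delta_2$-cell traversals the argument is analogous for the stack manipulations (here numbers are preserved by $\cnj{\cdot}$), with the additional subtlety that the enabling conditions depend on the status tokens sitting at the principal port. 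Since $\stone$ is, by the definition of $\fred{\stone}$, held pointwise fixed along the whole sequence (and, by Lemma~\ref{lemma:statusinflationary}, status tokens are persistent anyway), these enabling tokens remain in place for the reverse step. Finally, the origin component of the token is pure bookkeeping: it is untouched by any moving rule, so replacing $(\prtwo,\contwo)$ by a fresh $(\prthree,\conthree)$ neither enables nor disables any transition.

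I expect the main obstacle to be a careful enumeration of the $\delta_2$ rules: one must pair each forward rule (distinguished by whether the cell at the other end of the principal wire is currently single, married-to-$\gamma$, or married-to-$\delta_2$, and by any numeric tag recording virtual copies) with a reverse rule whose trigger pattern is exactly the conjugate configuration and whose required $\stone$-context is the one actually present after the forward step. Once this inventory is set down, the invertibility of each atomic step is a direct check, and the inductive assembly of $\opworig{\fcone}{\prthree,\conthree}$ proceeds without further difficulty.
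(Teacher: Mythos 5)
The paper states Lemma~\ref{lemma:invertibility} without any proof at all, so there is no official argument to compare yours against; what you propose --- induction on the length of $\fcone$, reduced to single-step invertibility established by case analysis on the moving rules of Figure~\ref{fig:movtrans} --- is the natural (and essentially the only) way to discharge it, and it is consistent with how the lemma is actually used in the proof of Theorem~\ref{thm:compositionality}, where $\opworig{\tknone}{\prtwo,\conone}=\orig{\tknthree}$ is paired with the inverse of the path of $\tknthree$. Your three supporting observations are exactly the right ones: conjugation swaps $\lft$ and $\rgt$ and fixes numbers, which is what makes the conjugated configuration trigger the mirror traversal of a $\gamma$ or $\delta_2$ cell; the context $\stone$ is, by the very definition of $\fred{\stone}$, frozen along the whole sequence, so any status token enabling a forward $\delta_2$ step is still present to enable its mirror; and the origin components are inert bookkeeping never inspected by a moving rule, so resetting them to $(\prthree,\conthree)$ is harmless. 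The only part left implicit is the rule-by-rule pairing itself, which you defer to an ``inventory'' of Figure~\ref{fig:movtrans}; since the paper omits even that, this is acceptable, but a complete write-up should exhibit, for each forward rule, the specific mirror rule and verify that its trigger pattern is precisely the conjugated configuration produced by the forward step.
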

\subsection{Compositionality}\label{sect:compositionality}
\newcommand{\funone}{f}
\newcommand{\funtwo}{g}
Consider any marriage or matching token $\tknone$ in a state
$\stone\cup\{\tknone\}$ of $\tkm{\pc{\netone}{\pinjone}{\nettwo}}$.
If $\tknone$ is $\stone$-canonical, then we can trace back $\tknone$
(or $\mtm{\tknone}$) to an origin token $\tkntwo$ by way of an
$\stone$-focused sequence.  Lemma \ref{lemma:invertibility} tells us
that from $\tkntwo$ we can go back to $\tknone$ (or $\mtm{\tknone}$),
again by way of an $\stone$-focused sequence, that we call $\fcone$.
Intuitively, $\fcone$ is the computation that brought $\tknone$ where
it is now. We can see it as a path in
$\pc{\netone}{\pinjone}{\nettwo}$, and depict it graphically as
follows:
\begin{center}
  \includegraphics[scale=2.4]{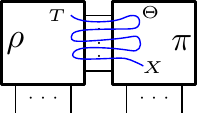}
\end{center}
In particular, $\fcone$ can cross the border between $\netone$ and
$\nettwo$ several times. If we want to somehow simulate $\fcone$ in
$\pc{\tkm{\netone}}{\pinjone}{\tkm{\nettwo}}$, we need to keep
track of each crossing by a matching token lying at the border, on
the side of the sub-net we are leaving. Let $\sttwo$ the set of all
these matching tokens which are tokens of either $\tkm{\netone}$ or of
$\tkm{\nettwo}$.  What we have just implicitly defined is a pair of
states $\funtwo_\netone(\stone,\tknone)$ and
$\funtwo_\nettwo(\stone,\tknone)$: they are defined as the portions of
$\stone\cup\sttwo$ which are tokens of $\tkm{\netone}$ and $\tkm{\nettwo}$,
respectively. If $\tknone$ is not $\stone$-canonical,
then
\begin{theorem}[Compositionality]\label{thm:compositionality}
$\tkm{\pc{\netone}{\pinjone}{\nettwo}}\approx\pc{(\tkm{\netone})}{\pinjone}{(\tkm{\nettwo})}$. 
\end{theorem}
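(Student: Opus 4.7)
The plan is to define a weak bisimulation relation $\relone$ between the state space of $\tkm{\pc{\netone}{\pinjone}{\nettwo}}$ and that of $\pc{\tkm{\netone}}{\pinjone}{\tkm{\nettwo}}$, and to verify the two simulation clauses by case analysis on the transition rules given in Section~\ref{sect:MTM}. The relation is essentially the one implicit in the paragraph preceding the statement: given a reachable (hence canonical, by Lemma~\ref{lemma:preservecanonicity}) state $\stone$ of the combined machine, each marriage or matching token $\tknone\in\stone$ has, by Lemmas on $\stone$-focused sequences and by Lemma~\ref{lemma:invertibility}, a uniquely determined history path from its origin token back to its current position; I split this path at every crossing of a shared location in $\dom{\pinjone}\cup\rng{\pinjone}$, recording each crossing by a $\pinjone$-dual pair of matching tokens, one on each side of the boundary. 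Status tokens, which are inflationary by Lemma~\ref{lemma:statusinflationary}, are assigned to whichever component contains their current port. The resulting pair $(\funtwo_\netone(\stone),\funtwo_\nettwo(\stone))$ is a state of the product machine, and I set $\stone\;\relone\;(\funtwo_\netone(\stone),\funtwo_\nettwo(\stone))$. Initial states are related because they contain no matching tokens and the origin marriage tokens sit entirely inside one side.

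The verification splits on the rule fired. Internal moves and internal marriages that do not involve the boundary are simulated one-for-one by a $\trthree_\ltsone$ or $\trthree_\ltstwo$ step with empty label on the matching side, leaving the decomposition invariant. External rules at a non-shared free port are reflected by a $\trthree_\ltsone$ or $\trthree_\ltstwo$ transition with the same label $\fmaone$. The subtle case is when the rule acts across the boundary: a marriage token traversing a shared wire, an internal marriage between cells on opposite sides of the boundary, or a cross-boundary kill/cokill interaction via a matching token. Such a step is matched by a $\trthree_{\ltsone\ltstwo}$ transition whose synchronising label $\fmafour$ is a $\pinjone$-dual pair---$\outa{\conone}/\ina{\conone}$, $\maa{\ctone\cttwo}{\conone}/\maa{\cttwo\ctone}{\overline{\conone}}$, or $\kla{\conone}/\klda{\conone}$---at the shared location; by the definition of $\dual{\easet}$ this contributes the empty label to the combined transition, matching the internal step. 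The reverse direction is symmetric: any $\trthree_\ltsone$ or $\trthree_\ltstwo$ move is replayed by the same rule in the combined machine, and any $\trthree_{\ltsone\ltstwo}$ synchronisation at the boundary is replayed by the corresponding internal rule (move, marriage, or kill-propagation) crossing the now-internalised wire. In each case I verify that the resulting states are again related, invoking Lemma~\ref{lemma:preservecanonicity} to keep canonicity and the uniqueness of the history sequence to keep the decomposition well defined.

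The main obstacle I expect is the boundary bookkeeping. A single token's history may cross the boundary arbitrarily many times, and the decomposition must install a consistent family of $\pinjone$-dual matching tokens such that on \emph{both} sides every marriage or matching token---including the freshly installed bookkeeping ones---admits a focused sequence back to its origin \emph{within that side}. Establishing that this decomposition is well defined, unique modulo permutation, and preserved by every rule, especially when a single internal rule simultaneously creates, destroys, and relocates several tokens (as in $\delta_2/\delta_2$ and $\delta_2/\gamma$ marriages), is where most of the technical work lives; an auxiliary lemma showing that $\funtwo_\netone$ and $\funtwo_\nettwo$ are themselves functorial along one-step transitions will probably be needed. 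The weak aspect of the bisimulation is essential precisely here: it lets us collapse the input/output pair arising from a boundary crossing into a single silent synchronisation step on the product side, so that the labels on all non-boundary actions match exactly.
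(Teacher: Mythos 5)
Your proposal follows essentially the same route as the paper: the same decomposition of a canonical state by tracing each marriage or matching token back to its origin via Lemma~\ref{lemma:invertibility} and recording boundary crossings with matching tokens, the same case analysis on the transition rules, and the same identification of cross-boundary marriages (with their cascades of $\maa{\ctone\cttwo}{\conone}$ and $\kla{\conone}/\klda{\conone}$ actions combined into a single $\pinjone$-dual synchronisation) as the locus of the technical work. One small correction to your bookkeeping: each crossing is recorded by a \emph{single} matching token on the side the token leaves (the output rule deposits one; the matching input on the other side does not), rather than a dual pair on both sides---installing tokens on both sides would yield states the product machine cannot actually reach.
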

\begin{proof}
  We need to define a binary relation $\relone$ between the states
  of the two involved LTSs, and then prove it to be a bisimulation.
  The simplest way to describe $\relone$ is as the set of all the
  pairs in the form $(\stone,\funone^*(\stone))$, where
  $\stone$ is a canonical state of $\pc{\netone}{\pinjone}{\nettwo}$,
  and $\funone^*$ is defined as
  $$
  \funone^*(\stone)=(\cup_{\tknone\in\stone}\funone_\netone(\stone,\tknone),\cup_{\tknone\in\stone}\funone_\nettwo(\stone,\tknone))
  $$
  and the partial function $\funone_\netone$ (respectively,
  $\funone_\nettwo$) maps a pair in the form $(\stone,\tknone)$ into
  a state of $\tkm{\netone}$ (respectively, of $\tkm{\nettwo}$).  The
  way $\funone_\netone$ and $\funone_\nettwo$ are defined depends on
  the nature of $\tknone$. In particular,
  \begin{varitemize}
  \item
    If $\tknone$ is either a single status token or a marriage status token
    lying next to its natural position, say a cell in $\netone$, then
    $\funone_\netone(\stone,\tknone)$ will be 
    just $\{\tkntwo\}$, where $\tkntwo$ is the counterpart of $\tknone$
    in $\netone$, while $\funone_\nettwo(\stone,\tknone)$ will be just
    $\emptyset$. Similarly when the position of $\tknone$ is a cell in
    $\nettwo$. If $\tknone$ is not at a natural position, then
    both $\funone_\netone$ an $\funone_\nettwo$ are undefined at $(\stone,\tknone)$.
  \item
    If $\tknone$ is a marriage or matching token, then
    $\funone_\netone(\stone,\tknone)=\funtwo_\netone(\stone,\tknone)$
    and $\funone_\nettwo(\stone,\tknone)=\funtwo_\nettwo(\stone,\tknone)$
    where $\funtwo_\netone$ and $\funtwo_\nettwo$ are the maps defined above.
  \end{varitemize}
  We now need to prove that $\relone$ is a bisimulation relation. To
  do that, we need to prove three statements:
  \begin{varnumlist}
  \item
    We first of all need to show that
    $(\isone_{\pc{\netone}{\pinjone}{\nettwo}},(\isone_\netone,\isone_\nettwo))\in\relone$.
    To prove that, first of all observe that
    $\isone_{\pc{\netone}{\pinjone}{\nettwo}}$ does not contain any
    matching token, that all the marriage tokens in it are trivially
    canonical, and are thus mapped simply to ``themselves'' by $\funone^*$.
  \item
    We then need to show that if $(\stone,\sttwo)\in\relone$
    and $\tr{\trone}{\stone}{\lblone}{\stthree}$, then
    $\tr{\trtwo}{\sttwo}{\lblone}{\stfour}$ where
    $(\stthree,\stfour)\in\relone$. We need to analyse
    several cases:
    \begin{varitemize}
    \item
      If $\stthree$ is obtained from $\stone$ by performing
      a crossing internal transition, then the crossing can
      be easily simulated in $\sttwo$, the only proviso being
      that if the involved marriage token in $\stone$ lies
      at the border between $\netone$ and $\nettwo$, then
      the simulating transitions could be \emph{two} instead
      of \emph{one}, namely a crossing and matching
      input-output pair.
    \item
      If $\stthree$ is obtained from $\stone$ by performing
      an internal $\delta\delta$ marriage, then the four
      involved marriage tokens are as follows
      \begin{center}
        \includegraphics[scale=1.2]{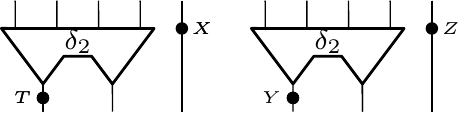}
      \end{center}
      where $\tkntwo$ and $\tknthree$ originated in the
      leftmost cell, while $\tknone$ and $\tknfour$
      originated in the rightmost cell. The four
      tokens are canonical, and are thus mapped by $\funone^*$
      to four marriage tokens, possibly with some matching
      tokens lying along the paths leading each of them
      to its origin. Now, suppose, just as an example, that the
      leftmost cell and link above are in $\netone$, while rightmost
      cell and link are in $\nettwo$ (the other cases
      can be handled similarly). Summing up, then, we
      are in the following situation:
      \begin{center}
        \includegraphics[scale=1.2]{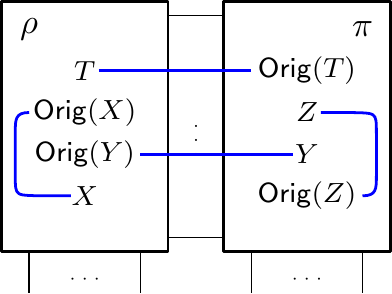}
      \end{center}
      where the four depicted paths can of course, in principle,
      go back and forth between $\netone$ and $\nettwo$, even
      several times. Now, observe that:
      \begin{varitemize}
      \item
        Along the four paths, appropriate matching tokens in
        $\funone^*(\stone)$ mark the crossing of the border between
        $\netone$ and $\nettwo$.
      \item
        Let $\tknone = (\prtwo, \cnj{\conone}, \prone, \contwo)$ and
        $\tknthree = (\prone, \cnj{\contwo}, \prtwo, \conone)$
        (by definition of the initial state and transition rules
        they are necessarily in this form ).
        $\opworig{\tknone}{\prtwo,\conone} = \orig{\tknthree}$
        and $\opworig{\tknthree}{\prone,\contwo} = \orig{\tknone}$,
        and 
        by Lemma~\ref{lemma:invertibility}, there exist paths
        from $\orig{\tknone}$ to $\tknone$ and to $\orig{\tknthree}$
        to $\tknthree$ that are the inverses of each other. Then, the
        two paths and the matching tokens along them are of the form:
        \begin{center}
          \includegraphics[scale=1.2]{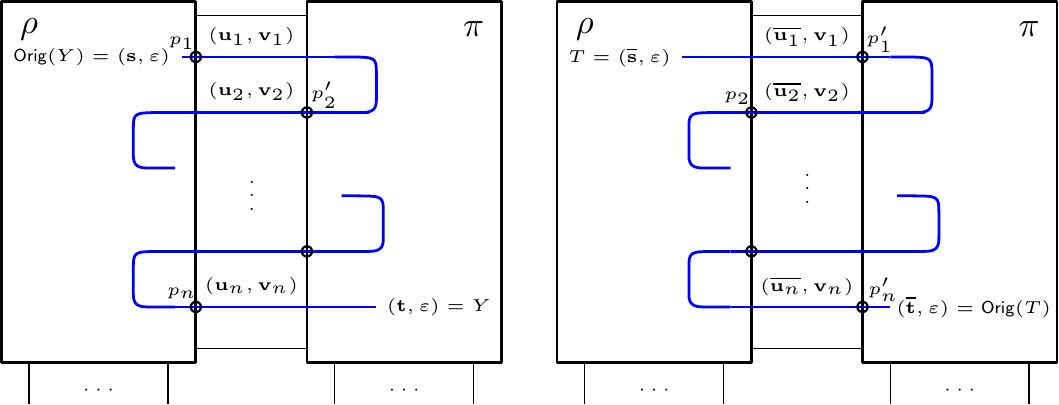}
        \end{center}
        where $\pinjone(\prone_i)=\prone'_i$ and $n=2n'+1$ for some $n'\geq0$.
      \item
        The paths from $\orig{\tkntwo}$ to $\tkntwo$ and from $\orig{\tknfour}$
        to $\tknfour$ are depicted together with the matching tokens
        along them as follows:
        \begin{center}
          \includegraphics[scale=1.2]{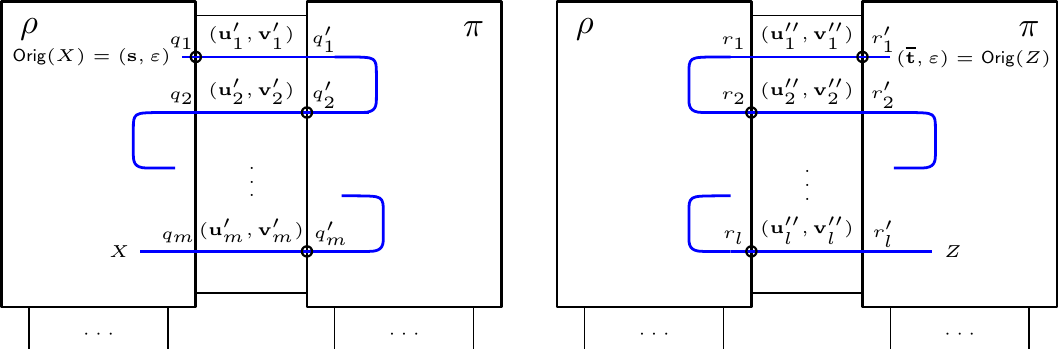}
        \end{center}
        where $\pinjone(\prtwo_i)=\prtwo'_i$, $\pinjone(\prthree_i)=\prthree'_i$,
        $m=2m'$ and $l=2l'$ for $m',l'\geq0$.
      \end{varitemize}
      As a consequence, one can prove that $\sttwo$ silently
      evolves to a state $\stfour$. More specifically, if
      $\sttwo=(\sttwo_\netone,\sttwo_\nettwo)$, then:
      \begin{varitemize}
      \item
        $\sttwo_\netone$ performs an external $\delta\delta$ marriage and
        several marriages between free ports, triggering a bag of actions
        $\fmaone_\netone$ consisting of:
        \begin{varitemize}
        \item
          $(\prone_1,\maa{\delta\delta}{\skthree_1,\skfour_1})$
        \item
          $(\prone_{2i},\maa{\delta\delta}{\cnj{\skthree_{2i}},\skfour_{2i}})$ and
          $(\prone_{2i+1},\maa{\delta\delta}{\skthree_{2i+1},\skfour_{2i+1}})$
          for each $i=1,\ldots,n'$
        \item
          $(\prtwo_{2i-1},\kla{\skthree'_{2i-1},\skfour'_{2i-1}})$ and
          $(\prtwo_{2i},\klda{\skthree'_{2i},\skfour'_{2i}})$
          for each $i=1,\ldots,m'$
        \item
          $(\prthree_{2i-1},\klda{\skthree'_{2i-1},\skfour'_{2i-1}})$ and
          $(\prthree_{2i},\kla{\skthree'_{2i},\skfour'_{2i}})$
          for each $i=1,\ldots,l'$
        \end{varitemize}
      \item
        $\sttwo_\nettwo$ performs an external $\delta\delta$ marriage and
        marriages between free ports which invoke $\fmaone_\nettwo$ consisting of:
        \begin{varitemize}
        \item
          $(\prone'_{2i-1},\maa{\delta\delta}{\cnj{\skthree_{2i-1}},\skfour_{2i-1}})$ and
          $(\prone'_{2i},\maa{\delta\delta}{\skthree_{2i},\skfour_{2i}})$
          for each $i=1,\ldots,n'$
        \item
          $(\prone'_n,\maa{\delta\delta}{\skthree_n,\skfour_n})$
        \item
          $(\prtwo'_{2i-1},\klda{\skthree'_{2i-1},\skfour'_{2i-1}})$ and
          $(\prtwo'_{2i},\kla{\skthree'_{2i},\skfour'_{2i}})$
          for each $i=1,\ldots,m'$
        \item
          $(\prthree'_{2i-1},\kla{\skthree'_{2i-1},\skfour'_{2i-1}})$ and
          $(\prthree'_{2i},\klda{\skthree'_{2i},\skfour'_{2i}})$
          for each $i=1,\ldots,l'$
        \end{varitemize}
      \item
        We can see they are balanced, i.e.,
        $\fmaone_\netone\cup\fmaone_\nettwo\in\sortho{\pinjone}$.
        Thus $\sttwo$ can perform these marriages in an internal
        transition. The resulting state $\stfour$ coincides with
        $f^{\ast}(\stthree)$.
      \end{varitemize}
    \item
      If $\stthree$ is obtained from $\stone$ by performing
      an internal $\gamma\delta$ marriage, then we have two marriage tokens
      involved:
      \begin{center}
        \includegraphics[scale=1.2]{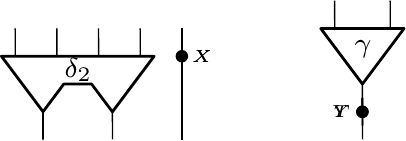}
      \end{center}
      Both of the tokens $\tkntwo$ and $\tknthree$
      originated in the $\delta_2$ cell.
      Similarly to the case of $\delta\delta$ marriage, the marriage can be
      internally simulated in $\pc{(\tkm{\netone})}{\pinjone}{(\tkm{\nettwo})}$
      via finitely many communication
      between $\tkm{\netone}$ and $\tkm{\nettwo}$:
      \begin{varitemize}
      \item The paths of $\tknone$ and $\tkntwo$ crosses the border
        finitely many times, for example as in the following figure,
        where $\pinjone(\prone_i)=\prone'_i$, $n=2n'+1$ for some $n'\geq 0$,
        and $m=2m'$ for some $m'\geq 0$.
.
        \begin{center}
          \includegraphics[scale=1.2]{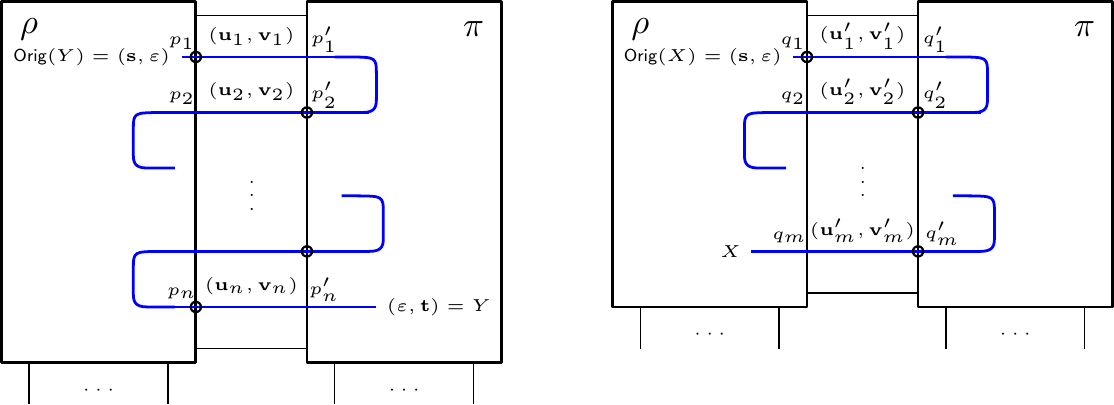}
        \end{center}
      \item A bag of actions $\sttwo_\netone$ in $\fmaone_\netone$ consisting of
        \begin{varitemize}
        \item
          $(\prone_1,\maa{\delta\gamma}{\skthree_1,\skfour_1})$
        \item
          $(\prone_{2i},\maa{\gamma\delta}{\cnj{\skthree_{2i}},\skfour_{2i}})$ and
          $(\prone_{2i+1},\maa{\delta\gamma}{\skthree_{2i+1},\skfour_{2i+1}})$
          for each $i=1,\ldots,n'$
        \item
          $(\prtwo_{2i-1},\kla{\skthree'_{2i-1},\skfour'_{2i-1}})$ and
          $(\prtwo_{2i},\klda{\skthree'_{2i},\skfour'_{2i}})$
          for each $i=1,\ldots,m'$
        \end{varitemize}
        and another  bag of actions $\sttwo_\nettwo$ in $\fmaone_\nettwo$ consisting of
        \begin{varitemize}
        \item
          $(\prone_{2i},\maa{\delta\gamma}{\skthree_{2i},\skfour_{2i}})$ and
          $(\prone_{2i-1},\maa{\gamma\delta}{\cnj{\skthree_{2i-1}},\skfour_{2i-1}})$
          for each $i=1,\ldots,n'$
        \item
          $(\prone_n,\maa{\gamma\delta}{\cnj{\skthree_1},\skfour_1})$
        \item
          $(\prtwo_{2i-1},\klda{\skthree'_{2i-1},\skfour'_{2i-1}})$ and
          $(\prtwo_{2i},\kla{\skthree'_{2i},\skfour'_{2i}})$
          for each $i=1,\ldots,m'$
        \end{varitemize}
        allow us to perform an internal marriage that yields
        a state $\stfour = \funone^*(\stthree)$.
      \end{varitemize}
    \item
      If $\stthree$ is obtained from $\stone$ by performing
      any other internal transition, it must be an internal transition
      on a marriage token that does not cross the border.
      We can easily obtain $\stfour = \funone^*(\stthree)$ by performing
      an internal transition on the corresponding marriage token
      in the same side (either in $\tkm{\netone}$ or $\tkm{\nettwo}$)
      of the token machine.
    \item
      If $\stthree$ is obtained from $\stone$ by performing
      input action $\ina{{\conone}}$, it is again easily simulated in
      $\pc{(\tkm{\netone})}{\pinjone}{(\tkm{\nettwo})}$
      by performing the same input action on the same free port.
    \item
      If $\stthree$ is obtained from $\stone$ by performing output action
      $\outa{{\conone}}$ on a token with stacks $\conone$,
      there must be a token with the same stack $\conone$ on the same free port in $\pc{(\tkm{\netone})}{\pinjone}{(\tkm{\nettwo})}$.
      We can simulate the action by performing output action on that token.
    \item The other cases (external marriages and killing/killed) can be
      simulated similarly to the internal marriage cases: a corresponding
      action and several times of communication between
      $\tkm{\netone}$ and $\tkm{\nettwo}$
      yield the state $\stfour = \funone^*(\stthree)$.
    \item The cases above extend to any external transition with multiple actions.
      This is because of the following reason.
      For each action, the involved tokens have their path
      from its origin as discussed above.
      Those paths (and matching tokens along them) are all distinct
      and thus their corresponding transitions in
      $\pc{(\tkm{\netone})}{\pinjone}{(\tkm{\nettwo})}$
      do not interfere with each other.
    \end{varitemize} 
  \item
    We then need to show that if $(\stone,\sttwo)\in\relone$ and
    $\tr{\trone}{\sttwo}{\lblone}{\stthree}$, then
    $\tr{\trtwo}{\stone}{\lblone}{\stfour}$ where
    $(\stfour,\stthree)\in\relone$.  By definition, the transition
    relation $\trone$ on
    $\pc{(\tkm{\netone})}{\pinjone}{(\tkm{\nettwo})}$ is the union of
    three relations $\trone_\netone$, $\trone_\nettwo$ and
    $\trone_{\netone\nettwo}$. Now, if
    $\tr{\trone_\netone}{\sttwo}{\lblone}{\stthree}$ or
    $\tr{\trone_\nettwo}{\sttwo}{\lblone}{\stthree}$, then finding
    $\stfour$ is trivial. If, on the other hand,
    $\tr{\trone_{\netone\nettwo}}{\sttwo}{\lblone}{\stthree}$, then
    by definition,
    \begin{align*}
      \sttwo&=(\sttwo_\netone,\sttwo_\nettwo);\\
      \stthree&=(\stthree_\netone,\stthree_\nettwo);\\ 
      \lblone\cup\lblthree&=\lbltwo_\netone\cup\lbltwo_\nettwo;\\
      &\tr{\trone}{\sttwo_\netone}{\lbltwo_\netone}{\stthree_\netone};\\
      &\tr{\trone}{\sttwo_\nettwo}{\lbltwo_\nettwo}{\stthree_\nettwo}.
    \end{align*}
    Again, if $\lblthree=\emptyset$, then finding $\stfour$ with the 
    required property is trivial because 
    $\tr{\trone}{\stone}{\lbltwo_\netone}{\stfour}$ and
    $\tr{\trone}{\stone}{\lbltwo_\nettwo}{\stfour}$, then
    $\tr{\trone}{\stone}{\lblone}{\stfour}$, since  
    $\lblone=\lbltwo_\netone\cup\lbltwo_\nettwo$. If $\lblthree$
    is \emph{not} empty, then we need to analyse several cases, keeping
    in mind that $\lblthree$ is by definition of parallel composition
    a $\pinjone$-dual multiset:
    \begin{varitemize}
    \item
      Suppose that $(\ell,\outa{\conone})\in\lblthree$. Then,
      $(\pinjone(\ell),\ina{\conone})$ is also an element
      of $\lblthree$. This can be simulated in $\stone$ by just
      an internal transition. Similarly if
      $(\ell,\ina{\conone})\in\lblthree$.
    \item
      Suppose that $(\ell,\maa{\gamma\delta}{\conone})\in\lblthree$
      as an effect of an external marriage between a $\gamma$ cell
      and a free port. Then there must also be another action
      $(\sigma(\ell),\maa{\delta\gamma}{\cnj{\conone}})$ in $\lblthree$.
      We can think of this action as being produced by a (possibly
      empty) chain of marriages between two free ports alternating
      between $\netone$ and $\nettwo$. The fact that the chain
      alternates is again due to $\pinjone$-duality of $\lblthree$.
      Let us distinguish a few further cases depending on the
      rule in which this chain ends:
      \begin{varitemize}
      \item
        If the chain ends in an \emph{external} marriage between a
        $\delta$ cell and a free port, then the whole chain, including
        its extremes, can be simulated by just an \emph{internal}
        $\gamma\delta$ marriage. If the latter comes in its version
        with a killing action in the form $(\ell',\kla{\conone})$,
        then we need to do the same kind of chain-based reasoning.
      \item
        If the chain ends in a marriage between two free
        ports one of the two not being in the domain of $\pinjone$,
        then the whole chain, including its extremes, can be simulated
        by an \emph{external} marriage between a $\gamma$ cell and
        a free port.
      \end{varitemize}
    \item
      Suppose that
      $(\ell,\maa{\gamma\delta}{\conone})\in\lblthree$
      as the effect of an external marriage between an $\delta$ and
      a free port. Then, again, we can proceed as in the last but previous
      case, with an additional complication due to the presence of
      killing actions at \emph{both} extremes of the chain.
    \item
      Suppose that
      $(\ell,\maa{\gamma\delta}{\conone})\in\lblthree$ as the
      effect of an external marriage between two free ports.
      Then we can proceed with the usual chain-based reasoning,
      but on \emph{both} sides of the action we have just identified.
    \end{varitemize}
    This way, we can ``strip-off'' all the actions from $\lblthree$
    and $\lblone$, and each of the chains of actions identified above
    corresponds to an action in $\tkm{\pc{\netone}{\nettwo}{\pinjone}}$.
  \end{varnumlist}
\end{proof}
While the Compositionality Theorem holds in single-token
machines as well~\cite{DanosRegnier}, it has not been considered in existing
work on multi-token machines~\cite{lics2014,lics2015}, partially
because the kind of nets to which the theory can be applied, the
so-called \emph{closed nets}, is quite restricted.
\section{Soundness}\label{sect:soundness}
As already mentioned several times, the transition rules for our token
machines are tailored for capturing the behaviours of interaction
rules of MICs.  This intention is unsurprisingly formalised as a
bisimulation between the machine for a net containing a redex (say
$\netone$) and the machine for the net containing its reduct (say
$\nettwo$). This, of course, must be done for each
interaction rule.  However, there is a subtlety. While the result for
$\gamma\gamma$ rule is a mere bisimulation of the two machines
$\tkm{\netone}$ and $\tkm{\nettwo}$, those for the rules involving
$\delta_2$ cells (i.e., $\gamma\delta$ and $\delta\delta$ rules) are
not; instead, the results are described as bisimulations between the
machines $\tkm{\nettwo}$ and
$\tkm{\netone}$ in which the initial state is replaced by a state
\emph{after a specific marriage transition}. At the level of the
entire machines $\tkm{\netone}$ and $\tkm{\nettwo}$, the relations are
thus simulations rather than bisimulations. This is because the token
machine for a net is designed to simulate \emph{any} possible sequence
of reductions from it, while the reduction under consideration is
\emph{a particular one} of those all possibilities, since reduction is
nondeterministic.  Therefore, a reduction on a $\delta_2$ cell always
results in discarding some otherwise possible behaviours.  This is
faithfully handled by a marriage transition in the token machine
$\tkm{\netone}$ corresponding to that reduction.  In the following
three lemmas, we study how $\tkm{\netone}$ and $\tkm{\nettwo}$ relate
whenever $\netone\red\nettwo$. This is done by giving certain
relations, that we indicate with $\relone^\netone_\nettwo$, between
the states of $\tkm{\netone}$ and those of $\tkm{\nettwo}$.  Those
relations will be useful also in Section \ref{sect:adequacy} below.

Let us first consider $\gamma\gamma$ reduction. In this case, the
machines for the redex and the reduct behave precisely the same:
\begin{lemma}\label{lemma:gammagammasoundness}
  If $\netone\pred{\gamma\gamma}\nettwo$, then
  $\tkm{\netone}\approx\tkm{\nettwo}$.
\end{lemma}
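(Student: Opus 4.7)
The plan is to derive the lemma by a purely local argument on the $\gamma\gamma$ redex, exploiting the Compositionality Theorem. Decompose $\netone = \pc{\netone'}{\pinjone}{R}$, where $R$ is the subnet consisting of just the two $\gamma$ cells of the redex, whose four auxiliary ports become the free ports of $R$ and are identified via $\pinjone$ with the corresponding free ports of the surrounding net $\netone'$. Dually, write $\nettwo = \pc{\netone'}{\pinjone}{R'}$, where $R'$ is the pair of wires produced by the $\gamma\gamma$ rule in place of $R$. By Theorem~\ref{thm:compositionality} and Proposition~\ref{prop:weakbisimcong}, one has
\[
\tkm{\netone}\approx\pc{\tkm{\netone'}}{\pinjone}{\tkm{R}}\qquad\text{and}\qquad\tkm{\nettwo}\approx\pc{\tkm{\netone'}}{\pinjone}{\tkm{R'}},
\]
so by transitivity of $\approx$ together with its congruence property, it will suffice to prove the local statement $\tkm{R}\approx\tkm{R'}$.

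The local equivalence hinges on a simple observation: since $R$ contains no $\delta_2$ cell (and $R'$ contains no cell at all), both initial states are empty, and no internal marriages or status tokens ever arise inside $R$. The only internal dynamics of $\tkm{R}$ is the forced two-step traversal of a marriage token, which enters at an aux port $a$ of one $\gamma$ cell with configuration $(\skone,\sktwo)$, pushes a stack symbol while moving to the principal port, crosses the principal-principal wire, and pops the same symbol while moving to the twin aux port $a'$ of the other cell, again with configuration $(\skone,\sktwo)$ restored. This is precisely the transport realised in one conceptual step by the corresponding wire of $R'$. I then define a weak bisimulation $\relone\subseteq\ssone_{\tkm{R}}\times\ssone_{\tkm{R'}}$ that pairs $\stone$ with $\sttwo$ exactly when, after silently completing every marriage token currently sitting at a principal port of $R$ via its unique two-step traversal, the multisets of tokens at the four aux ports of $R$ and of $R'$ coincide modulo the wire-identification of $R'$. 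Both clauses of weak bisimulation are then verified by a case analysis over the external rules (input, output, $\delta$-free, $\gamma$-free, and free-free marriages, together with kill/cokill), with the two internal gammacross steps of $\tkm{R}$ absorbed into the silent prefixes/suffixes $\stackrel{\emptyset}{\trone}^\ast$ of the weak transition $\stackrel{\fmaone}{\trtwo}$.

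The main obstacle is not any single case of the bisimulation analysis but rather the bookkeeping of multi-action transitions: labels of locative transition systems are multisets, so both machines may fire actions involving several of the four free ports at once. One must check that every multi-action transition of $\tkm{R'}$ is simulated by a suitable composite of silent gammacross steps plus a single multi-action transition of $\tkm{R}$ carrying the same multiset of labels, and conversely. The fact that $R$ has no internal marriage rules among $\gamma$ cells and carries no status tokens makes the component actions at distinct free ports truly independent, decoupling the argument into the one-action case and making the closure of $\relone$ under concurrent composition of free-port actions routine.
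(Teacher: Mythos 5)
Your proof is correct in outline but takes a genuinely different route from the paper's. The paper proves the lemma \emph{directly}, by exhibiting a single global weak bisimulation between $\tkm{\netone}$ and $\tkm{\nettwo}$: it defines a partial map $f$ on the tokens of the whole net (matching tokens whose origins are free ports are kept in place; $\emptyset$-canonical marriage tokens are pushed back to their origins and re-launched in $\nettwo$), takes $\relone$ to be the pointwise extension of $f$, and checks the bisimulation clauses by case analysis on internal moves, inputs, outputs and external marriages. You instead factor $\netone$ and $\nettwo$ as $\pc{\netone'}{\pinjone}{R}$ and $\pc{\netone'}{\pinjone}{R'}$ and reduce everything, via Theorem~\ref{thm:compositionality}, Proposition~\ref{prop:weakbisimcong} (combined with commutativity of parallel composition up to $\sim$ to get congruence in the second component) and transitivity of $\approx$, to the two-cell local statement $\tkm{R}\approx\tkm{R'}$. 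Your route is more modular and isolates the only genuinely new content, namely that the push-then-pop traversal of two facing $\gamma$ cells realises exactly the transport along the reduct's wires; its price is that it rests on the full strength of the Compositionality Theorem---by far the heaviest result in the paper, and one whose proof is itself only sketched---and on two formalisation details you should at least acknowledge: (i) $\pc{\netone'}{\pinjone}{R'}$ coincides with $\nettwo$ only up to contraction of chained wires, and (ii) degenerate redexes whose auxiliary ports are wired to one another yield an $R$ with fewer free ports and an $R'$ possibly containing closed wire loops, so the ``four aux ports'' picture needs a caveat. One further difference matters downstream: Section~\ref{sect:adequacy} reuses the concrete, token-pointwise relations $\relone^\netone_\nettwo$ constructed in the soundness proofs to define what it means for a transition sequence to be \emph{sent} forward or backward and to prove preservation of thunks, focused sequences and continuations; your bisimulation, being a relational composite of the two compositionality bisimulations with a local one, is not of that pointwise shape, so adopting your proof would force those later preservation arguments to be reworked.
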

\begin{proof}
  If $\netone\pred{\gamma\gamma}\nettwo$, then we are in the following
  situation:
  \begin{center}
    \includegraphics[scale=1.2]{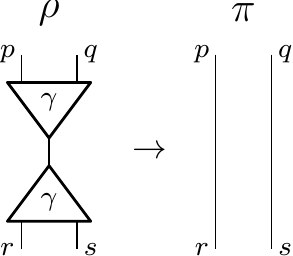}
  \end{center}
  where $p,q,r,s$ indicate the free ports of $\netone$ and $\nettwo$.
  We have a partial function $f:\tkns^\netone\rightharpoonup\tkns^\nettwo$
  such that:
  \begin{varitemize}
  \item
    Matching tokens are mapped to the same position in $\nettwo$
    if their origins are free ports.
  \item
    Marriage tokens are pushed back to their origins in $\nettwo$ if they are
    $\emptyset$-canonical. Pictorially, it can be illustrated as:
    \begin{center}
      \includegraphics[scale=1.2]{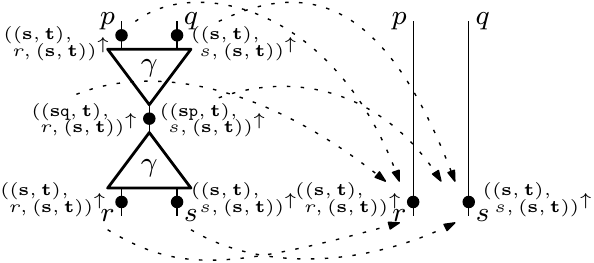}
    \end{center}
    and similarly for downward tokens.
  \end{varitemize}
  The relation $\relone$ we are looking for is defined as a pointwise extension of $f$.
  We can now show that, indeed, $\relone$ is a bisimulation relation.
  \begin{varitemize}
  \item
    Both $\isone_\netone$ and $\isone_\nettwo$ are empty. By definition, $\relone$
    relates $\isone_\netone$ and $\isone_\nettwo$.
  \item
    Assume $\stone_\netone\;\relone\;\stone_\nettwo$ and
    $\tr{\trone}{\stone_\netone}{\fmaone}{\sttwo_\netone}$. We need to find
    $\tr{\trtwo}{\stone_\nettwo}{\fmaone}{\sttwo_\nettwo}$ such that
    $\sttwo_\netone\;\relone\;\sttwo_\nettwo$. Let us proceed by analysing
    the three transition types.
    \begin{varitemize}
    \item
      If $\fmaone = \emptyset$, then the transition must be an internal move of a marriage token 
      because internal marriages never happen in $\netone$. The definition of $f$ tells us
      that $\sttwo_\netone\;\relone\;\stone_\nettwo$, so we can take $\sttwo_\nettwo=\stone_\nettwo$.
    \item
      If $\fmaone$ is an input action, let $\sttwo_\netone=\stone_\netone\uplus\mst{\tknone}$
      where $\tknone$ is the incoming token. Observe that $\stone_\nettwo$ can evolve into 
      $\sttwo_\nettwo=\stone_\nettwo\uplus\mst{\tknone}$ by performing the same input action
      and $\sttwo_\netone\;\relone\;\sttwo_\nettwo$.
    \item
      If $\fmaone$ is an output action, let
      $\sttwo_\netone=\stone_\netone\setminus\mst{\tknone}\uplus\mst{\tkntwo}$
      where $\tknone$ is the outgoing token and $\tkntwo$ is a matching token. We can see that
      $f(\tknone)$ is also ready to flow out and hence
      $\tr{\trone}{\stone_\nettwo}{\fmaone}{\stone_\nettwo\setminus\mst{f(\tknone)}\uplus\mst{\tkntwo}}$.
    \item
      If $\fmaone$ is a bunch of external marriages, then they should be marriages between
      free ports because no marriage tokens in $\stone_\netone$ are placed on $e$
      with empty $\gamma$-stacks. These actions come from matching tokens in $\stone_\netone$
      and let $\stthree$ be such tokens. They are included in $\stone_\nettwo$ as well.
      Thus we have $\tr{\trone}{\stone_\nettwo}{\fmaone}{\stone_\nettwo\setminus\stthree}$.
    \end{varitemize}
  \item
    Assume $\stone_\netone\;\relone\;\stone_\nettwo$ and
    $\tr{\trone}{\stone_\nettwo}{\fmaone}{\sttwo_\nettwo}$. As $\pi$ doesn't allow
    any internal actions, we consider external actions.
    \begin{varitemize}
    \item
      If $\fmaone$ is an input, then we can find the counterpart in $\netone$ as we did before.
    \item
      If $\fmaone$ is an output, we can suppose, without loss of generality, a marriage token
      $\tknone=(r,(\skone,\sktwo),r,(\skone,\sktwo))\in\stone_\nettwo$ flows out through $p$.
      Then $\stone_\netone$ has a marriage token which belongs to $f^{-1}(\tknone)$.
      The token can reach $p$ with a configuration $(\skone,\sktwo)$ after a few internal
      transitions and its origin is the same as $\tknone$. So it allows $\tkm{\netone}$ to
      perform the compatible output action.
    \item
      If $\fmaone$ consists of marriages, $\tkm{\netone}$ is able to consume 
      the corresponding matching tokens in $\stone_\netone$.
    \end{varitemize}
  \end{varitemize}
\end{proof}
If one of the cells involved in a reduction is a $\delta_2$ cell, one cannot hope
to get a result as strong as Lemma~\ref{lemma:gammagammasoundness}:
\begin{lemma}\label{lemma:gammadeltasoundness}
  If $\netone\pred{\gamma\delta}\nettwo$, then
  there is $\istwo_\netone$ such that $\isone_\netone\red\istwo_\netone$
  by an internal marriage transition and
  $(\fmsts{\tkns^\netone},\red,\istwo_\netone)\approx\tkm{\nettwo}$.
\end{lemma}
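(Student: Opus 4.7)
The plan is to mimic the argument of Lemma~\ref{lemma:gammagammasoundness}, with the additional twist of first firing a specific marriage transition in $\tkm{\netone}$ in order to resolve the nondeterminism inherent in the reduction. I would pin down the $\gamma\delta$ redex in $\netone$ that is contracted to produce $\nettwo$, consisting of a $\gamma$ cell meeting a $\delta_2$ cell at their principal ports, and take $\istwo_\netone$ to be the state obtained from $\isone_\netone$ by firing the internal $\gamma\delta$ marriage rule at this very redex (the second rule in Figure~\ref{fig:martrans}). The effect is to commit the $\delta_2$ cell to its marriage with the adjacent $\gamma$, installing a married status token of type $\gamma$ on the $\delta_2$ cell, while spawning, on the other principal port of the $\delta_2$, the status and marriage tokens corresponding to the two virtual copies.

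Next I would define a partial function $f:\tkns^\netone \rightharpoonup \tkns^\nettwo$ and take $\relone$ as its pointwise extension to states. For tokens whose positions lie outside the redex, $f$ acts as the identity up to the renaming of ports induced by the reduction. The two single status tokens spawned for the virtual copies in $\istwo_\netone$, and the two marriage tokens originating from them, are mapped to the corresponding initial status and marriage tokens sitting on the duplicated $\delta_2$ cells that appear in $\isone_\nettwo$. Marriage tokens in $\istwo_\netone$ that have migrated past the married pair by an $\emptyset$-focused computation are mapped to the token that would be obtained, in $\nettwo$, by the analogous routing through the duplicated $\gamma$'s and $\delta_2$'s; matching tokens are mapped to their natural counterparts at the free ports.

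The remainder of the proof consists in verifying that $\relone$ is a weak bisimulation, by a case analysis on the possible transitions. Internal moves outside the redex, and external input, output, marriage and killing transitions at ports disjoint from the redex site, are handled verbatim as in Lemma~\ref{lemma:gammagammasoundness}. For transitions that touch the redex area, a single internal move in $\tkm{\nettwo}$ through the duplicated structure is matched, in $(\fmsts{\tkns^\netone},\red,\istwo_\netone)$, by a sequence of silent moves that traverses the $\gamma$ cell and is then dispatched by the married $\delta_2$ to the appropriate virtual copy; symmetrically in the other direction. Since only empty-labelled transitions are used to realise these simulations, weak bisimilarity is precisely the right notion.

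The main obstacle will be the stack bookkeeping. One has to check that traversing the married $\gamma$-$\delta_2$ pair in $\netone$, with the branching decided by the married status token and the routing encoded by the virtual copies, produces exactly the configuration transformations that the duplicated $\gamma$ cell followed by the corresponding $\delta_2$ cell inflicts in $\nettwo$. This amounts to a careful parallel reading of the moving transition rules in Figure~\ref{fig:movtrans} against the $\gamma\delta$ marriage rule of Figure~\ref{fig:martrans}, and makes repeated use of Lemma~\ref{lemma:invertibility} to argue that any canonical token appearing in the redex region has arrived there by one of the enumerated traces, so that the image under $f$ is well-defined.
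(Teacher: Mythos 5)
Your proposal follows essentially the same route as the paper: fire the internal $\gamma\delta$ marriage at the contracted redex to obtain $\istwo_\netone$, define a token-wise correspondence that routes tokens through the virtual copies of the $\delta_2$ cell (the paper concretely uses the topmost stack symbol $\lft$/$\rgt$ to pick the left or right copy in $\nettwo$, and explicitly discards the two leftover status tokens with empty $\gamma$-stacks before extending pointwise to canonical states), and then verify weak bisimilarity by the same case analysis on internal moves, input/output, and external marriages. The approach and the key steps coincide with the paper's proof.
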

\begin{proof}
  We are in the following situation:
  \begin{center}
    \includegraphics[scale=1.2]{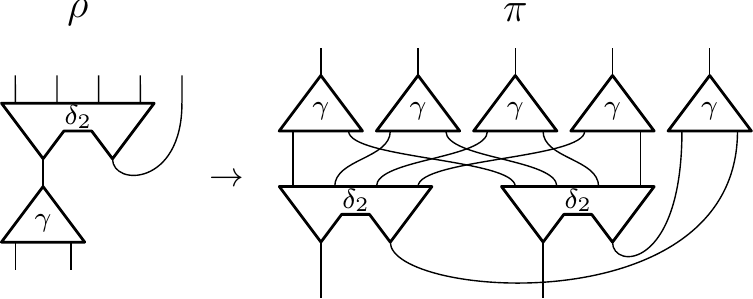}
  \end{center}
  Observe that $\isone_\netone$ can perform a marriage between the $\gamma$ and $\delta_2$ cells,
  resulting in a state $\istwo_\netone$ which has 1 married status tokens, 5 single status tokens
  and 4 marriage tokens. We denote the 2 status tokens in $\istwo_\netone$ which have empty $\gamma$
  stacks by $\stthree$. We define a relation $\relone$ between $\fmsts{\tkns^\netone}$ and
  $\fmsts{\tkns^\nettwo}$, which is described in a pointwise manner as follows:
  \begin{varitemize}
  \item
    Status tokens in $\netone$ whose topmost symbols are $\lft$ (resp. $\rgt$) are related to ones 
    of the $\delta$ cell on the left (resp. on the right) in $\nettwo$:
    \begin{center}
      \includegraphics[scale=1.2]{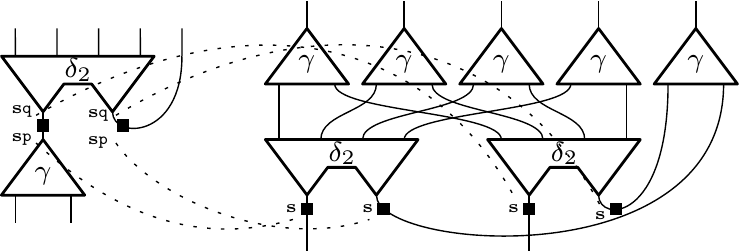}
    \end{center}
    and similarly for married status tokens. 
  \item
    Marriage tokens originating from principal ports are related if their origins are
    related in a similar way to the one described above. Marriage tokens originating
    from free ports are related if they have the same origins.
  \item
    Matching tokens are related similarly to marriage tokens. 
  \end{varitemize}
  Then let $\relone'=\{(\stone_\netone,\stone_\nettwo)
  \mid(\stone_\netone-\stthree,\stone_\nettwo)\in\relone,
  \stone_\netone\text{ and }\stone_\nettwo\text{ are canonical}\}$.
  We claim $\relone'$ is a bisimulation.
  \begin{varitemize}
  \item
    We can see that $\relone'$ relates $\istwo_\netone$ and $\isone_\nettwo$.
  \item
    Suppose $(\stone_\netone,\stone_\nettwo)\in\relone'$
    and $\tr{\trone}{\stone_\netone}{\fmaone}{\sttwo_\netone}$.
    \begin{varitemize}
    \item
      If $\fmaone=\emptyset$, then one of marriage tokens in $\stone_\netone$ moved. But
      the marriage token and its counterpart in $\stone_\nettwo$ are still related and
      so $\stone_\nettwo$ can remain unchanged. 
    \item
      If $\fmaone$ is an input action, it is easy to see that making the same input in $\nettwo$
      will lead $\stone_\nettwo$ to a state related to $\sttwo_\netone$.
    \item
      If $\fmaone$ is an output action, let $\tknone$ be the outgoing token in $\stone_\netone$
      and $\tknone'$ be the token in $\stone_\nettwo$ related to $\tknone$.
      There are two cases.
      \begin{varitemize}
      \item
        If the origin of $\tknone$ is a principal port, then the pair of $\tknone$ and $\tknone'$ 
        is one of the following pairs:
        \begin{center}
          \includegraphics[scale=1.2]{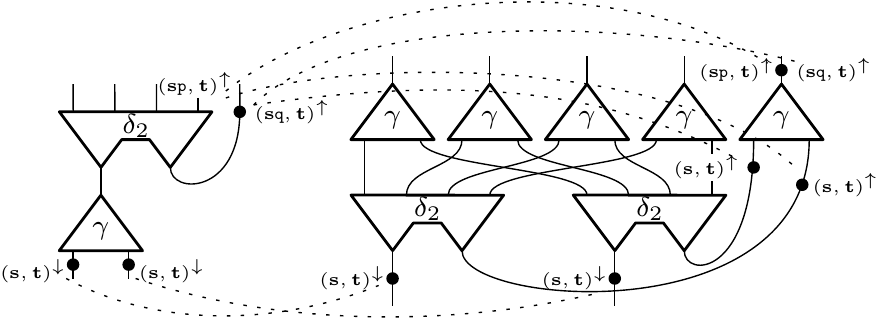}
        \end{center}
        In any case, $\tknone'$ can reach the same position and configuration as $\tknone$
        within zero or one transition steps. Thus we have
        $\tr{\trtwo}{\stone_\nettwo}{\fmaone}{\sttwo_\nettwo}$.
      \item
        If the origin of $\tknone$ is a free port, then $\tknone'$ is on this free port. By
        definition of $\relone'$, $\stone_\netone$ has a status tokens $\tknthree$ which is
        necessary for $\tknone$ to make a trip from its origin to its current position.
        The counterpart $\tknthree'$ for $\tknthree$ is contained in $\stone_\nettwo$ and
        indeed helps $\tknone'$ to go to the correct location. For instance, let us consider
        the situation below, in which we have
        $\tknone=(r, (\skone, \sktwo\mathtt{1}), p, (\skone\lft, \sktwo))$,
        $\tknone'=(r, (\skone\lft, \sktwo)), p, (\skone\lft, \sktwo))$ and
        $\tknthree=(q, (\skone\lft, \gamma))$. Notice that
        $\tknone'$ can arrive at $r$ thanks to $\tknthree'=(q_l, (\skone, \gamma))$.
        \begin{center}
          \includegraphics[scale=1.2]{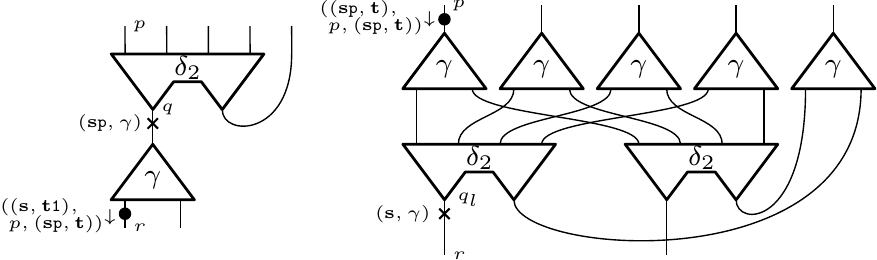}
        \end{center}
      \end{varitemize}
    \item
      If $\fmaone$ is a multiset of external marriages, each element of $\fmaone$ is either
      a marriage between a principal port and a free port or one between two free ports.
      \begin{varitemize}
      \item
        A marriage between two free ports involves one or two matching tokens whose origins
        are free ports. Since such matching tokens in $\stone_\netone$ are related to their
        copies in $\stone_\nettwo$, $\stone_\nettwo$ can imitate that marriage.
      \item
        A marriage between a principal port and a free port provokes one marriage action,
        whose marriage type is either $\gamma\delta$, $\delta\delta$ or $\delta\gamma$.
        The pictures below illustrate how $\nettwo$ simulates the external marriage in
        $\netone$ for each marriage type.
        \begin{varitemize}
        \item
          $\gamma\delta$
          \begin{center}
            \includegraphics[scale=1.2]{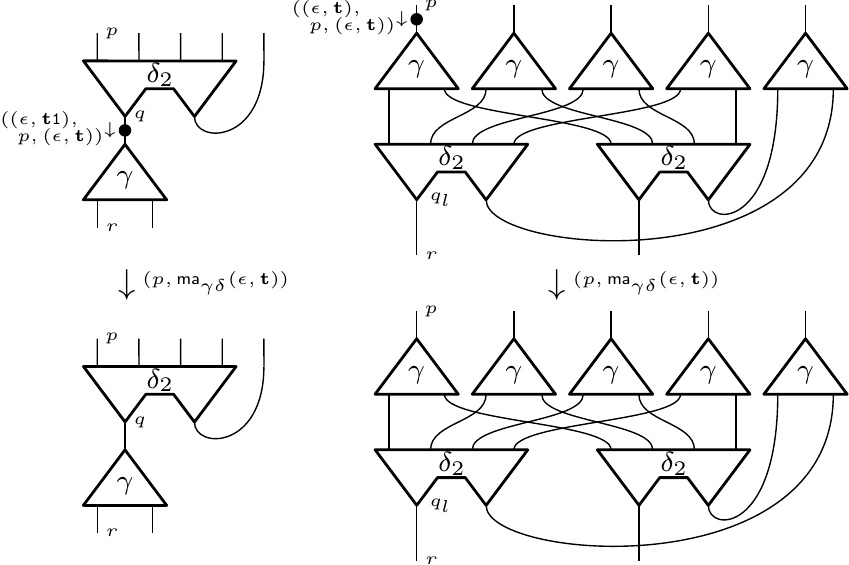}
          \end{center}
        \item
          $\delta\delta$
          \begin{center}
            \includegraphics[scale=1.2]{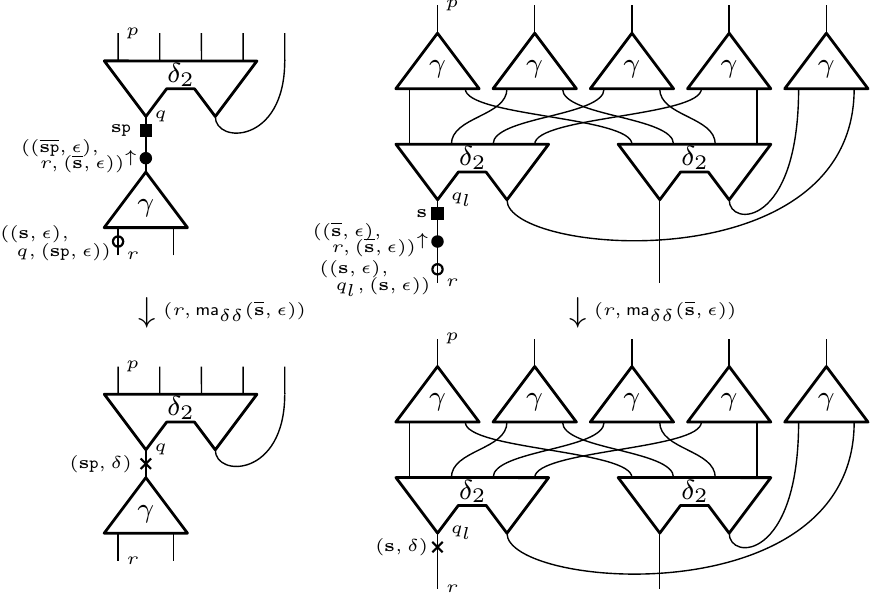}
          \end{center}
        \item
          $\delta\gamma$ 
          \begin{center}
            \includegraphics[scale=1.2]{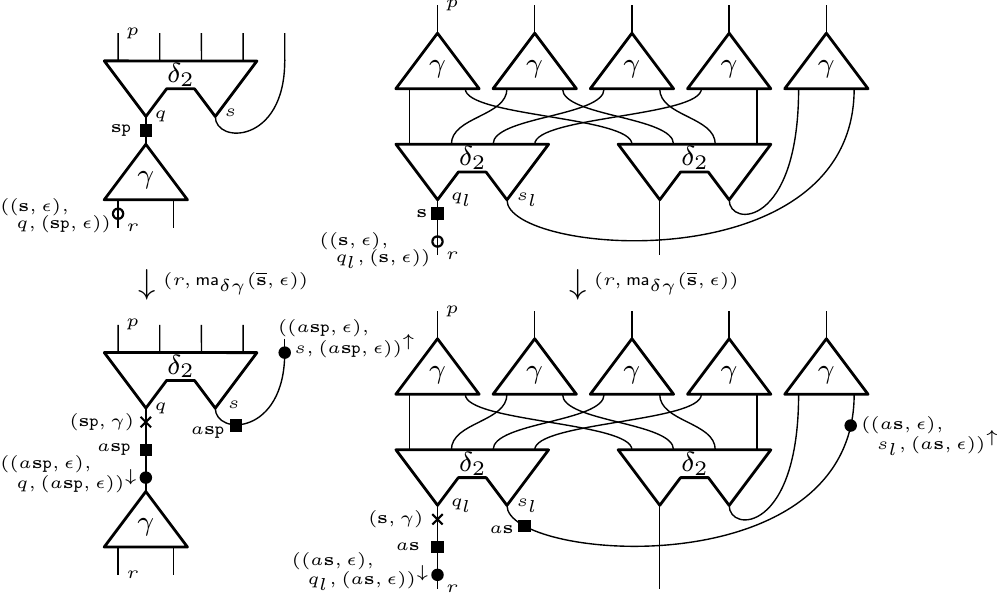}
          \end{center}
        \end{varitemize}
      \end{varitemize}
    \end{varitemize}
  \item
    For the other direction, we can proceed similarly.
  \end{varitemize}
\end{proof}
\begin{lemma}\label{lemma:deltadeltasoundness}
  If $\netone\pred{\delta\delta}\nettwo$, then
  there is $\istwo_\netone$ such that $\isone_\netone\red\istwo_\netone$
  by an internal marriage transition and
  $(\fmsts{\tkns^\netone},\red,\istwo_\netone)\approx\tkm{\nettwo}$.
\end{lemma}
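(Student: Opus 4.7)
The plan is to mirror the structure of Lemma~\ref{lemma:gammadeltasoundness}, adapted to the fully symmetric $\delta\delta$ setting. The redex consists of two $\delta_2$ cells, call them $c_1$ and $c_2$, facing each other through one pair of principal ports; the reduct $\nettwo$ replaces them by wires connecting the four auxiliary ports pairwise. In the initial state $\isone_\netone$, the four principal ports of $c_1, c_2$ together carry four marriage tokens and four single status tokens, all with empty configurations.

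First, I take the distinguished state $\istwo_\netone$ to be the result of applying the internal $\delta\delta$ marriage rule of Figure~\ref{fig:martrans} to the facing principal ports of $c_1$ and $c_2$. That rule consumes, in one atomic transition, all four marriage tokens at the four principal ports of the two cells (the two facing the redex plus the two ``other'' ones that the rule explicitly annihilates) and turns the four single status tokens into married status tokens of type $\delta$ recording the partner cell. Thus $\istwo_\netone$ already ``knows'' that $c_1$ and $c_2$ are wedded, which is what makes a bisimulation with $\tkm{\nettwo}$ possible.

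Next, I build the candidate bisimulation. Let $\funone : \tkns^\netone \rightharpoonup \tkns^\nettwo$ be the partial map that sends tokens lying outside the redex to themselves and sends a marriage or matching token whose natural position lies at an auxiliary port of $c_i$ to the token at the paired auxiliary port in $\nettwo$ along the wire that replaces $c_1, c_2$. Because the married-$\delta_2$ crossing rules route a token deterministically through the partner cell with exactly the same net stack effect as traversing the corresponding wire in $\nettwo$, this gives a sound correspondence at the level of $\istwo_\netone$-reachable canonical states. Let $\stthree$ collect the four married status tokens together with any empty-stack residue that has no $\nettwo$-counterpart, let $\relone$ be the pointwise extension of $\funone$, and set $\relone' = \{ (\stone_\netone, \stone_\nettwo) \mid (\stone_\netone - \stthree, \stone_\nettwo) \in \relone \text{ and both are canonical} \}$.

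Finally, I verify $\relone'$ is a weak bisimulation by the same case analysis as in Lemma~\ref{lemma:gammadeltasoundness}: internal moves outside the redex, input and output, external marriages (which here are uniformly of type $\delta\delta$ when they touch a principal port in the redex, or are carried by matching tokens between free ports), and killing/cokilling actions all transfer directly. The main obstacle is the internal crossing step, where a marriage token entering an auxiliary port of $c_1$ in $\istwo_\netone$ must evolve through finitely many silent steps across both married cells and reappear at the paired auxiliary port of $c_2$ with precisely the configuration that a single wire traversal in $\nettwo$ would yield. Verifying this requires checking that the pop/push stack manipulations of the married-$\delta_2$ movement rule balance out across $c_1$ and $c_2$; the symmetry of the situation makes this simpler than in the $\gamma\delta$ case, and Lemma~\ref{lemma:invertibility} together with Lemma~\ref{lemma:preservecanonicity} guarantees that the silent path is unique and preserves canonicity, so the reverse bisimulation clause follows by the same reasoning.
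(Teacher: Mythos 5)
Your proof follows essentially the same route as the paper's: take $\istwo_\netone$ to be the state reached by the internal $\delta\delta$ marriage on the facing principal ports, relate matching tokens and marriage tokens originating at free ports (and everything outside the redex) to their natural counterparts in $\nettwo$, carry the residual status tokens only on the $\netone$ side of the relation, and verify bisimilarity by the same case analysis as in the other soundness lemmas. Two small factual slips worth correcting: the paper's post-marriage state $\istwo_\netone$ contains two \emph{married} and two \emph{single} status tokens (not four married ones), and the $\delta\delta$ reduct pairs up the eight auxiliary ports of the two $\delta_2$ cells (each has arity four) while also producing $\varepsilon$ cells, rather than merely wiring ``four auxiliary ports pairwise.''
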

\begin{proof}
  Assume that the reduction is an interaction between two first
  principal ports of two $\delta_2$ cells.
  \begin{center}
    \includegraphics[scale=1.2]{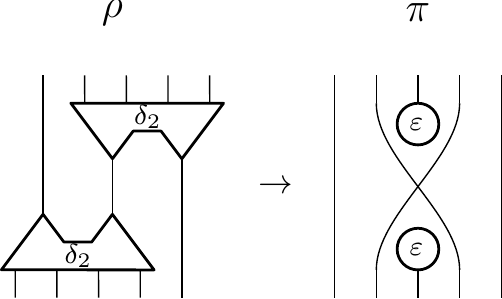}
  \end{center}
  We can see $\isone_\netone$ immediately evolves into
  $\istwo_\netone$, which consists of 2 married status tokens and 2
  single status tokens. We define a relation
  $\relone\subseteq\fmsts{\tkns^\netone}\times\fmsts{\tkns^\nettwo}$
  as:
  \begin{varitemize}
  \item
    Matching tokens are related naturally.
  \item
    Marriage tokens in $\netone$ which originated in free ports and
    are $\istwo_\netone$-canonical are related to the natural
    counterparts in $\nettwo$ for their origins.
  \end{varitemize}
  The bisimulation between $\istwo_\netone$ and $\isone_\nettwo$ is
  given by
  $\relone'=\{(\stone_\netone\uplus\istwo_\netone,\stone_\nettwo)
  \mid(\stone_\netone,\stone_\nettwo)\in\relone\}$. The proof that
  $\relone'$ is a bisimulation is very much similar to the one of
  Lemma~\ref{lemma:gammagammasoundness}.
\end{proof}

One may wonder why a $\gamma\delta$ reduction is treated the same as a
$\delta\delta$ reduction, since no nondeterminism seems to be involved
in it. This is however, only apparent: if a $\delta_2$ cell may
``choose'' to interact with either a $\delta_2$ or a $\gamma$ cell, it
does not loose any behaviour \emph{only up to $\eta$-equivalence}
(see~\cite{MazzaThesis} for a definition) when choosing to interact
with the $\gamma$ cell. GoI, on the other hand, is well-known
\emph{not} to be sound for $\eta$ in general. In other words, although
$\gamma\delta$ reduction is \emph{deterministic} as far as rewriting
on nets is concerned, it is not so at the level of the underlying GoI
model, and this happens for deep reasons which have little to do with
MICs.

As a corollary of the three lemmas above, we can prove soundness
of our model with respect to net reduction, which is spelled out
as a similarity:
\begin{corollary}[Soundness]\label{cor:soundness}
  If $\netone\red\nettwo$,
  then $\tkm{\nettwo}\precapprox\tkm{\netone}$.
\end{corollary}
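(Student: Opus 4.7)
The plan is to case-split on which interaction rule governs the single step $\netone \red \nettwo$, and to invoke, in each case, the corresponding soundness lemma. In each case the goal is to exhibit a weak simulation from $\tkm{\nettwo}$ into $\tkm{\netone}$.

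For the $\gamma\gamma$ case, Lemma~\ref{lemma:gammagammasoundness} already gives a weak bisimulation $\tkm{\netone}\approx\tkm{\nettwo}$. Since every weak bisimulation is, as a binary relation, in particular a weak simulation, this immediately yields $\tkm{\nettwo}\precapprox\tkm{\netone}$.

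For the $\gamma\delta$ and $\delta\delta$ cases, Lemmas~\ref{lemma:gammadeltasoundness} and~\ref{lemma:deltadeltasoundness} furnish an intermediate state $\istwo_\netone$ reachable from $\isone_\netone$ by a single internal (silent) marriage transition, together with a weak bisimulation $\relone$ between the LTS $(\fmsts{\tkns^\netone},\red,\istwo_\netone)$ and $\tkm{\nettwo}$. To promote this into the desired weak simulation, I would set
\[
\relone' \;=\; \relone \cup \{(\isone_\nettwo,\isone_\netone)\},
\]
with $\relone$ oriented so that its first component ranges over states of $\tkm{\nettwo}$ and its second over states of $\tkm{\netone}$, and then verify the weak simulation clauses. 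The only genuinely new pair is the initial one: if $\isone_\nettwo \lblred{\fmaone} s'$, then because $(\isone_\nettwo,\istwo_\netone) \in \relone$ and $\relone$ is a weak bisimulation, some $t'$ satisfies $\istwo_\netone \stackrel{\fmaone}{\Rightarrow} t'$ and $(s',t') \in \relone$; prepending the silent step $\isone_\netone \red \istwo_\netone$, which is absorbed by the weak transition relation, produces $\isone_\netone \stackrel{\fmaone}{\Rightarrow} t'$, as required. For every other pair, the weak simulation condition is inherited directly from $\relone$.

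Conceptually, the reason this yields only similarity, not bisimilarity, lies precisely in the silent marriage step $\isone_\netone \red \istwo_\netone$: it commits to one marriage among several that $\isone_\netone$ could have performed, so $\tkm{\netone}$ retains strictly more behaviour than $\tkm{\nettwo}$, and trying to simulate $\tkm{\netone}$ from $\tkm{\nettwo}$ would fail on exactly these discarded alternatives. I do not anticipate any substantial obstacle, the three preceding lemmas having already absorbed the real work; the only point requiring care is the orientation of $\relone$, to ensure that the silent prefix transition lands on the simulating side of $\precapprox$.
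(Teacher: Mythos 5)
Your proposal is correct and follows essentially the same route as the paper: a case split over the three interaction rules, using Lemma~\ref{lemma:gammagammasoundness} directly for $\gamma\gamma$, and for $\gamma\delta$ and $\delta\delta$ composing the weak bisimulation of Lemmas~\ref{lemma:gammadeltasoundness} and~\ref{lemma:deltadeltasoundness} with the observation that the silent marriage step $\isone_\netone\red\istwo_\netone$ makes the machine re-rooted at $\istwo_\netone$ weakly simulated by $\tkm{\netone}$. Your explicit relation $\relone'$ and the check of the initial pair just spell out what the paper leaves implicit.
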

\begin{proof}
  This 
  follows from
  Lemma~\ref{lemma:gammagammasoundness},
  Lemma~\ref{lemma:gammadeltasoundness}, and
  Lemma~\ref{lemma:deltadeltasoundness}. Indeed,
  if $\isone_\netone\red\istwo_\netone$, then
  $(\fmsts{\tkns^\netone},\red,\istwo_\netone)\precapprox\tkm{\netone}$.
\end{proof}
\section{Adequacy}\label{sect:adequacy}
\newcommand{\tsone}{\mathbf{A}} \newcommand{\tstwo}{\mathbf{B}}
\newcommand{\tsthree}{\mathbf{C}}
\newcommand{\fincomp}[3]{\mathit{fin}_{#1}^{#2}(#3)}
\newcommand{\infcomp}[3]{\mathit{inf}_{#1}^{#2}(#3)}
\newcommand{\setnfone}{\mathbf{X}} \newcommand{\netnfvic}{\mathbf{NV}}
\newcommand{\netvic}{\mathbf{VC}}
\newcommand{\tkmnf}[1]{\mathbf{TN}_{#1}}
Corollary~\ref{cor:soundness} tells us that, along any interaction
path (of which there could be many) starting from any net $\netone$,
one finds nets whose token machines can all be related by way of weak
similarity to $\tkm{\netone}$.  This already tells us much about the
way nets and token machines are related.  However, there is still no
result around about the relationship between the behaviour of
$\tkm{\netone}$ and that of $\netone$ itself, e.g., any result on
whether there is a way to ``read'' a property of $\netone$ from its
interpretation $\tkm{\netone}$. This section is devoted to proving
that, indeed, $\netone$ and its token machine have the same behaviour
as for \emph{termination}. This means that token machines are not only
a sound but also an \emph{adequate} model of multiport interaction,
since termination is the most natural property to be observed.

Before stating adequacy, we need to introduce some more preliminary
definitions and notations. In particular, in this section, contrarily
to the previous ones, token machines will be seen as non-interactive
objects, and thus analysed as \emph{closed} systems.  A
\emph{(unlabelled) transition system} (TS for short; also known as a
\emph{Kripke structure}) consists of a set $\ssone$ of \emph{states}
and a binary relation ${\red}\subseteq\ssone\times\ssone$.  A
\emph{pointed} transition system is a triple $(\ssone,\red,\isone)$
where $(\ssone,\red)$ is a TS and $\isone\in\ssone$ is the
\emph{initial state}. Notions of simulation and bisimulation can
easily be given for pointed transition systems. Noticeably, in a
simulation we require each transition in the simulated TS to
correspond to \emph{zero or more} transitions in the simulating TS.
We indicate the obtained notions of (weak) bisimilarity and similarity
as $\approx$ and $\precapprox$, as usual. Given a transition system
$\tsone=(\ssone,\red)$, we define the following two predicates
parametrised by an element $\stone$ of $\ssone$ and by a set of
elements $\setnfone$:
\begin{varitemize}
\item
  $\fincomp{\tsone}{\setnfone}{\stone}$, which holds iff \emph{there is}
  a \emph{finite} transition sequence starting in $\stone$ and
  ending in a normal form \emph{not in} $\setnfone$.
\item
  $\infcomp{\tsone}{\setnfone}{\stone}$, which instead holds iff \emph{there is}
  an \emph{infinite} transition sequence starting in $\stone$, or 
  a finite transition sequence starting in $\stone$ and ending
  in an element \emph{in} $\setnfone$.
\end{varitemize}
$\netset$ can be seen as a transition system whose states are nets, and
whose transition relation is the one induced by the reduction
relation. Let $\netvic$ be the set of nets containing vicious circles.
For any net $\netone$, the set $\tkm{\netone}$ can itself be seen as a transition
system whose states are elements of $\tkns^\netone$ and whose
transition relation consists of internal transitions.
\begin{theorem}[Adequacy]\label{thm:adequacy}
For every net $\netone$, the following holds:
\begin{align*}
  \fincomp{\tkm{\netone}}{\emptyset}{\isone_{\netone}}&\Longleftrightarrow\fincomp{\netset}{\netvic}{\netone};\\
  \infcomp{\tkm{\netone}}{\emptyset}{\isone_{\netone}}&\Longleftrightarrow\infcomp{\netset}{\netvic}{\netone}.
\end{align*}
\end{theorem}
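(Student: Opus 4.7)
The plan is to derive both bi-implications from two ingredients: a \emph{base case lemma} characterising the internal behaviour of token machines for nets already in normal form, and a \emph{lifting argument} along reduction, carried out using the three Soundness Lemmas \ref{lemma:gammagammasoundness}--\ref{lemma:deltadeltasoundness}. The base case to prove is: for a net $\netone$ in normal form, $\fincomp{\tkm{\netone}}{\emptyset}{\isone_\netone}$ holds iff $\netone\notin\netvic$, and (symmetrically) $\infcomp{\tkm{\netone}}{\emptyset}{\isone_\netone}$ holds iff $\netone\in\netvic$. Its proof is by structural analysis: in a normal form, all principal ports face either free ports or auxiliary ports, so no internal marriage can ever occur, and each marriage token can only wander through the $\gamma$-skeleton of the net; the token either exits through a free port (or gets stuck at a stack mismatch) after a bounded number of steps, or enters a cycle of $\gamma$ cells---exactly a vicious circle---in which case it loops forever.

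For the easier $(\Leftarrow)$ directions, take any reduction $\netone=\netone_0\red\netone_1\red\cdots\red\netone_n$. The three soundness lemmas give, for each step, either a strong bisimulation $\tkm{\netone_i}\approx\tkm{\netone_{i+1}}$ (for a $\gamma\gamma$ step) or a bisimulation between $\tkm{\netone_{i+1}}$ and $(\fmsts{\tkns^{\netone_i}},\red,\istwo_{\netone_i})$ with $\isone_{\netone_i}\red\istwo_{\netone_i}$ (for a $\gamma\delta$ or $\delta\delta$ step). The bisimulations are in fact explicit relations (as built in the proofs of Lemmas \ref{lemma:gammagammasoundness}--\ref{lemma:deltadeltasoundness}) that map normal forms to normal forms and infinite computations to infinite ones, so both $\fincomp$ and $\infcomp$ lift backwards along the reduction (possibly with one extra prepended transition). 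Applying the base case at $\netone_n$, when it is a normal form, yields the $(\Leftarrow)$ direction of both equivalences; in the case of an infinite reduction without ever reaching a vicious normal form, we iterate the backward lifting infinitely often to obtain a divergent computation of $\tkm{\netone}$.

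For the converse directions, the idea is to extract a reduction of $\netone$ from an internal computation of $\tkm{\netone}$: each internal marriage transition in the machine determines, by inspection of the involved tokens, a concrete redex in $\netone$ and hence a legitimate reduction step. Given a finite terminating computation, I would list its marriages in order, perform the corresponding reductions one by one (together with any $\gamma\gamma$ reductions needed to enable them), and then exhaustively normalise any remaining $\gamma\gamma$ redexes (which strongly normalise, since they strictly decrease the number of $\gamma$ cells). The residual net $\nettwo$ is in normal form, and by the Soundness Lemmas its token machine is bisimilar to the movement-only tail of the original computation; the base case then forces $\nettwo\notin\netvic$. For the $(\Rightarrow)$ direction of the $\infcomp$ equivalence, an infinite computation of $\tkm{\netone}$ either contains infinitely many marriages (yielding an infinite reduction of $\netone$) or only finitely many, in which case its infinite tail of pure moves witnesses, again via the base case, a vicious circle in the net reached after performing those finitely many marriages.

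The main obstacle is expected to be the rigorous execution of this extraction procedure: one must verify that the marriages occurring in an arbitrary machine computation identify a well-defined and conflict-free sequence of reductions. Here I would appeal to Lemma~\ref{lemma:statusinflationary}---once a $\delta_2$ cell has married, its married status token persists and prevents any further marriage at the same cell, so no cell can be married twice---together with the canonical focused-sequence machinery from Section~\ref{sect:MTM}, which guarantees that each marriage event has a uniquely determined partner and thus a unique corresponding net redex. A subsidiary technical point is the ``movement-only tail implies vicious circle'' step, which requires relating an infinite internal trajectory of a single marriage token to a genuine cycle in the underlying $\gamma$-skeleton; I expect this to be handled by a pigeonhole-style argument on cell visitations combined with the determinism of token movement through $\gamma$ cells.
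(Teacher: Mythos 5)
Your outline coincides with the paper's strategy in its broad architecture: the ``base case lemma'' is exactly Proposition~\ref{prop:netnormalform}, the $(\Leftarrow)$ directions are obtained by lifting backward along reduction via the bisimulations of Lemmas~\ref{lemma:gammagammasoundness}--\ref{lemma:deltadeltasoundness}, and the $(\Rightarrow)$ directions are obtained by reading a reduction sequence off the marriages of a machine computation. However, there is a genuine gap in your treatment of the hard direction, and it is precisely the point where the paper has to do real work. You claim that ``each internal marriage transition in the machine determines, by inspection of the involved tokens, a concrete redex in $\netone$.'' This is not true of an arbitrary computation: a marriage occurs between tokens that may have crossed several already-married $\delta_2$ cells and many $\gamma$ cells, so the corresponding redex exists only in a \emph{reduct} of $\netone$, and the remaining computation after the marriage is interleaved with moves of tokens that were in flight before it. To turn this into an induction on the number of marriages one must first \emph{permute} the computation so that each marriage is immediately preceded by the focused sequences of its participant tokens from their origins --- this is the notion of \emph{standard computation} and the Standardisation Lemma (proved via the commutation steps of Proposition~\ref{prop:convert}), which your proposal does not identify. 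The appeal to Lemma~\ref{lemma:statusinflationary} and to canonicity guarantees uniqueness of partners but not the reorderability of the trace, which is what actually licenses the inductive step (Proposition~\ref{prop:finstdsucc}).

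Two further, smaller gaps. First, in the finite case you assert that the residual net obtained after performing the marriages' reductions and $\gamma\gamma$-normalising ``is in normal form''; this needs proof, since the net could retain $\delta\delta$ or $\gamma\delta$ redexes that the computation simply never married. The paper closes this by a backward-preservation argument on thunks (Proposition~\ref{prop:finstdzerofinite}): a remaining redex would yield a thunk in $\tkm{\nettwo}$, hence one in $\tkm{\netone}$, contradicting the computation being a pure continuation. Second, in the infinite case with finitely many marriages you invoke ``the base case'' to get a vicious circle, but your base case is stated only for nets in normal form, and the net reached after those marriages need not be one. The correct statement is Proposition~\ref{prop:infCompImpliesVC}: an infinite marriage-free computation forces a $\gamma$-vicious circle after $\gamma\gamma$-normalisation even in the presence of unused redexes, and its proof must additionally rule out the token entering a principal port (which would enable a marriage) rather than merely tracking the $\gamma$-skeleton. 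None of these are fatal to the plan, but as written the $(\Rightarrow)$ directions do not go through without the standardisation machinery.
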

 The rest of this section will be devoted to proving the four
 implications which together form Theorem~\ref{thm:adequacy}.
\subsection{Standard Computations}
\newcommand{\thunkof}[2]{\mathcal{THU}^{#1}_{#2}}
In this section we define a key notion, called \emph{standard computation},
which plays a crucial role to prove Adequacy.
In the following, a \emph{computation} means a transition sequence
starting in the initial state of the underlying token machine.

  A \emph{thunk} is any transition sequence in one of two forms:
  \begin{varitemize}
  \item
    Either in the form $\fcone,\fctwo,\fcthree$, where
    \begin{align*}
      \fcone&:\tknone_1\fred{\stone}\tknone_2\fred{\stone}\cdots\fred{\stone}\tknone_n;\\
      \fctwo&:\tkntwo_1\fred{\sttwo}\tkntwo_2\fred{\sttwo}\cdots\fred{\sttwo}\tkntwo_m
    \end{align*}
    where $\tknone_1$ and $\tkntwo_1$ are origin tokens,
    and 
    $\fcthree$ is a $\delta\delta$ marriage 
    whose main tokens are $\tknone_n$ and $\tkntwo_m$;
  \item
    Or in the form $\fcone,\fctwo$, where
    $$
    \fcone:\tknone_1\fred{\stone}\tknone_2\fred{\stone}\cdots\fred{\stone}\tknone_n
    $$
    $\tknone_1$ is an origin token, and $\fctwo$ is a $\gamma\delta$ marriage
    whose main token is $\tknone_n$.
  \end{varitemize}
  A \emph{standard sequence} is a transition sequence in one of two forms:
  \begin{varitemize}
  \item
    Either in the form $\fcone_1,\ldots,\fcone_n,\fctwo$ where
    $\fcone_1,\ldots,\fcone_n$ are all thunks, and $\fctwo$ is any \emph{maximal}
    transition sequence \emph{not} containing any marriage step,
    called a \emph{continuation}.
  \item
    Or in the form $\fcone_1,\fcone_2,\ldots$, where all the $\fcone_i$
    (where $i\in\NN^+$) is a thunk.
  \end{varitemize}
A \emph{standard computation} is a computation which is a standard sequence.
The key insight is that any maximal computation can be ``represented'' by
a standard computation. Formally, we have the following lemma:
\begin{lemma}[Standardisation]
\begin{varitemize}
\item Let $\fcone\colon\isone_{\tkm{\netone}}\red\stone_1\red\stone_2\red\cdots\red\stone_n$ be a finite maximal computation of $\tkm{\netone}$.
  There exists a finite standard computation
  $\fctwo\colon\isone_{\tkm{\netone}}\red\sttwo_1\red\sttwo_2\red\cdots\red\sttwo_m$
  s.t.\ $\sttwo_m = \stone_n$.
\item Let $\fcone:\isone_{\tkm{\netone}}\red\stone_1\red\stone_2\red\cdots$ be an infinite maximal computation of $\tkm{\netone}$.
  There exists an infinite standard computation
  $\fctwo\colon\isone_{\tkm{\netone}}\red\sttwo_1\red\sttwo_2\red\cdots$
  s.t.\ for any state $\stone_i$ right after a marriage in $\fcone$,
  there exists a state $\sttwo_{i'} = \stone_i$ in $\fctwo$.
\end{varitemize}
\end{lemma}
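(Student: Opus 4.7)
The plan is to transform an arbitrary maximal computation into a standard one by a reordering procedure that preserves the terminal state in the finite case and all post-marriage states in the infinite case. First I would list the marriages occurring in $\fcone$ in their original order as $\mu_1, \mu_2, \ldots$, and for each $\mu_i$ identify its main token(s). Each such main token originated either in the initial state or was created by some earlier marriage $\mu_j$ with $j < i$; its trajectory from origin to marriage position, extracted from $\fcone$, is a focused sequence in the sense defined above, and together with $\mu_i$ it forms the thunk $\fcone_i$.

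The core step is a commutation lemma for adjacent internal transitions acting on independent tokens: two move transitions involving different marriage tokens commute, and a move of a marriage token $T$ commutes with a marriage $\mu$ provided $T$ is not one of $\mu$'s main tokens and the cell currently traversed by $T$ is not affected by the status tokens produced by $\mu$. By Lemma~\ref{lemma:statusinflationary}, status tokens are only added, never removed, so postponing an independent move past an unrelated marriage is safe. Using this commutation repeatedly, I would pull the focused sequences belonging to $\mu_i$ into a contiguous block immediately preceding $\mu_i$, while deferring move transitions not needed for any marriage to the tail of the computation.

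For the finite case, the outcome is a sequence $\fcone_1, \ldots, \fcone_k, \fctwo$ where $\fctwo$ collects the deferred moves; maximality of the original computation guarantees maximality of $\fctwo$, and since the permutation preserves the cumulative effect of transitions, the terminal state equals $\stone_n$. For the infinite case, the same procedure yields either an infinite sequence of thunks (if $\fcone$ contains infinitely many marriages) or finitely many thunks followed by an infinite move-only tail. The requirement that every post-marriage state $\stone_i$ in $\fcone$ appears as some $\sttwo_{i'}$ in $\fctwo$ follows by induction on $i$: by construction the $i$-th marriage in $\fctwo$ is $\mu_i$, and by the commutation argument the state right after it equals $\stone_i$ as a multiset.

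The main obstacle will be the commutation lemma in the presence of status-dependent moves through $\delta_2$ cells: the $\delta_2$ cross rule in Figure~\ref{fig:movtrans} depends on the presence of a married status token at the cell, so swapping a move past a $\gamma\delta$ or $\delta\delta$ marriage occurring at the same cell is not a priori justified. Handling this requires a careful case analysis showing that the marriages we permute across never involve a cell currently being traversed by the moved token, unless that token is itself a main token of the marriage; in the latter case the move is part of the focused sequence we are building, so no swap is attempted. Lemma~\ref{lemma:invertibility} ensures that the focused sequences assembled by the procedure are well-defined, securing the structure of each thunk.
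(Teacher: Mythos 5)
Your strategy coincides with the paper's: permute the computation using the independence of moves on distinct tokens, assemble the trajectories of each marriage's main token(s) into a contiguous focused block to obtain a thunk, and push everything else later, with Lemma~\ref{lemma:statusinflationary} justifying the swap of a move past a marriage. Note that the ``main obstacle'' you flag at the end is in fact a non-issue: the procedure only ever \emph{postpones} a move past a marriage (the safe direction, since status tokens are only ever added), and never needs to advance a move from after a marriage to before it; the status-dependence of $\delta_2$ crossings therefore never bites, exactly as you half-observe when you invoke inflationarity.

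There is, however, one genuine missing case. A $\delta\delta$ marriage involves four tokens: the two main tokens and two further marriage tokens (those originating from the other principal ports) which are \emph{annihilated} by the marriage. A move on such a doomed token occurring before the marriage in $\fcone$ can neither be absorbed into a thunk (the token is not a main token, so its trajectory is not one of the thunk's focused sequences) nor be ``deferred to the tail'' as your procedure prescribes, because the token no longer exists after the marriage; nor can it be left sitting between the focused sequences and the marriage, since a thunk admits no such interleaving. The paper resolves this with a third commutation clause: the move is simply \emph{erased}, because the marriage can fire directly from the earlier state (a move alters only the moved token's own position and stacks, and the killed token is removed regardless of where it currently sits). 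Without this clause your reordering gets stuck. A second, smaller inaccuracy: the state immediately after the $i$-th marriage of $\fctwo$ is \emph{not} $\stone_i$, since the deferred moves have not yet been replayed at that point; what actually holds, and what the lemma requires, is that $\stone_i$ is recovered in $\fctwo$ once those deferred moves are performed, which is how the paper arranges the segment $\stone'_2\red^*\stone_{k+1}$ before recursing on the remaining suffix.
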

\begin{proof}
First we show the following auxiliary propositions.
\begin{proposition}\label{prop:convert}
  In any token machine, the following three hold.
  \begin{varnumlist}
  \item \label{item:convert1}
    Let $\stone_0\red\stone_1$ be a move on a token $\tknone$
    and $\stone_1\red\stone_2$ be a move on another $\tkntwo \neq \tknone$.
    Then there exists a state $\stone'_1$ satisfying
    $\stone_0\red\stone'_1\red\stone_2$, where
    $\stone_0\red\stone'_1$ is a move on $\tkntwo$
    and $\stone'_1\red\stone_2$ is a move on $\tknone$.
  \item \label{item:convert2}
    Let $\stone_1\red\stone_2$ be a marriage
    and $\stone_0\red\stone_1$ be a move on a token $\tknone$
    not involved in the marriage $\stone_1\red\stone_2$.
    Then there exists a state $\stone'_1$ satisfying
    $\stone_0\red\stone'_1\red\stone_2$,
    where $\stone_0\red\stone'_1$ is a marriage on the same token(s) as
    the one $\stone_1\red\stone_2$,
    and $\stone'_1\red\stone_2$ is a move on $\tknone$.
  \item \label{item:convert3}
    Let $\stone_1\red\stone_2$ be a marriage
    and $\stone_0\red\stone_1$ be a move on a token $\tknone$
    killed by the marriage $\stone_1\red\stone_2$.
    Then $\stone_0\red\stone_2$, by a marriage on the same token(s).
  \end{varnumlist}
\end{proposition}
\begin{proof}
  \begin{varnumlist}
    \item
      Any move transition does not affect the other token's move transition.
      This is in particular because any move transition does not put any
      marriage status token by definition.
    \item
      Since $\stone_0\red\stone_1$ is already allowed to happen before the marriage $\stone_1\red\stone_2$,
      there is already some marriage status token(s) that allow $\tknone$ to move
      in $\stone_0$.
      Since any marriage does not delete a marriage status token by definition,
      the move is still possible in $\stone'_1$.
    \item
      Since $\stone_0\red\stone_1$ is a move, it does not put any new token nor
      change any other token's stacks by definition.
  \end{varnumlist}
\end{proof}
Let $\fcone\colon\stone_0\red\stone_1\red
\cdots\red\stone_k\red\stone_{k+1}\red\stone_{k+2}\red\cdots\red\stone_n$ 
be a finite maximal transition sequence where $\stone_k\red\stone_{k+1}$
is the first marriage in the sequence $\fcone$.
We first permute $\fcone$, by repeatedly applying
Proposition~\ref{prop:convert}.\ref{item:convert1}, into another transition sequence
$\fcone'\colon\stone_0\red^*\stone'_1\red^*\stone'_2\red^*\stone_k
\red\stone_{k+1}\red\stone_{k+2}\red\cdots\red\stone_n$,
where $\stone_0\red^*\stone'_1$ is one (if the marriage is $\gamma\delta$)
or two (if the marriage is $\delta\delta$) consecutive focused sequences
on the token(s) involved in the marriage,
$\stone'_1\red^*\stone'_2$ consists only of moves on tokens not involved in
the marriage,
$\stone'_2\red^*\stone_k$ consists only of moves on tokens killed by the marriage,
and the rest is the same as $\fcone$.
Then, applying Proposition~\ref{prop:convert}.\ref{item:convert2} and
\ref{prop:convert}.\ref{item:convert3}, we obtain a transition sequence
$\fcone''\colon\stone_0\red^*\stone'_1\red\stone'_2
\red^*\stone_{k+1}\red\stone_{k+2}\red\cdots\red\stone_n$,
where $\stone_0\red^*\stone'_1$ is focused sequences on
the token(s) involved in the marriage $\stone'_1\red\stone'_2$,
i.e.\ with a thunk at its head.
If the subsequence
$\stone'_2\red^*\stone_{k+1}\red\stone_{k+2}\red\cdots\red\stone_n$
still contains marriages, apply the procedure described above inductively.

\end{proof}
\subsection{Preservation Properties}
\newcommand{\crone}{\mathit{RED}}
  Given $\netone\red\nettwo$, one can define a relation
  $\relone^{\netone}_{\nettwo}$ and prove it having certain
  bisimulation properties, as described in Section~\ref{sect:soundness}.
  The relations are actually useful here, too; in this section,
  we show various properties on a reduction and transition sequences using 
  the relations $\relone^{\netone}_{\nettwo}$.
  By definition, the relations $\relone^{\netone}_{\nettwo}$ are all defined
  as set of pairs $(\stone_\netone,\stone_\nettwo)$, where
  marriage tokens in $\stone_\netone$ have a natural counterpart
  in $\stone_\nettwo$, and \emph{vice versa}.
  The fact that $\relone^{\netone}_{\nettwo}$ are weak bisimulation
  relations implies that if
  $(\stone_\netone,\stone_\nettwo)\in\relone^{\netone}_\nettwo$ and
  $\fcone$ is a transition sequence starting in $\stone_\netone$ and
  ending in $\sttwo_\netone$, then there are some naturally defined
  sequences starting in $\stone_\nettwo$ and ending in
  $\sttwo_\nettwo$ such that
  $(\sttwo_\netone,\sttwo_\nettwo)\in\relone^{\netone}_\nettwo$. In
  this case we say that $\relone^{\netone}_{\nettwo}$ \emph{sends}
  $\fcone$ to any of those transition sequences or, simply, that
  $\fcone$ \emph{is sent} to any of them.  Conversely if $\fctwo$ is a
  transition sequence starting in $\stone_\nettwo$ and ending in
  $\sttwo_\nettwo$, then there are some naturally defined sequences
  starting from $\stone_\netone$ and ending in $\sttwo_\netone$ such
  that
  $(\sttwo_\netone,\sttwo_\nettwo)\in\relone^{\netone}_\nettwo$. In
  this case we say that $\relone^{\netone}_{\nettwo}$ \emph{sends
    back} any of these sequences to $\fcone$ or, simply, that any of
  these sequences \emph{is sent back} to $\fcone$. All this can
  be easily generalised to infinite transition sequences.
  If a transition sequence $\fcone$ in a collection $\crone$ is sent
  (sent back, respectively) to a transition $\fctwo$, not much can be
  said, in general, about whether $\fctwo\in\crone$. If, however,
  \emph{for every} $\fcone\in\crone$, \emph{there is}
  $\fctwo\in\crone$ such that $\fcone$ is sent (sent back,
  respectively) to $\fctwo$, we say that $\crone$ (or the predicate on
  transition sequences defining $\crone$) is \emph{preserved forward}
  (\emph{preserved backward}, respectively).
\begin{lemma}\label{lemma:inffocred}
  Infinite focused sequences starting in
  reachable states are preserved forward and backward.
\end{lemma}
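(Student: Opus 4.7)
The plan is to argue case-by-case on the type of the reduction $\netone \red \nettwo$, reusing the specific bisimulation relations $\relone^\netone_\nettwo$ exhibited in the proofs of Lemma~\ref{lemma:gammagammasoundness}, Lemma~\ref{lemma:gammadeltasoundness}, and Lemma~\ref{lemma:deltadeltasoundness}. Given related states $(\stone_\netone,\stone_\nettwo)\in\relone^\netone_\nettwo$ with $\stone_\netone$ reachable, the idea is to track, along a focused sequence, the identity of the single marriage token doing the moving, and exploit the fact that each such bisimulation maps tokens in $\netone$ to tokens in $\nettwo$ in a way that preserves origin.

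For the forward direction, suppose $\fcone\colon \tknone_1\fred{\stone_\netone}\tknone_2\fred{\stone_\netone}\cdots$ is an infinite focused sequence in $\tkm{\netone}$ starting in $\stone_\netone$. Each move $\tknone_i\fred{\stone_\netone}\tknone_{i+1}$ crosses either a cell which survives in $\nettwo$, in which case $f(\tknone_i)\fred{\stone_\nettwo}f(\tknone_{i+1})$ is the matching move in $\tkm{\nettwo}$, or a cell belonging to the reduced redex, in which case the corresponding state in $\nettwo$ is unchanged (the bisimulation collapses the two ports to a single one reached via the reconnected wires, or, in the $\gamma\delta$ and $\delta\delta$ cases, to the appropriate duplicate or $\varepsilon$-absorbed position). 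Since the redex region contains at most two cells, a focused token in $\netone$ can perform at most a bounded number of consecutive collapsed moves before either exiting the redex region or revisiting a state already seen (which is impossible for an infinite focused sequence, by the determinism statement preceding Lemma~\ref{lemma:invertibility}). Consequently infinitely many indices $i$ yield a genuine move in $\tkm{\nettwo}$, and the sequence of corresponding transitions is itself focused, deterministic, and infinite.

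For the backward direction, given an infinite focused sequence $\fctwo\colon \tkntwo_1\fred{\stone_\nettwo}\tkntwo_2\fred{\stone_\nettwo}\cdots$ in $\tkm{\nettwo}$, we use Lemma~\ref{lemma:invertibility} and the fact that each $\tkntwo_i$ has at least one preimage under $f$ in $\stone_\netone$. Each step of $\fctwo$ corresponds by the bisimulation to at least one internal step in $\tkm{\netone}$, namely either the same move through a shared cell, or a short burst of moves crossing the redex region (two moves for $\gamma\gamma$, moves across a $\delta_2$ or $\gamma\delta$ pair in the other two cases). Concatenating these witnesses in order produces an infinite sequence of moves of the preimage token in $\tkm{\netone}$; since all choices are forced by $\stone_\netone$-focusing and no marriages interfere (the bisimulation maps moves to moves), the resulting sequence is again focused, and hence is an infinite focused sequence starting in $\stone_\netone$.

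The main obstacle is the $\gamma\delta$ case. Here the bisimulation splits status information of the single $\delta_2$ cell in $\netone$ into two copies in $\nettwo$ according to the topmost $\lft/\rgt$ symbol of the token's stack, so the ``corresponding token'' on the $\nettwo$ side is not uniquely determined by position alone: it is determined jointly by position and stack. One must verify that the focused-sequence determinism carries over through this split, i.e.\ that along the lifted sequence the choice of copy is consistently the one dictated by the token's current configuration, and that traversals of the duplicated subtree contribute at least one genuine move per visit. Once this bookkeeping is discharged, the finiteness of the redex region again bounds the number of collapsed moves per traversal and both directions of the preservation claim follow.
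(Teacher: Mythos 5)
Your overall strategy coincides with the paper's: decompose $\netone$ into the redex and its context, observe that moves through context cells are preserved verbatim by $\relone^\netone_\nettwo$ while moves inside the redex region may be absorbed, and conclude that an infinite focused sequence survives because only finitely many consecutive moves can be absorbed. The paper phrases this as a case split on whether the sequence crosses the redex/context border infinitely or finitely often, and your ``bounded number of consecutive collapsed moves'' is essentially the contrapositive of its second case; the backward direction is handled the same way in both. So the route is the same, but there is one genuine weak point at the crux of the forward direction.

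You justify the impossibility of a token lingering forever in the redex region by appeal to ``the determinism statement preceding Lemma~\ref{lemma:invertibility}''. That lemma only says that two $\stone$-focused sequences are prefix-comparable; it does not forbid a focused sequence from revisiting a token state and cycling --- indeed a $\gamma$-vicious circle yields a perfectly deterministic infinite focused sequence, and a $\gamma\gamma$ redex whose auxiliary ports are wired to each other supports a genuine state-revisiting loop. What actually rules this out is the \emph{reachability} hypothesis, which you state in your setup but never use: a token can only sit on such a loop if it entered it, and tracing the (invertible) focused sequence backwards shows no entry point exists, so the looping configuration is unreachable. This is exactly the step the paper isolates as ``no infinite focused sequence starting in a reachable state can be constructed for a redex'', and it is the one place where the hypothesis of the lemma does real work; your proof needs that argument, not determinism, to close. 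A second, more minor imprecision: for the $\gamma\delta$ rule the redex cells are not erased but duplicated in $\nettwo$, so moves through them are not ``collapsed'' at all but map to genuine moves through the duplicates --- which only makes the forward direction easier, but means your genuine/collapsed dichotomy should be stated per rule rather than uniformly.
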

\begin{proof}
  Suppose that $\netone\red\nettwo$. Then, $\netone$ can be
  seen as a redex $\netthree$ in parallel to the rest of the
  net $\netfour$, while $\nettwo$ can be seen as a reduct $\netfive$
  in parallel to $\netfour$, graphically:
  \begin{flushleft}
    \includegraphics[scale=1.5]{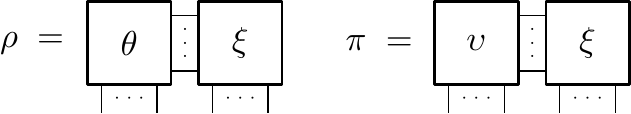}
  \end{flushleft}
  The redex $\netthree$ can be any pair of cells facing each other,
  together with wires linking some of the auxiliary ports of the two
  cells, if this is the way they are linked in $\netone$.
  Now, consider an infinite focused sequence $\fcone$ in
  $\netone$. We can have two cases:
  \begin{varitemize}
  \item
    $\fcone$ goes back and forth between $\netthree$ and $\netfour$
    \emph{infinitely} often. In this case, it is easy to realise that
    $\fcone$ is sent to an infinite focused sequence in
    $\nettwo$. Indeed, each of the infinitely many finite portions of
    it in $\netthree$ (respectively, $\netfour$) is mapped to a finite
    corresponding portion in $\netfive$ (respectively, $\netfour$).
    Moreover, each of the finite portions of $\fcone$ in $\netfour$ is
    non-empty (because if it is empty, the corresponding wire would be part
    of $\netthree$) and is sent to itself.
  \item
    $\fcone$ goes back and forth between $\netthree$ and $\netfour$
    only \emph{finitely} often. In this case, there needs to be an
    infinite suffix $\fctwo$ of $\fcone$ which lies in one of the two
    components, which however must be $\netfour$, because no infinite
    focused sequence starting in a reachable state can be constructed
    for a redex. As a consequence, $\fcone$ is sent forward to an
    infinite focused sequence in $\nettwo$, because $\fctwo$ is also
    a transition sequence of $\nettwo$.
  \end{varitemize}
  A similar argument can be given for every infinite focused
  sequence $\fcone$ in $\nettwo$.
\end{proof}
\begin{lemma}
  Thunks are preserved forward and backward.
\end{lemma}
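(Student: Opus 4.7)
The plan is to decompose a thunk into its constituents---one or two focused sequences starting at origin tokens, followed by a terminal $\gamma\delta$ or $\delta\delta$ marriage step---and to establish preservation of each piece separately along the bisimulation relations $\relone^\netone_\nettwo$ of Section~\ref{sect:soundness}. Preservation of the whole thunk then follows by concatenation.

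First, I would prove a finite analogue of Lemma~\ref{lemma:inffocred}: finite focused sequences starting in reachable states are preserved forward and backward. The argument is actually simpler than the infinite case, since there is no need to separate cases according to how often the sequence crosses the redex. One walks through the sequence step-by-step along the bisimulation, observing that each internal move in $\tkm{\netone}$ is matched by zero or one internal moves in $\tkm{\nettwo}$ (and vice versa), that these matching moves are themselves internal moves of a token involved in the focused sequence (since marriages never appear along focused sequences), and that the endpoints of matched moves coincide so the resulting chain is again a focused sequence.

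Second, I would observe that origin tokens are themselves preserved by $\relone^\netone_\nettwo$. Inspecting the definitions in Section~\ref{sect:soundness}, an origin token in $\netone$ always corresponds to an origin token in $\nettwo$: the same one in the $\gamma\gamma$ case or for cells away from the redex, the natural counterpart in the $\delta\delta$ case, and one of the two duplicated origin tokens (selected by the topmost stack symbol) in the $\gamma\delta$ case. Consequently, a focused sequence from an origin token in $\netone$ is sent to a focused sequence from an origin token in $\nettwo$. In the same spirit, the terminal marriage of a thunk is preserved: Lemmas~\ref{lemma:gammagammasoundness}, \ref{lemma:gammadeltasoundness} and \ref{lemma:deltadeltasoundness} explicitly match each $\gamma\delta$ or $\delta\delta$ internal marriage in one machine with a corresponding marriage step (possibly surrounded by silent transitions) in the other, and this match is between the endpoint tokens of the focused sequences just analysed.

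Concatenating these three ingredients yields preservation of thunks in both directions. The main obstacle is the $\delta\delta$-shaped thunk, where two focused sequences must be tracked in parallel and their endpoint pairings must align with a $\delta\delta$ marriage in the reduct; this works because $\relone^\netone_\nettwo$ respects the pairing between origin tokens sitting at the two partnering principal ports. A minor caveat, to be stated for honesty, is that when $\netone\red\nettwo$ is a $\gamma\delta$ or $\delta\delta$ rule, the thunk that \emph{realises} that very reduction (namely the initial marriage step $\isone_\netone\red\istwo_\netone$) lies outside the domain of the bisimulation, which starts at $\istwo_\netone$; preservation is therefore understood relative to thunks occurring in states reachable from $\istwo_\netone$, which is precisely the setting in which the lemma is applied in the remainder of the section.
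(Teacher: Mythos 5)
Your proposal is correct and follows essentially the same route as the paper: the paper's proof consists precisely of the observations that $\relone^\netone_\nettwo$ only relates states in which the redex's own marriage has already been performed, that origin tokens and marriage-ready configurations are put in correspondence, and that focused sequences are sent (and sent back) to focused sequences, from which preservation of thunks follows by concatenation. Your additional elaboration (the finite analogue of the focused-sequence preservation argument, and the explicit caveat about the initial marriage step lying outside the bisimulation's domain) fleshes out details the paper leaves implicit but does not change the argument.
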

\begin{proof}
  The first thing to observe here is that the left projection
  of $\relone^{\netone}_\nettwo$ only includes states in which
  the marriage corresponding to the redex leading $\netone$
  to $\nettwo$ has already been performed. Once we realise
  this, showing that thunks are preserved forward and backward
  amounts to the following observations:
  \begin{varitemize}
  \item
    First of all, $\relone^{\netone}_\nettwo$ puts origin tokens
    in correspondence with origin tokens, and states for which
    marriages are ready to states with the same property.
  \item
    Second of all, focused sequences are sent (and sent back)
    to focused sequences.
  \end{varitemize}
  This concludes the proof.
\end{proof}
\begin{lemma}
  Continuations starting in a reachable state are preserved backward
  and forward.
\end{lemma}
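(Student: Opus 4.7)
The plan is to leverage the weak bisimulation property of the relations $\relone^{\netone}_{\nettwo}$ established in Section~\ref{sect:soundness}, and argue that this bisimulation restricts well to move-only sequences. A continuation is by definition a maximal transition sequence containing no marriage step, so it consists entirely of internal move transitions on marriage tokens; preservation therefore amounts to showing that such sequences are matched on the other side by sequences of the same kind, and that maximality survives the matching.

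First I would isolate the core technical observation: in each of the three bisimulations witnessing Lemmas~\ref{lemma:gammagammasoundness}, \ref{lemma:gammadeltasoundness}, and \ref{lemma:deltadeltasoundness}, every internal move transition on one side is matched by a weak internal transition on the other side that is itself move-only. This is essentially inspection of the transition-matching cases in those proofs. For the $\gamma\gamma$ bisimulation, a move in $\netone$ is matched either by no change in $\nettwo$ (when the token in $\nettwo$ has been pushed back to its origin) or by the symmetric move. For the $\gamma\delta$ and $\delta\delta$ bisimulations, started from the post-marriage state $\istwo_{\netone}$, the pairing sends tokens with top symbol $\lft$ or $\rgt$ on their $\gamma$-stack to the corresponding copy in $\nettwo$, and sends each single move transition to a single move transition: all status tokens needed for traversal are already present on both sides and no marriage step is ever triggered by a move.

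Next I would invoke maximality. Given a continuation $\fcone$ in $\tkm{\netone}$ starting from a reachable $\stone_{\netone}$, choose $\stone_{\nettwo}$ with $(\stone_{\netone},\stone_{\nettwo})\in\relone^{\netone}_{\nettwo}$; by the previous step the bisimulation sends $\fcone$ to a move-only sequence $\fctwo$ from $\stone_{\nettwo}$. If $\fctwo$ could be properly extended by a further internal move in $\tkm{\nettwo}$, then by the converse clause of the bisimulation that move is mirrored by an internal transition of $\tkm{\netone}$ from the last state of $\fcone$ which, by the same isolation argument, is again a move, contradicting maximality of $\fcone$. Hence $\fctwo$ is maximal, i.e., a continuation. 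The argument for backward preservation is symmetric, swapping the two clauses of the bisimulation throughout.

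The main obstacle is the isolation step: one must check, case by case in the three soundness lemmas, that the simulating step of a move transition is never a marriage. The only subtlety arises when the pre-image of the traveling marriage token under $\relone^{\netone}_{\nettwo}$ lies on one of the new wires produced by the redex (for $\gamma\gamma$) or sits at the principal port of a married copy of a $\delta_2$ cell (for $\gamma\delta$ and $\delta\delta$); in all such cases the matching step is either a symmetric single move or no step at all, never a marriage, because the marriage-status tokens fixing the ``civil state'' of the involved cells on both sides are inflationary by Lemma~\ref{lemma:statusinflationary} and were in place already in $\stone_{\netone}$ and $\stone_{\nettwo}$.
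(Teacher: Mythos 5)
The paper states this lemma without any proof at all (it is only ``observed'' again at the start of the proof of Lemma~\ref{lem:infContPreservation}), so there is no official argument to compare yours against; your proposal supplies exactly the kind of reasoning the surrounding lemmas rely on, and its two-step structure --- (i) inspect the bisimulations $\relone^{\netone}_{\nettwo}$ of Lemmas~\ref{lemma:gammagammasoundness}--\ref{lemma:deltadeltasoundness} to check that a move is never matched by a marriage, (ii) transfer maximality via the converse clause --- is the natural one and is consistent with how the paper uses the result.

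The one step I would ask you to tighten is the maximality transfer. The relations are \emph{weak} bisimulations, so when the image sequence $\fctwo$ admits a further move $m$ in $\tkm{\nettwo}$, the converse clause only yields a weak, possibly \emph{empty}, matching transition from the last state of $\fcone$; an empty match gives no contradiction with maximality of $\fcone$. This is not hypothetical: in the $\gamma\delta$ and $\delta\delta$ relations, marriage and matching tokens are related through their \emph{origins} rather than their current positions, so a single move can indeed be matched by zero steps on the other side. You can repair this either by strengthening your ``isolation'' step to show that a genuine move on the $\nettwo$ side is always matched by at least one move on the $\netone$ side (true for these particular relations, but it is the thing to check, not a consequence of weak bisimilarity alone), or more robustly by exploiting the freedom in the definition of ``is sent to'': take the natural move-only image of $\fcone$ and then extend it by further moves to a maximal marriage-free sequence --- since moves do not alter token origins, the end state stays related to that of $\fcone$, and the extension terminates for finite continuations by Lemma~\ref{lemma:inffocred}. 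With that adjustment the argument goes through, and the symmetric treatment of the backward direction is fine.
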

\begin{lemma}\label{lem:infContPreservation}
  Infinite continuations starting in a reachable state are preserved
  backward and forward.
\end{lemma}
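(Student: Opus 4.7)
The plan is to reduce the claim to Lemma~\ref{lemma:inffocred} by exploiting the fact that a continuation, by definition, contains no marriages and thus consists solely of move transitions on individual tokens.

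First, I would observe that along any infinite continuation $\fcone$ starting in a reachable state $\stone_\netone$, the multiset of tokens remains \emph{invariant}: a move transition merely relocates a single token and updates its stacks, and no marriage ever occurs along $\fcone$. Hence, for each token $\tknone$ present in $\stone_\netone$, the sub-sequence of transitions of $\fcone$ acting on $\tknone$ is itself a (possibly empty, finite, or infinite) focused sequence in the sense of Section~\ref{sect:MTM}. Since the multiset of tokens is finite while $\fcone$ has infinitely many steps, the pigeonhole principle yields at least one token $\tknone_0$ whose focused sub-sequence is infinite.

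Next, I would apply Lemma~\ref{lemma:inffocred} to this infinite focused sub-sequence. Because $\stone_\netone$ is related to some state $\stone_\nettwo$ of $\tkm{\nettwo}$ via $\relone^{\netone}_{\nettwo}$, the token $\tknone_0$ has a natural counterpart, and the lemma provides a corresponding infinite focused sequence for that counterpart in $\tkm{\nettwo}$. For every other token of $\stone_\netone$, the focused sub-sequence of $\fcone$ acting on it (finite or infinite) is likewise sent to a focused sub-sequence in $\tkm{\nettwo}$, either by the finite case (already covered by the preceding continuations lemma) or again by Lemma~\ref{lemma:inffocred}. Finally, I would interleave all these preserved sub-sequences into a single transition sequence in $\tkm{\nettwo}$, ensuring fairness towards the infinite sub-sequence of $\tknone_0$'s counterpart. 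This yields an infinite sequence of moves, i.e.\ an infinite continuation, whose intermediate states are related to those of $\fcone$ through the pointwise extension of $\relone^{\netone}_{\nettwo}$. The backward direction is symmetric, with $\netone$ and $\nettwo$ swapped.

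The hard part will be justifying the interleaving step: one must check that at every stage the next transition chosen by the schedule is actually enabled in the current state, i.e.\ that rearranging moves of distinct tokens does not break enabledness conditions. This is settled by Proposition~\ref{prop:convert}, clause~\ref{item:convert1}, which states that moves on different tokens commute, together with the remark that move transitions neither consume nor create the status tokens on which $\delta_2$ crossings depend. A fair round-robin-like schedule that repeatedly advances $\tknone_0$ while also advancing every other token's sub-sequence a finite prescribed amount then produces the required infinite continuation in $\tkm{\nettwo}$.
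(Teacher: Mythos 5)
Your proposal is correct and follows essentially the same route as the paper: both decompose the marriage-free continuation into per-token focused sub-sequences, observe that at least one of these finitely many sub-sequences must be infinite, and invoke Lemma~\ref{lemma:inffocred} to conclude. The only difference is presentational — the paper argues by contradiction (an infinite continuation cannot be sent to a finite one), whereas you construct the image continuation directly and justify the interleaving via Proposition~\ref{prop:convert}, a step the paper leaves implicit.
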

\begin{proof}
  We can first of all observe that continuations are preserved
  backward and forward. Now, suppose by way of contradiction, that an
  \emph{infinite} continuation $\fcone$ is sent to a \emph{finite}
  continuation $\fctwo$. Since in any continuation, marriages are not
  allowed to happen, it means that $\fcone$ (respectively, $\fctwo$)
  can be seen as the interleaving of a finite number of focused
  reductions $\fcone_1,\ldots,\fcone_n$ (respectively,
  $\fctwo_1,\ldots,\fctwo_n$) such that $\fcone_i$ is mapped to
  $\fctwo_i$. Now, at least one of the $\fcone_i$, say $\fcone_j$ must
  be infinite, otherwise $\fcone$ would be finite itself.  By
  Lemma~\ref{lemma:inffocred}, $\fctwo_j$ is thus infinite, and as a
  consequence the whole of $\fctwo$ is infinite.
\end{proof}
\begin{proposition}\label{prop:prescompone}
  Suppose that $\netone\pred{a}\nettwo$ where
  $a\in\{\delta\delta,\gamma\delta\}$. If $\tkm{\nettwo}$ has a
  finite standard computation with $n$ thunks then $\tkm{\netone}$ has
  a finite standard computation with $n+1$ thunks.
\end{proposition}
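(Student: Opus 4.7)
The plan is to produce the required standard computation of $\tkm{\netone}$ by translating the given one of $\tkm{\nettwo}$ through the weak bisimulation provided by the soundness lemmas, and then prepending a single extra thunk corresponding to the very reduction $\netone\red\nettwo$.

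First, depending on whether $a = \gamma\delta$ or $a = \delta\delta$, I would invoke Lemma~\ref{lemma:gammadeltasoundness} or Lemma~\ref{lemma:deltadeltasoundness} to obtain a state $\istwo_\netone$ with $\isone_\netone\red\istwo_\netone$ by an internal marriage, together with a relation $\relone^\netone_\nettwo$ witnessing $(\fmsts{\tkns^\netone},\red,\istwo_\netone)\approx\tkm{\nettwo}$. The key structural observation is that $\isone_\netone\red\istwo_\netone$ is itself a thunk: in the initial state of any net, every marriage token sits at its origin, a principal port of a $\delta_2$ cell, so the main token(s) of the chosen marriage coincide with origin tokens. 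Hence the single marriage step, preceded by trivial length-one focused sequences at those origin tokens, meets the definition of a $\gamma\delta$ or $\delta\delta$ thunk. This accounts for the extra thunk in the ``$n+1$'' count.

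Given the finite standard computation $\fcone_1,\ldots,\fcone_n,\fcthree$ of $\tkm{\nettwo}$, I would then apply the backward preservation of thunks and of continuations (from the preceding subsection) iteratively along $\relone^\netone_\nettwo$ to obtain a matching sequence $\fcone'_1,\ldots,\fcone'_n,\fcthree'$ in $(\fmsts{\tkns^\netone},\red,\istwo_\netone)$, with each $\fcone'_i$ a thunk and $\fcthree'$ a continuation. Concatenating the initial thunk $\isone_\netone\red\istwo_\netone$ with this sequence yields a standard computation of $\tkm{\netone}$ with exactly $n+1$ thunks.

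The delicate point is verifying that $\fcthree'$ is \emph{finite}, so the overall computation is finite. I would argue by contradiction: were $\fcthree'$ infinite, Lemma~\ref{lem:infContPreservation} would ship it forward to an infinite continuation in $\tkm{\nettwo}$ from a state $\relone^\netone_\nettwo$-related to the endpoint of $\fcone'_n$, hence essentially from the endpoint of $\fcone_n$ itself. But $\fcthree$ is a maximal finite marriage-free sequence from that state; and since continuations consist only of moves and the set of marriage status tokens is frozen during any continuation, a token's trajectory depends solely on the current state, so maximal continuations from any fixed state are either all finite or all infinite, contradicting the simultaneous existence of $\fcthree$ and the supposed infinite continuation. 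This finiteness bookkeeping is the main obstacle; everything else is a clean application of the preservation lemmas together with the structural observation that the initial marriage constitutes a thunk.
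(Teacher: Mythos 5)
Your proposal is correct and follows essentially the same route as the paper: invoke Lemma~\ref{lemma:gammadeltasoundness} or Lemma~\ref{lemma:deltadeltasoundness} to get the marriage step $\isone_\netone\red\istwo_\netone$ and the relation $\relone^\netone_\nettwo$, observe that this initial marriage on origin tokens is itself the extra thunk, and pull the given standard computation back through backward preservation of thunks and continuations. The paper compresses all of this into ``the thesis easily follows''; your explicit treatment of the finiteness/maximality of the pulled-back continuation fills in a detail the paper leaves implicit, but it is the same argument.
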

\begin{proof}
  By Lemma~\ref{lemma:gammadeltasoundness} and
  Lemma~\ref{lemma:deltadeltasoundness}, there is $\istwo_\netone$
  such that $\isone_\netone\red\istwo_\netone$ and
  $(\istwo_\netone,\isone_\nettwo)\in\relone^\netone_\nettwo$. Since
  thunks and continuations are preserved backward, the thesis easily
  follows.
\end{proof}
\begin{proposition}\label{prop:prescomptwo}
  Suppose that $\netone\pred{\gamma\gamma}\nettwo$.
  If $\tkm{\nettwo}$ has a standard computation with $n$ thunks, then
  $\tkm{\netone}$ also has one.
\end{proposition}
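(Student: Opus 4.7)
The plan is to exploit Lemma~\ref{lemma:gammagammasoundness} together with the preservation lemmas for thunks and continuations proved just above. Lemma~\ref{lemma:gammagammasoundness} gives us $\tkm{\netone}\approx\tkm{\nettwo}$, witnessed by a weak bisimulation $\relone^{\netone}_{\nettwo}$ that already relates the initial states $\isone_\netone$ and $\isone_\nettwo$ directly. The crucial contrast with Proposition~\ref{prop:prescompone} is that no preliminary internal marriage is required in $\tkm{\netone}$ to enter a state related to $\isone_\nettwo$, so the thunk count will be preserved exactly rather than incremented by one.

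Given a standard computation $\fcone_1,\ldots,\fcone_n,\fctwo$ of $\tkm{\nettwo}$ with $n$ thunks and (possibly infinite) continuation $\fctwo$, I would proceed by induction on $i=1,\ldots,n$. Starting from the related pair $(\isone_\netone,\isone_\nettwo)$, backward preservation of thunks sends $\fcone_1$ back to a thunk $\fcone'_1$ of $\tkm{\netone}$ whose resulting state is related to the state reached in $\tkm{\nettwo}$ after $\fcone_1$. Iterating on $\fcone_2,\ldots,\fcone_n$ produces thunks $\fcone'_2,\ldots,\fcone'_n$ in $\tkm{\netone}$, each starting from the related state produced by its predecessor. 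At the tail, the continuation $\fctwo$ begins in a reachable state, so backward preservation of continuations (using Lemma~\ref{lem:infContPreservation} should $\fctwo$ happen to be infinite) sends it back to a continuation $\fctwo'$ in $\tkm{\netone}$. Concatenating $\fcone'_1,\ldots,\fcone'_n,\fctwo'$ yields the desired standard computation of $\tkm{\netone}$ with exactly $n$ thunks. The purely infinite case, in which the standard computation of $\tkm{\nettwo}$ is an infinite sequence of thunks with no terminating continuation, is handled by the same inductive step applied $\omega$ times.

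The step that requires the most care is verifying that thunk and continuation boundaries are preserved, rather than a thunk being absorbed into a continuation or vice versa. This amounts to checking two things: first, that the $\gamma\gamma$ bisimulation introduces no extra marriage on the $\netone$ side that would need to be accommodated as an additional thunk; second, that maximal marriage-free segments really correspond to maximal marriage-free segments under $\relone^{\netone}_{\nettwo}$. Both are immediate from the structure of the relation used in Lemma~\ref{lemma:gammagammasoundness}: since the $\gamma\gamma$ redex involves no $\delta_2$ cell, no marriage transition is ever triggered in its neighbourhood on either side, and the backward preservation lemmas for thunks and continuations transport the relevant segments without altering their marriage-free character or creating spurious marriages.
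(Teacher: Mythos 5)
Your proposal is correct and follows essentially the same route as the paper: invoke Lemma~\ref{lemma:gammagammasoundness} to obtain $\tkm{\netone}\approx\tkm{\nettwo}$ with the initial states directly related (no preliminary marriage, in contrast to Proposition~\ref{prop:prescompone}), then transport the standard computation backward thunk by thunk and finally the continuation, using the backward-preservation lemmas. The paper states this in two sentences; your write-up is simply a more detailed unfolding of the same argument.
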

\begin{proof}
  By Lemma~\ref{lemma:gammagammasoundness}, it holds
  that $\tkm{\netone}\approx\tkm{\nettwo}$. Since thunks and continuations
  are preserved backward, the thesis easily follows.
\end{proof}
\begin{proposition}\label{prop:finstdzerofinite}
  If $\tkm{\netone}$ has a finite standard computation which is a
  continuation, then $\netone\red^*\nettwo$, where 
  $\nettwo$ is in normal form and $\nettwo\notin\netvic$.
\end{proposition}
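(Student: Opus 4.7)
The plan is to construct a specific reduction sequence $\netone \red^* \nettwo$ to a normal form not in $\netvic$, guided by the finite-continuation hypothesis. I would first eliminate all $\gamma\gamma$ redexes of $\netone$: each such reduction strictly decreases the number of $\gamma$ cells, so this terminates in a net $\nettwo$ without $\gamma\gamma$ redexes, and by iterated application of Lemma~\ref{lemma:gammagammasoundness} together with Proposition~\ref{prop:prescomptwo} (used in the direction that preserves the standard-computation structure), we obtain $\tkm{\nettwo} \approx \tkm{\netone}$, so that $\tkm{\nettwo}$ inherits a finite standard computation which is itself a continuation.

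I would then argue that $\nettwo$ contains no vicious circle. A vicious circle embeds a cyclic path through $\gamma$ cells, so if any marriage token's focused path in the finite continuation were to enter such a cycle, that path would loop indefinitely, producing an infinite focused sequence and therefore an infinite continuation by Lemma~\ref{lemma:inffocred}; this contradicts the finiteness assumption. The fact that $\gamma\gamma$ redexes have already been removed rules out the cycle being hidden behind a $\gamma\gamma$-reducible configuration.

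Third, I would show that $\nettwo$ admits no $\delta\delta$ or $\gamma\delta$ redex, so that it is in normal form. If such a redex existed, the marriage tokens sitting initially at the facing principal ports could not take part in any non-marriage move, so the end state of the finite continuation would still contain them in the marriage-ready configuration. Reducing this redex via Lemma~\ref{lemma:deltadeltasoundness} or Lemma~\ref{lemma:gammadeltasoundness} yields a strictly simpler net whose token machine is bisimilar (modulo one internal marriage) to $\tkm{\nettwo}$, and a well-founded induction—on a lexicographic measure combining, say, total cell count with the length of the finite continuation—lets us descend to a normal form while preserving the absence of vicious circles at each stage.

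The main obstacle I expect is this last step, because the hypothesis governs only the $0$-thunk standard computation from $\isone_\netone$, whereas Lemmas~\ref{lemma:deltadeltasoundness} and~\ref{lemma:gammadeltasoundness} provide a bisimulation between the reduct's initial machine and the redex's machine only \emph{after} an internal marriage transition. Transferring the finite continuation-only property across a $\delta$-reduction therefore requires a careful analysis of where the tokens that participated in the simulated marriage were at the end of the original continuation, and of how the remaining tokens' paths project onto the reduct; making this projection precise, and verifying that the induction measure genuinely decreases, is the delicate part of the argument.
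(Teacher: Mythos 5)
Your first two steps track the paper's proof: normalise all $\gamma\gamma$ redexes (the paper uses the decreasing number of wires as termination measure; the number of $\gamma$ cells works just as well), transfer the finite continuation to $\tkm{\nettwo}$, and exclude vicious circles on the grounds that a $\gamma$-cycle would force an infinite marriage-free computation. For that last point you should argue that a vicious circle \emph{necessarily} captures a token --- it is by definition fed by a tree containing $\delta_2$ cells, whose origin tokens are emitted at the start and are funnelled into the $\gamma$-loop --- rather than only describing what happens \emph{if} some token enters the cycle; otherwise you have shown that no token loops, not that no cycle exists.

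The genuine gap is in your third step. You correctly observe that if $\nettwo$ still had a $\delta\delta$ or $\gamma\delta$ redex, the origin marriage tokens at the facing principal ports could not move anywhere useful and would still sit there, marriage-ready, at the end of the finite continuation. But that observation already closes the case: the continuation in the hypothesis is \emph{maximal} (it arises, via the Standardisation Lemma, from a finite maximal computation, so its final state admits no transition at all), whereas the end state you have just described still enables a marriage. This contradiction shows that $\nettwo$ has no $\delta\delta$ or $\gamma\delta$ redex, i.e.\ it is \emph{already} in normal form --- no further reduction and no induction are needed, which is exactly how the paper concludes (phrased there as ``thunks are preserved backward''). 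Instead of drawing this contradiction, you launch a descent that reduces the $\delta$-redex and tries to push the zero-thunk continuation through Lemmas~\ref{lemma:deltadeltasoundness} and~\ref{lemma:gammadeltasoundness}. As you yourself suspect, this cannot work: those lemmas relate the reduct's machine to the redex's machine only \emph{after} an internal marriage transition, so they convert an $n$-thunk standard computation of the reduct into an $(n{+}1)$-thunk one of the redex (Proposition~\ref{prop:prescompone}); a computation with \emph{zero} thunks never reaches the post-marriage state where that bisimulation begins, so there is nothing for your lexicographic measure to recurse on. The descent over $\delta$-reductions is the business of Proposition~\ref{prop:finstdsucc}, which handles the case of $n{+}1$ thunks, not of this proposition.
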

\begin{proof}
  Consider the net $\nettwo$ obtained by reducing all $\gamma\gamma$
  redexes in $\netone$, in an arbitrary order. This process of
  course terminates, because each reduction step makes the number of
  wires in the net to decrease. Now:
  \begin{varitemize}
  \item
    $\nettwo$ is in normal form. If it contains a $\delta\delta$ or
    $\delta\gamma$ redex, indeed, $\tkm{\nettwo}$ would admit a
    standard computation with at least one thunk. Since thunks are
    preserved backward, one would get the same for $\tkm{\netone}$.
    This is however incompatible with $\tkm{\netone}$ having a
    finite standard computation which is a continuation.
  \item
    $\nettwo$ does not contain any $\gamma$-vicious circle.
    If it contains one, indeed, $\tkm{\nettwo}$ would admit an \emph{infinite}
    standard computation which is a continuation, and then also
    $\tkm{\netone}$ would admit one. This is again incompatible with
    $\tkm{\netone}$ having a finite standard computation which is
    a continuation.
  \end{varitemize}
  This concludes the proof.
\end{proof}
\begin{proposition}\label{prop:finstdsucc}
  If $\tkm{\netone}$ has a standard computation with $n+1$ thunks and
  a continuation $\fcone$ (respectively, with infinitely many thunks),
  then $\netone\red^+\nettwo$, where $\tkm{\nettwo}$ has a standard
  computation with $n$ thunks and a continuation
  $\fctwo$ (respectively, with infinitely many thunks). Moreover
  $\fcone$ is infinite iff $\fctwo$ is.
\end{proposition}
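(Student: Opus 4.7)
The plan is to peel off the first thunk of the given standard computation as a reduction step of $\netone$, use the corresponding soundness lemma to cross over to $\tkm{\nettwo}$, and then transfer the rest of the computation via the preservation results.

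First I would consider the first thunk $\fcone_1$, which by definition terminates in an internal $\delta\delta$ or $\gamma\delta$ marriage. The rules in Figure~\ref{fig:martrans} can fire only when the two principal ports carrying the participating marriage tokens are connected by a wire, i.e., when the two cells face one another through principal ports. Hence $\fcone_1$ exhibits a redex of the matching kind in $\netone$, and reducing it produces a net $\nettwo$ with $\netone \red \nettwo$, so in particular $\netone \red^+ \nettwo$ as required.

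Next I would apply Lemma~\ref{lemma:deltadeltasoundness} or Lemma~\ref{lemma:gammadeltasoundness} (matching the kind of $\fcone_1$'s marriage) to obtain a state $\istwo_\netone$ with $\isone_\netone \red \istwo_\netone$ via the canonical marriage at this redex, together with a weak bisimulation $\relone^\netone_\nettwo$ witnessing $(\fmsts{\tkns^\netone},\red,\istwo_\netone) \approx \tkm{\nettwo}$. Using Proposition~\ref{prop:convert}, I would rearrange the first thunk so that the canonical marriage fires at the very start (landing at $\istwo_\netone$) and the focused sub-sequences are re-run afterwards as marriage-free internal transitions; the state reached at the end of $\fcone_1$ thus becomes reachable from $\istwo_\netone$ by internal moves, and hence is $\relone^\netone_\nettwo$-related to some state $\stone'$ of $\tkm{\nettwo}$ reachable from $\isone_\nettwo$ by marriage-free internal moves. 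By forward preservation of thunks and of continuations, and using Lemma~\ref{lem:infContPreservation} for infinite continuations, the remainder $\fcone_2,\ldots,\fcone_{n+1},\fctwo$ (or its infinite counterpart) is sent to a standard sequence in $\tkm{\nettwo}$ starting at $\stone'$; prepending the marriage-free internal transitions from $\isone_\nettwo$ to $\stone'$ yields a standard computation of $\tkm{\nettwo}$ with exactly $n$ thunks (respectively infinitely many) and a continuation $\fctwo$ whose finiteness matches that of $\fcone$.

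The main obstacle will be the commutation argument that brings the canonical marriage of the soundness lemma to the very beginning of the first thunk: the focused moves inside $\fcone_1$ act on the very tokens that its terminating marriage consumes, so the canonical marriage (on the origin tokens of the redex, as used in Lemmas~\ref{lemma:deltadeltasoundness}/\ref{lemma:gammadeltasoundness}) is in general a different marriage step from the one concluding $\fcone_1$. Reconciling the two requires a careful application of Proposition~\ref{prop:convert}.\ref{item:convert3}, either absorbing each preceding focused move into the canonical marriage or arguing directly that the resulting states are $\relone^\netone_\nettwo$-related to the same state of $\tkm{\nettwo}$.
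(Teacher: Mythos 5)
There is a genuine gap at your very first step. You claim that the internal marriage closing the first thunk can only fire when the two participating cells face one another through a wire, and hence that the thunk exhibits a redex of $\netone$. This is not how the machine works: marriages are \emph{nonlocal} (the paper stresses this point when comparing its machines to Petri nets). The main tokens of a marriage may have \emph{travelled} from their origins before the marriage fires; in particular, a marriage token emitted by a $\delta_2$ cell can cross a $\gamma\gamma$ redex --- entering the first $\gamma$ cell through an auxiliary port and pushing a symbol, then entering the facing $\gamma$ cell through its principal port and popping that symbol --- thereby returning to an empty configuration and subsequently marrying a cell that does \emph{not} face its origin cell in $\netone$. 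So the marriage ending the first thunk identifies a redex of the $\gamma\gamma$-normal form of $\netone$, not necessarily of $\netone$ itself, and the single step $\netone\red\nettwo$ you assert need not exist; this is precisely why the statement is phrased with $\netone\red^+\nettwo$ rather than $\netone\red\nettwo$. The paper's proof therefore begins by reducing \emph{all} $\gamma\gamma$ redexes of $\netone$ (a terminating process, since the number of wires decreases), obtaining $\netone\red^*\netthree$; by Lemma~\ref{lemma:gammagammasoundness} together with the forward/backward preservation of thunks and continuations, $\tkm{\netthree}$ still has a standard computation with $n+1$ thunks and a continuation. Only in $\netthree$, which has no $\gamma\gamma$ redex, does the first thunk's marriage correspond to an actual $\delta\delta$ or $\gamma\delta$ redex, yielding $\netthree\red\nettwo$.

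Your second difficulty --- permuting the ``canonical'' marriage of Lemmas~\ref{lemma:gammadeltasoundness} and \ref{lemma:deltadeltasoundness} to the head of the first thunk --- is indeed unresolvable as you set it up: Proposition~\ref{prop:convert} only commutes a marriage past moves of tokens \emph{not involved} in it (or killed by it), whereas the focused moves inside the thunk act on the very tokens that must reach marrying position, so the marriage cannot fire before them. After the $\gamma\gamma$-normalisation this problem dissolves rather than needing to be solved: in $\netthree$ the main tokens of the first marriage are necessarily still at their origins (any traversal of a $\gamma$ cell would leave a non-empty stack with no $\gamma\gamma$ redex available to pop it, as in the analysis of Proposition~\ref{prop:infCompImpliesVC}), so the first thunk essentially \emph{is} the marriage transition $\isone_\netthree\red\istwo_\netthree$ used in the soundness lemmas. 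Forward preservation of thunks and continuations then transports the remaining $n$ thunks and the continuation to $\tkm{\nettwo}$, preserving (in)finiteness. The remainder of your outline --- using the relations $\relone^\netone_\nettwo$ and the preservation lemmas --- does match the paper's argument.
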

\begin{proof}
  First of all, let us consider the net $\netthree$ one obtains by
  reducing all $\gamma\gamma$ redexes in $\netone$, in an arbitrary
  order. This process of course terminates, because each reduction
  step makes the number of wires in the net to decrease. The net
  $\netthree$ has itself a standard computation with $n+1$ thunks
  and a continuation (respectively, it has infinitely many thunks).
  Since $\netthree$ only contains $\delta\delta$ or $\delta\gamma$
  redexes, the first thunk in the standard computation corresponds
  to one of redex of one of those kinds. As a consequence
  there is $\nettwo$ such that $\netthree\red\nettwo$. Moreover,
  by Lemma~\ref{lemma:gammadeltasoundness} and
  Lemma~\ref{lemma:deltadeltasoundness}, $\nettwo$ admits
  a standard computation $\fctwo$ with $n$ thunks and a continuation
  (respectively, with infinitely many thunks), which is
  infinite iff $\fcone$ is.
\end{proof}
\begin{proposition}\label{prop:netnormalform}
  Let $\netone$ be a net in normal form.
  If 
  $\netone\notin\netvic$,
  then $\tkm{\netone}$ only admits finite computations. Otherwise, all
  maximal computation of $\tkm{\netone}$ are infinite.
\end{proposition}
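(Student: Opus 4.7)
The plan is to exploit the drastic simplification a normal-form net forces upon $\tkm{\netone}$. Since $\netone$ has no redex, no pair of principal ports is directly connected, and hence neither of the internal marriage rules of Figure~\ref{fig:martrans} can ever fire. By Lemma~\ref{lemma:statusinflationary} status tokens are inflationary, and since internal marriages are the sole producers of \emph{married} status tokens, no married status token is ever present in a state reachable from $\isone_\netone$. The $\delta_2$-cross rules of Figure~\ref{fig:movtrans} all need a married status token in their premise, so they are disabled as well. Every internal transition of $\tkm{\netone}$ is therefore a $\gamma$-cross move (possibly preceded by a hop along a wire) of a single marriage token, and the marriage tokens evolve independently of one another along focused sequences inside the $\gamma$-part of $\netone$.

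For the first implication, assume $\netone\notin\netvic$ and suppose towards contradiction that some computation of $\tkm{\netone}$ is infinite. Since $\isone_\netone$ contains only finitely many marriage tokens, by pigeonhole at least one token $\tknone$ takes infinitely many steps, tracing an infinite walk in the finite $\gamma$-subgraph of $\netone$. Some port is then visited twice, yielding a closed subwalk made purely of $\gamma$ cells and wires. Inspecting the push/pop pattern of the $\gamma$-cross rule along this closed subwalk, one locates inside $\netone$ a subnet of the shape of Figure~\ref{fig:vcs}, namely a $\gamma$-tree closed up by a wiring. This contradicts $\netone\notin\netvic$, so every marriage token has a finite focused trajectory and, by finitely interleaving these, every computation of $\tkm{\netone}$ is finite.

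For the second implication, assume $\netone$ contains a vicious circle $\sigma$. The goal is to exhibit a marriage token $\tknone$ whose focused sequence, starting at its $\delta_2$-principal-port origin, eventually enters $\sigma$ and then keeps circling through it, its $\gamma$-stack growing without bound. Its existence rests on the connectivity of $\sigma$ with some $\delta_2$ principal port through wires and $\gamma$ cells, together with a stack-pumping argument showing that once inside $\sigma$ the $\gamma$-cross move is always enabled. Consequently, at every state reachable from $\isone_\netone$ the token $\tknone$ can perform one more $\gamma$-cross move, so no finite transition sequence from $\isone_\netone$ is maximal and every maximal computation is infinite.

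The main obstacle is the forward direction: promoting the combinatorial observation that some port is revisited to the precise combinatorial shape of a vicious circle as defined in the paper. One has to argue, using the fact that the closed subwalk lives inside the $\gamma$-subgraph, that it can be reorganised into a $\gamma$-tree together with a wiring $\omega$ closing it up. A secondary technical subtlety is the connectivity step in the converse direction, namely that in a normal-form net a vicious circle is always accessible from some $\delta_2$ principal port by a marriage token starting with empty stacks.
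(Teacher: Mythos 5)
Your overall strategy coincides with the paper's: in a normal-form net no marriage can ever fire, so marriage tokens only move through $\gamma$ cells, and the dichotomy reduces to whether a token can circulate forever through $\gamma$ cells, i.e.\ to the presence of a vicious circle. However, the step you flag as ``the main obstacle'' is not a deferrable technicality — it is the heart of the argument, and your closed-subwalk observation does not close it. A closed walk in the $\gamma$-subgraph may go \emph{up} a $\gamma$-tree (auxiliary to principal port, pushing) and then come back \emph{down} it (principal to auxiliary, popping); such a walk revisits ports without exhibiting anything of the shape of Figure~\ref{fig:vcs}. The missing idea, which is exactly what the paper's proof supplies, is the invariant that in a normal-form net a travelling marriage token can never land on a principal port: it starts out going out of a $\delta_2$ principal port, and whenever it is going out of a principal port the wire it crosses cannot end in another principal port, since that would expose a redex. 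Hence the token only ever enters $\gamma$ cells at auxiliary ports and exits at their principal ports; the trajectory is monotone ``upward'', and an infinite trajectory in a finite net is forced onto a directed cycle of aux-to-principal $\gamma$-traversals, which is precisely a $\gamma$-vicious circle. Note that this same invariant is also needed to justify your opening claim that no internal marriage fires: marriages are \emph{nonlocal} (a token can marry a cell far from its origin once it sits on a port wired to that cell's principal port with the right configuration), so ``no pair of principal ports is directly connected'' is not by itself sufficient — you need that a travelling token never reaches a port wired to a principal port.

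Your two worries about the converse direction largely dissolve. Accessibility of the circle from a $\delta_2$ principal port is built into the \emph{definition} of a vicious circle here (the subnet of Figure~\ref{fig:vcs} includes the tree of cells, containing $\delta_2$ cells, that feeds the wiring), so there is nothing to prove; and the ``stack-pumping'' is immediate, because circling the loop only ever pushes onto the first stack, so the $\gamma$-cross move is always enabled, and the token can never be killed since no marriage occurs. This gives, as you say, that no reachable state is terminal, hence every maximal computation is infinite.
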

\begin{proof}
Assume $\netone\notin\netvic$.
In this situation, a marriage token cannot hit any principal port,
since it means there remains a redex (and thus there will be no marriage).
Thus a marriage token can only hit auxiliary ports one by one;
if it hits an auxiliary port of a $\delta_2$ cell, it stops.
Moreover there is no way to make it able to move on, since no marriage can take place.
Therefore the only possibility for a token to travel infinitely is to go through $\gamma$ cells infinitely many times, each time from an auxiliary port to the principal port.
However, since $\netone$ is finite, this means the token is trapped into a loop consisting of $\gamma$ cells, which means there is a $\gamma$-vicious circle, which in turn contradicts to $\netone\notin\netvic$.
Hence a marriage token can perform transitions only finitely many times.
Considering that the number of marriage tokens is also finite and it does not increase, all computations of $\tkm{\netone}$ are finite.

Assume $\netone\in\netvic$. Recall it means $\netone$ contains a subnet in the following form:

    \begin{center}
      \includegraphics[scale=1.2]{viciouscircle}
    \end{center}

Since $\nu$ contains one or more $\delta_2$ cells,
they emit marriage tokens and
at least one of them reaches $\sigma$ since $\nu$ is a tree.
It reaches one of the leaves of $\tau$ since $\sigma$ is just a wiring.
Finally, the token necessarily reaches the root of $\tau$ and is feeded back to $\sigma$, since $\tau$ consists only of $\gamma$ cells; this continues forever.
By the same reason as the case of $\netone\notin\netvic$, there will not be any marriage, hence the infinite trip described above cannot end by being killed.
\end{proof}

\begin{proposition}\label{prop:infCompImpliesVC}
If $\tkm{\netone}$ has an infinite standard computation which is a
continuation, then there exists $\nettwo$ satisfying
$\netone \red^* \nettwo$ solely by $\gamma\gamma$ reductions and
$\nettwo \in \netvic$.
\end{proposition}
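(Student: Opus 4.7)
The plan is to argue in three stages: first fully exhaust the $\gamma\gamma$ redexes of $\netone$, then transfer the infinite continuation hypothesis from $\tkm{\netone}$ to the machine of the resulting net, and finally prove that the infinite continuation we have obtained forces the presence of a vicious circle.

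For the first stage, I would pick any maximal reduction sequence $\netone \red^* \nettwo$ performing only $\gamma\gamma$ rewrites. This process terminates because every $\gamma\gamma$ step destroys two cells and introduces none, so the number of cells of the net is a strictly decreasing, well-founded measure along a purely $\gamma\gamma$ reduction. By maximality, $\nettwo$ contains no $\gamma\gamma$ redex. For the second stage, iterating Lemma~\ref{lemma:gammagammasoundness} along this sequence yields $\tkm{\netone}\approx\tkm{\nettwo}$; since the initial state is trivially reachable, Lemma~\ref{lem:infContPreservation}, which preserves infinite continuations both forward and backward under reduction, guarantees that $\tkm{\nettwo}$ admits an infinite continuation starting from $\isone_{\nettwo}$.

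For the third stage, which is the real content of the statement, I would argue that $\nettwo\in\netvic$ by an analysis close to the one used in Proposition~\ref{prop:netnormalform}. In a continuation no marriage step is performed, so every status token remains single throughout the sequence. Inspecting the moving rules of Figure~\ref{fig:movtrans}, this means that a marriage token cannot cross a $\delta_2$ cell (all $\delta_2$-cross rules require a suitable married status token to be already present, which can only appear after some prior marriage) and cannot proceed once it reaches a principal port involved in a $\delta\delta$ or $\gamma\delta$ redex (the only available transition there would be the forbidden marriage). Consequently every transition in the infinite continuation is a $\gamma$-cross move performed by some marriage token, and any such token that happens to reach an auxiliary port of a $\delta_2$ cell or the free boundary must stop. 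The total population of marriage tokens being finite, by pigeonhole at least one of them performs infinitely many $\gamma$-cross moves; since the cells of $\nettwo$ are finite in number, the trajectory of that token must eventually revisit a cell and thus trace a cycle formed exclusively out of $\gamma$ cells and wires. Because $\nettwo$ has no $\gamma\gamma$ redex, this cycle is precisely a $\gamma$-tree whose root is fed back through a wiring onto its leaves, i.e.\ the shape described in Figure~\ref{fig:vcs}, witnessing $\nettwo\in\netvic$.

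The main obstacle is the last paragraph: one has to carefully verify, by inspection of the transition rules, that without any marriage the dynamics of marriage tokens effectively degenerates to $\gamma$-cell crossings only, and then turn the pigeonhole-style combinatorial argument on the trajectory of a single token into the structural statement that a $\gamma$-cycle in a $\gamma\gamma$-normal net literally exhibits the subnet shape of a vicious circle. The absence of $\gamma\gamma$ redexes in $\nettwo$ is crucial here, since a $\gamma\gamma$ interaction would in principle be able to deflect the token and prevent the cycle from closing as a pure $\gamma$-tree plus wiring.
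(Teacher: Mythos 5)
Your proposal is correct and follows essentially the same route as the paper's own proof: exhaust the $\gamma\gamma$ redexes (terminating by a decreasing measure), transfer the infinite continuation to $\tkm{\nettwo}$ via Lemma~\ref{lem:infContPreservation}, and then argue by finiteness of the token population and of the net that some marriage token loops forever through $\gamma$ cells only, which in a $\gamma\gamma$-normal net exhibits a vicious circle. The only cosmetic difference is that the paper tracks the token's configuration (empty before the first $\gamma$ crossing, non-empty afterwards) to rule out each kind of port it might hit, whereas you appeal directly to the absence of married status tokens and of marriage steps; both arguments establish the same dichotomy.
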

\begin{proof}
  First, we reduce $\gamma\gamma$ redexes as much as possible, obtaining
  $\netone \red^* \nettwo$.
  By Lem.~\ref{lem:infContPreservation}, $\tkm{\nettwo}$ also has an infinite standard computation
  which is a continuation, say $\fcone$.
  Since $\fcone$ is a continuation,
  by definition it does not contain any marriage transition.
  Thus the number of marriage tokens is finitely bounded, implying
  there exists at least one token that moves infinitely many times.
  Let us follow the token's infinite path.
  The token starts from a principal port of a $\delta_2$ cell with an empty configuration.
  then it can\emph{not} hit
  \begin{varitemize}
  \item a principal port (of any kind of cells) since then a marriage happens.
  \item an auxiliary port of a $\delta_2$ cell, since $\fcone$ does not contain a marriage and thus it stops.
  \end{varitemize}
  So it must hit an auxiliary port of a $\gamma$ cell and goes out from its principal port, now with a non-empty configuration.
  After passing a $\gamma$ cell, it can\emph{not} hit
  \begin{varitemize}
  \item a principal port of a $\delta_2$ cell or an $\varepsilon$ cell since it stops due to $\gamma$ stack being not empty.
  \item a principal port of $\gamma$ cell since then $\nettwo$ has a $\gamma\gamma$ redex.
  \item an auxiliary port of a $\delta_2$ cell, because $\fcone$ does not contain a marriage and thus it stops.
  \end{varitemize}
  So again it must hit an auxiliary port of a $\gamma$ cell and goes out from its principal port with a non-empty configuration.
  Since $\nettwo$ is finite, the net $\nettwo$ necessarily has a loop
  consisting only of $\gamma$ cells connected with a tree containing
  a $\delta_2$ cell, i.e.\ a $\gamma$-vicious circle.
\end{proof}

\subsection{The Four Implications}
\begin{lemma}
$\infcomp{\netset}{\netvic}{\netone}\Longrightarrow\infcomp{\tkm{\netone}}{\emptyset}{\isone_{\tkm{\netone}}}$
\end{lemma}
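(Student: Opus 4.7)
The hypothesis $\infcomp{\netset}{\netvic}{\netone}$ splits into two subcases: (a) $\netone$ admits an infinite reduction sequence, or (b) $\netone \red^\ast \nettwo$ for some $\nettwo \in \netvic$. The plan is to treat each subcase separately, in each case reducing to a claim that $\tkm{\netone}$ has an infinite computation from $\isone_{\tkm{\netone}}$.

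For subcase (b), the first step will be to exhibit an infinite computation of $\tkm{\nettwo}$. Following the argument in the second half of the proof of Proposition~\ref{prop:netnormalform}, a marriage token emitted by a $\delta_2$ cell attached to the vicious circle enters the cyclic $\gamma$-tree and circulates around it indefinitely, traversing $\gamma$ cells from auxiliary to principal ports and being fed back through the closing wiring. Crucially, along this circuit the token never faces another principal port, so no marriage is forced on it, and the infinite sequence of move-only transitions is available regardless of whether $\nettwo$ itself is in normal form. The second step will be to lift this infinite computation back along the finite chain $\netone \red^\ast \nettwo$, by induction on its length. At each step I would invoke Lemma~\ref{lemma:gammagammasoundness} (for a $\gamma\gamma$ step) or Lemmas~\ref{lemma:gammadeltasoundness} and~\ref{lemma:deltadeltasoundness} (for a $\gamma\delta$ or $\delta\delta$ step); the concrete bisimulations constructed in those proofs match each internal transition of the machine for the reduct to at least one internal transition of the machine for the redex, prepending, in the marriage case, the distinguished transition $\isone_{\netone'} \red \istwo_{\netone'}$, so an infinite internal computation survives the lift.

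For subcase (a), I first observe that $\gamma\gamma$ reductions strictly decrease the total number of cells in a net, so any infinite reduction $\netone = \nettwo_0 \red \nettwo_1 \red \cdots$ must contain infinitely many marriage reductions. For each $k$, let $m_k$ denote the number of marriages among the first $k$ steps; then $m_k \to \infty$. Starting from a trivial standard computation of $\tkm{\nettwo_k}$ (for instance, the empty continuation from $\isone_{\nettwo_k}$) and repeatedly applying Propositions~\ref{prop:prescompone} and~\ref{prop:prescomptwo} step by step backwards along $\netone \red^\ast \nettwo_k$, I conclude that $\tkm{\netone}$ admits a finite standard computation containing at least $m_k$ thunks, and hence an internal computation of length at least $m_k$. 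Thus $\tkm{\netone}$ admits computations from $\isone_\netone$ of unbounded length, and since internal transitions in a token machine are finitely branching (finitely many tokens, cells, and triggerable rules per state), König's Lemma then yields the required infinite computation.

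The hard part will be the lift in subcase (b). Beyond verifying that the circulating-token computation remains valid when $\nettwo$ is not yet in normal form (addressed by the structural observation that inside a vicious circle the token only meets auxiliary ports of $\gamma$ cells and wirings), one must check that the weak bisimulations produced by the three soundness lemmas genuinely transport divergence rather than merely static state equivalence, since weak bisimilarity in general can collapse an infinite internal trajectory to a finite one ending at a stalled bisimilar state. I expect to resolve this by inspecting the explicit token-to-token correspondences defined in the proofs of Lemmas~\ref{lemma:gammagammasoundness}--\ref{lemma:deltadeltasoundness}, where every internal move on the reduct side is forced to be mirrored by at least one internal move on the redex side.
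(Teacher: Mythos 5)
Your case split is the paper's, and your treatment of the vicious--circle case (b) --- an explicitly circulating token in $\tkm{\nettwo}$, lifted back along the finite reduction chain via the concrete relations of Lemmas~\ref{lemma:gammagammasoundness}--\ref{lemma:deltadeltasoundness}, with the correct caveat that weak bisimilarity alone does not transport divergence --- is essentially the argument the paper gives for that case. The genuine problem is the base case of your counting argument in case (a). Propositions~\ref{prop:prescompone} and~\ref{prop:prescomptwo} require as hypothesis that $\tkm{\nettwo_k}$ has a \emph{finite standard computation}, and the ``empty continuation from $\isone_{\nettwo_k}$'' is not one: a continuation is by definition a \emph{maximal} marriage-free sequence, and the empty sequence is maximal only when no token can move from the initial state, which generically fails. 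Worse, a finite standard computation of $\tkm{\nettwo_k}$ need not exist at all: its existence amounts (via the Standardisation Lemma) to $\tkm{\nettwo_k}$ admitting a finite maximal computation, which fails, for instance, whenever $\nettwo_k$ reduces by $\gamma\gamma$ steps alone to a net containing a vicious circle, since every maximal continuation is then infinite. If $\netone$ happens to have a single, infinite, maximal reduction sequence, this obstruction occurs for \emph{every} $k$, and your induction never starts.

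The gap is repairable but needs an idea you have not supplied: either (i) work with finite sequences of $n$ thunks \emph{without} the trailing maximal continuation --- the empty sequence is then a legitimate base case, backward preservation of thunks still yields one extra thunk per marriage step, and your K\"onig argument (which correctly relies on the machine being finitely branching) goes through; or (ii) case-split on whether some $\tkm{\nettwo_k}$ admits an infinite computation outright, which reintroduces the need for a direct divergence-transport argument. The paper sidesteps the issue entirely: it extracts from the infinite reduction a decomposition $\netone\red_{\gamma\gamma}^{*}\nettwo\red_{a}\netthree$ with $a\in\{\gamma\delta,\delta\delta\}$, uses $\tkm{\netone}\approx\tkm{\nettwo}$ together with the fact that $\isone_\nettwo$ performs \emph{at least one} internal marriage step to a state bisimilar to $\isone_\netthree$, and concludes by coinduction, stitching these guaranteed nonempty segments into an infinite computation. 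Your route through thunk counting and K\"onig's Lemma is genuinely different and arguably more elementary, but as written it does not cover all instances of case (a).
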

\begin{proof}
  If $\infcomp{\netset}{\netvic}{\netone}$, then we can distinguish
  two cases:
  \begin{varitemize}
  \item
    Either $\netone$ admits an infinite reduction sequence. In this
    case, we prove that $\infcomp{\tkm{\netone}}{\emptyset}{\isone_{\tkm{\netone}}}$
    by coinduction. Indeed, if $\netone$ admits an infinite reduction
    sequence, then there is a finite reduction
    $$
    \netone\red_{\gamma\gamma}^{*}
    \nettwo\red_a\netthree
    $$
    where
    $a\in\{\delta\gamma,\delta\delta\}$ and
    $\infcomp{\netset}{\netvic}{\netthree}$. We know, by soundness,
    that $\tkm{\netone}\approx\tkm{\nettwo}$ and that $\isone_\nettwo$
    reduces in \emph{at least} one step to a state $\istwo$ which is
    bisimilar to $\isone_\netthree$. We can then conclude by
    coinduction that since
    $\infcomp{\tkm{\netthree}}{\emptyset}{\isone_{\tkm{\netthree}}}$,
    it holds that
    $\infcomp{\tkm{\netone}}{\emptyset}{\isone_{\tkm{\netone}}}$.
  \item
    Or $\netone$ admits a finite reduction sequence to a net
    $\nettwo$ in $\netvic$. The net $\nettwo$ contains a
    $\gamma$-vicious circle, and then
    $\infcomp{\tkm{\nettwo}}{\emptyset}{\isone_{\tkm{\nettwo}}}$ holds.
    By induction on the number of reduction steps leading $\netone$ to
    $\nettwo$, one can then prove that
    $\infcomp{\tkm{\netone}}{\emptyset}{\isone_{\tkm{\netone}}}$.
  \end{varitemize}
  This concludes the proof.
\end{proof}

\begin{lemma}
  $\fincomp{\netset}{\netvic}{\netone}\Longrightarrow\fincomp{\tkm{\netone}}{\emptyset}{\isone_{\tkm{\netone}}}$
\end{lemma}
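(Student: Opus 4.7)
The plan is to proceed by induction on the length $k$ of a reduction sequence $\netone = \netfour_0 \red \netfour_1 \red \cdots \red \netfour_k = \nettwo$ where $\nettwo$ is in normal form and $\nettwo \notin \netvic$. The overall strategy is to exhibit, for each $\netfour_i$, a \emph{finite standard computation} of $\tkm{\netfour_i}$, which (being standard and finite, ending with a continuation that has no further non-marriage transitions and having exhausted its last thunk) is in particular a finite maximal computation of $\tkm{\netfour_i}$ ending in a normal form. Since $\isone_{\tkm{\netfour_i}}$ is the initial state and a finite maximal computation reaches a normal form, this yields $\fincomp{\tkm{\netfour_i}}{\emptyset}{\isone_{\tkm{\netfour_i}}}$.

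For the base case $k = 0$, we have $\netone = \nettwo$ in normal form with $\netone \notin \netvic$. Proposition~\ref{prop:netnormalform} then directly tells us that every maximal computation of $\tkm{\netone}$ is finite; pick any such computation and apply the Standardisation Lemma to turn it into a finite standard computation (necessarily with zero thunks, since $\netone$ has no redex and hence no marriage is available). This exhibits the desired finite computation starting from $\isone_{\tkm{\netone}}$.

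For the inductive step, assume $\netone \red \netone' \red^{k-1} \nettwo$ with $\nettwo$ a normal form outside $\netvic$, so that by the induction hypothesis applied to $\netone'$ we know $\tkm{\netone'}$ admits a finite maximal computation; by the Standardisation Lemma this yields a finite standard computation of $\tkm{\netone'}$ with some number of thunks $n$ and a (finite) continuation. We now split on the kind of the first reduction step $\netone \red \netone'$. If it is a $\gamma\gamma$ step, Proposition~\ref{prop:prescomptwo} immediately provides a finite standard computation of $\tkm{\netone}$ with the same number $n$ of thunks. If instead it is a $\delta\delta$ or $\gamma\delta$ step, Proposition~\ref{prop:prescompone} supplies a finite standard computation of $\tkm{\netone}$ with $n+1$ thunks. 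In either case we obtain a finite standard — hence finite maximal — computation of $\tkm{\netone}$ starting from $\isone_{\tkm{\netone}}$, completing the induction.

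The main potential obstacle is the mismatch between ``maximal'' and ``standard'': a priori we only have $\fincomp{\netset}{\netvic}{\netone}$ telling us that \emph{some} reduction sequence is finite, but the preservation propositions are phrased in terms of standard computations of the token machines, and we must make sure that a finite standard computation is indeed maximal in $\tkm{\netone}$ so as to witness $\fincomp{\tkm{\netone}}{\emptyset}{\isone_{\tkm{\netone}}}$. This is guaranteed by the structure of a standard computation: its tail is a maximal non-marriage continuation, and its finiteness together with the fact that no further thunk can be appended (otherwise the standard form would be extended) forces the final state to admit no transitions at all. The Standardisation Lemma is used precisely to bridge the two viewpoints in both directions, so the induction carries through cleanly.
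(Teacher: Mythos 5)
Your proof follows essentially the same route as the paper's: an induction on the length of the reduction sequence from $\netone$ to a normal form outside $\netvic$, with the base case handled by Proposition~\ref{prop:netnormalform} and the inductive step by Propositions~\ref{prop:prescompone} and~\ref{prop:prescomptwo}. Your additional discussion of why a finite standard computation actually witnesses $\fincomp{\tkm{\netone}}{\emptyset}{\isone_{\tkm{\netone}}}$ (i.e.\ ends in a genuine normal form of the machine) is a point the paper leaves implicit, and is worth having spelled out.
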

\begin{proof}
  If  $\fincomp{\netset}{\netvic}{\netone}$, then $\netone$
  reduces in $n$ steps to a normal net not in $\netvic$. One can prove,
  that, for any such net $\netone$, $\tkm{\netone}$ admits
  a finite computation:
  \begin{varitemize}
  \item
    First of all, if $\netone$ is itself a normal form not in $\netvic$,
    then $\netone$ admits a finite (standard) computation, due to
    Proposition \ref{prop:netnormalform}.
  \item
    If $\netone\red\nettwo$ and $\tkm{\nettwo}$ admits a finite
    standard computation, then $\tkm{\netone}$ itself admits a finite
    standard computation, due to Proposition~\ref{prop:prescompone}
    and Proposition~\ref{prop:prescomptwo}.
  \end{varitemize}
\end{proof}

\begin{lemma}\label{lem:infMachineToInfNet}
$\infcomp{\tkm{\netone}}{\emptyset}{\isone_{\tkm{\netone}}}
\Longrightarrow\infcomp{\netset}{\netvic}{\netone}$
\end{lemma}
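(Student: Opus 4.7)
The plan is to leverage the Standardisation Lemma to reduce the problem to analysing standard computations, and then dispatch the two possible shapes of such computations using the preservation propositions already established. First, I would apply Standardisation to the given infinite computation of $\tkm{\netone}$, obtaining an infinite standard computation $\fcone$. By definition of standard computations, $\fcone$ falls into exactly one of two shapes: either (i) $\fcone$ consists of infinitely many thunks, or (ii) $\fcone$ consists of finitely many thunks followed by an infinite continuation.

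In case (i), I would iterate Proposition~\ref{prop:finstdsucc} to peel off thunks one by one: each thunk in the standard computation of $\tkm{\netone}$ corresponds to a reduction step $\netone\red^+\nettwo_1\red^+\nettwo_2\red^+\cdots$ in $\netset$, where each $\tkm{\nettwo_i}$ still admits a standard computation with infinitely many thunks. This produces an infinite reduction sequence in $\netset$ starting from $\netone$, which immediately witnesses $\infcomp{\netset}{\netvic}{\netone}$.

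In case (ii), I would again use Proposition~\ref{prop:finstdsucc} finitely many times (once for each of the $n$ initial thunks) to obtain a reduction sequence $\netone\red^*\nettwo$ such that $\tkm{\nettwo}$ admits a standard computation which is an infinite continuation. At this point Proposition~\ref{prop:infCompImpliesVC} applies directly, yielding a further reduction $\nettwo\red^*\netthree$ (through $\gamma\gamma$ reductions only) with $\netthree\in\netvic$. Composing the two reduction segments gives $\netone\red^*\netthree\in\netvic$, again witnessing $\infcomp{\netset}{\netvic}{\netone}$.

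The main obstacle is case (i): one must be careful that iterating Proposition~\ref{prop:finstdsucc} really produces an infinite sequence of distinct net reductions in $\netset$, rather than accidentally giving a sequence that stabilises. This is guaranteed because each application of Proposition~\ref{prop:finstdsucc} uses a $\red^+$ step (strictly at least one reduction) and strictly decreases the number of thunks in the corresponding standard computation by one; since the starting computation has infinitely many thunks, the procedure can be continued indefinitely, producing a genuinely infinite reduction chain. The remaining bookkeeping — verifying that the standard computation inherited at each stage still has the required shape, and that the continuation carried through case (ii) is genuinely infinite at the last stage, so that Proposition~\ref{prop:infCompImpliesVC} is applicable — follows directly from the ``moreover'' clause of Proposition~\ref{prop:finstdsucc}.
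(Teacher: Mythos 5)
Your proposal is correct and follows essentially the same route as the paper: both split the standard computation obtained from the Standardisation Lemma into the two cases (infinitely many thunks versus finitely many thunks with an infinite continuation), handle the first by iterating Proposition~\ref{prop:finstdsucc} coinductively to build an infinite reduction sequence, and handle the second by peeling off the finitely many thunks via Proposition~\ref{prop:finstdsucc} and then invoking Proposition~\ref{prop:infCompImpliesVC}. Your explicit attention to the ``moreover'' clause guaranteeing the continuation stays infinite is a detail the paper leaves implicit, but the argument is the same.
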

\begin{proof}
  If there is an infinite computation from $\isone_{\tkm{\netone}}$, then:
    \begin{varitemize}
    \item
      Either there is an infinite standard computation with a finite
      number $n$ of thunks and an infinite continuation from $\isone_{\tkm{\netone}}$.
      Then, by induction on $n$, one can prove that $\netone$ reduces
      to another net $\nettwo$ such that $\tkm{\nettwo}$ has an
      infinite standard reduction which is a continuation.
      By Proposition~\ref{prop:infCompImpliesVC}, this implies the thesis.
    \item
      Or there is an infinite standard computation with infinitely
      many thunks from $\isone_{\tkm{\netone}}$. Then, we can apply
      Proposition \ref{prop:finstdsucc} in a coinductive way to
      get an infinite reduction from $\netone$.
    \end{varitemize}
\end{proof}

\begin{lemma}
  $\fincomp{\tkm{\netone}}{\emptyset}{\isone_{\tkm{\netone}}}\Longrightarrow\fincomp{\netset}{\netvic}{\netone}$
\end{lemma}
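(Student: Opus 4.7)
The plan is to reduce the problem to the structure of standard computations and then induct on the number of thunks. First, applying the Standardisation Lemma, the assumed finite maximal computation from $\isone_{\tkm{\netone}}$ may be replaced by a finite standard computation, which by definition is of the form ``$n$ thunks followed by a continuation $\fcone$''. Since the whole standardised computation is finite, the continuation $\fcone$ must itself be finite.

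The induction is on $n$. For the base case $n=0$, the standard computation is simply a finite continuation starting from $\isone_{\tkm{\netone}}$, and Proposition~\ref{prop:finstdzerofinite} directly yields $\netone\red^*\nettwo$ with $\nettwo$ in normal form and $\nettwo\notin\netvic$, which is precisely $\fincomp{\netset}{\netvic}{\netone}$.

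For the inductive step, given a finite standard computation with $n{+}1$ thunks and finite continuation $\fcone$, Proposition~\ref{prop:finstdsucc} supplies a net $\nettwo$ with $\netone\red^+\nettwo$ such that $\tkm{\nettwo}$ admits a standard computation with $n$ thunks and continuation $\fctwo$, and such that $\fcone$ is infinite iff $\fctwo$ is. Since $\fcone$ is finite, so is $\fctwo$, hence $\tkm{\nettwo}$ has a finite standard computation of the required shape. The induction hypothesis applied to $\nettwo$ gives $\nettwo\red^*\netthree$ with $\netthree$ a normal form not in $\netvic$, and composing with $\netone\red^+\nettwo$ produces $\netone\red^*\netthree$, which establishes $\fincomp{\netset}{\netvic}{\netone}$.

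The conceptual work has already been absorbed into Standardisation and into Propositions~\ref{prop:finstdzerofinite} and~\ref{prop:finstdsucc}, so here the only point requiring care is verifying that the finiteness hypothesis on the top-level computation propagates, via the standardisation step, to finiteness of the continuation used in the base case and to finiteness of the continuation obtained from Proposition~\ref{prop:finstdsucc} in the inductive step. This is immediate from the definition of standard sequence, which syntactically separates the finite case (finitely many thunks plus a continuation) from the infinite case (infinitely many thunks), so no additional analysis is needed.
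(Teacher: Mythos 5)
Your proof is correct and follows essentially the same route as the paper's: standardise the finite computation, then induct on the number $n$ of thunks using Proposition~\ref{prop:finstdzerofinite} for the base case and Proposition~\ref{prop:finstdsucc} for the inductive step. Your explicit tracking of the finiteness of the continuation through the inductive step (via the ``$\fcone$ is infinite iff $\fctwo$ is'' clause) is a detail the paper leaves implicit, and it is exactly the right point to check.
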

\begin{proof}
  If there is a finite computation from $\isone_{\tkm{\netone}}$,
  then there is also a finite standard computation with $n$ thunks
  from $\isone_{\tkm{\netone}}$.
  We can then proceed by induction on $n$,
  thanks to Proposition~\ref{prop:finstdzerofinite}
  and Proposition~\ref{prop:finstdsucc}.
  If $n = 0$ then $\netone$ is in normal form not in $\netvic$ by Proposition~\ref{prop:finstdzerofinite}; if $n > 0$ then there exists $\nettwo$ satisfying $\netone \red^* \nettwo$ and $\tkm{\nettwo}$ admits a finite standard computation with $n-1$ thunks by Proposition~\ref{prop:finstdsucc}.
\end{proof}
\section{Discussion}\label{sect:discussion}
A formal comparison between multi-token machines and other concurrent
models of computation, and in particular the so-called truly
concurrent ones, is outside the scope of this paper. Some observations
in this direction are anyway useful, and are the starting point of
current investigations by the authors.

Multi-token machines as we defined them in this paper can be seen as
Petri nets where, however, the set of \emph{places} and
\emph{transitions} are both denumerable. In particular, places
correspond to the possible states of a token, of which there can be
countably many: remember that marriage tokens carry two stacks of
unbounded length with them. Indeed, tokens lying at the same wire, but with
different stacks need to correspond to different places at the level
of Petri nets, because they can travel in completely different
directions. Implementing the marriage mechanism in
Petri nets is indeed possible, but requires \emph{nonlocal}
transitions, transitions with incoming arcs coming from places which
are ``far away'' from each other: actually, there must be one such
transition for any pair of (virtual copies of) $\delta_2$ principal
ports in the underlying net. In other words, the nonlocal behaviour of
certain multi-token machine interactions would be reflected, at the
level of Petri nets, by a possibly very complex and nonlocal system of
transitions. Apart from that, the encoding seems possible, and indeed
targets plain Petri nets rather than the kind of Petri nets that are
usually required to model process algebras like the $\pi$-calculus,
e.g., Petri nets with inhibitor arcs~\cite{BusiGorrieri2009}. Whether
a finitary $\pi$-calculus can be given a plain Petri net semantics this
way is an interesting open problem.

Perhaps even more direct is the correspondence between multi-token machines
and chemical abstract
machines as defined by, e.g., Berry and Boudol~\cite{BerryBoudol1992}. In
this view, tokens would become \emph{molecules}, and would be allowed to
float in a \emph{soup}, freely interacting between them. Marriage
transitions of multi-token machines would be modelled by chemical rules
allowing two matching molecules (i.e. marriage tokens) to interact,
producing some other tokens. Moving transitions, on the other hand,
are most often possible in presence of a status token, and this
mechanism itself can be seen as a chemical reaction (which leave one
of the two involved ingredients essentially unchanged). In this view,
the underlying net's wires would become molecule kinds, and the
obtained abstract machine would be syntax-free.
\section{Conclusions}
In this paper, we introduced the first truly concurrent geometry
of interaction model, by defining multi-token machines for Mazza's
multiport interaction combinators, and by proving them to be sound
and adequate. Mazza himself suggested this as an open problem \cite{MazzaThesis}.
What is interesting about our results is also the way they are spelled
out and proved, namely by way of labelled transition systems and (bi)similarity.
This is something quite natural, but new to geometry of interaction
models, at least to the authors' knowledge.

As already mentioned, one topic for future work is a study of
multi-token machines for differential interaction nets, and in
particular to a ``multiport'' variation of them, something the authors
are actually working on as a way towards a better modelling of the
$\pi$-calculus. Another topic of future work concerns the study of
the concurrent nature of multi-token machines, and in particular of
their causal structure, especially in view of their interpretation as
Petri nets.

\section*{Acknowledgment}
The first author is partially supported by the ANR projects 12IS02001
PACE, and 14CE250005 ELICA. The third author is partially supported
by the JST ERATO Grant Number JPMJER1603. The three authors are
supported by the INRIA-JSPS joint project CRECOGI.

\bibliographystyle{plain}
\bibliography{biblio}

\newpage
\section*{Appendix}
In this section, we will give some more details as for why all the
results we proved in this paper can be generalised to MIC where
$\varepsilon$ cells are allowed to interact with other cells
(including themselves).
\paragraph*{Interaction Rules.}
First of all, we need some additional interaction rules, namely those
in Figure~\ref{fig:epsilonreduction}. Please notice that this new
sets of rules taken in isolation turns MICs into a confluent and
strongly normalizing rewrite system. Actually, the first of the three new
rules implies that we cannot work with $\delta_2$ cells only, but also with
$\delta_1$ cells.

\begin{figure*}
  \begin{center}
  \fbox{
    \begin{minipage}{.97\textwidth}
      \centering
      \includegraphics[scale=0.9]{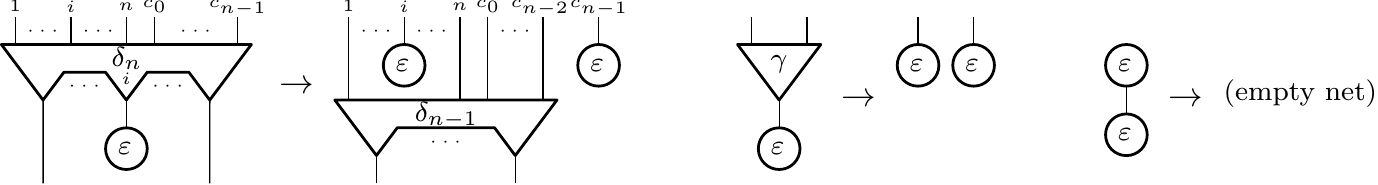}
      \caption{MIC Interaction Rules with $\varepsilon$}
      \label{fig:epsilonreduction}
  \end{minipage}}
  \end{center}
\end{figure*}

\paragraph*{Token Machines.}
As for token machines, we need to substantially generalise the ones we
introduced in the paper. First of all, cell types need to be
generalised. Moreover, more reduction rules are needed, because
$\delta$ cells can now marry not only with $\delta$ or $\gamma$ cells,
but also with $\varepsilon$ cells. They are in Figure~\ref{fig:epsiloninternal},
Figure~\ref{fig:epsilonmarriage}, and Figure~\ref{fig:epsilonexternal}.
The proof of the compositionality remains
essentially unaltered, except for the fact that more cases need to
be taken into account.

\begin{figure*}
  \begin{center}
  \fbox{
    \begin{minipage}{.97\textwidth}
      \centering
      \includegraphics[scale=0.9]{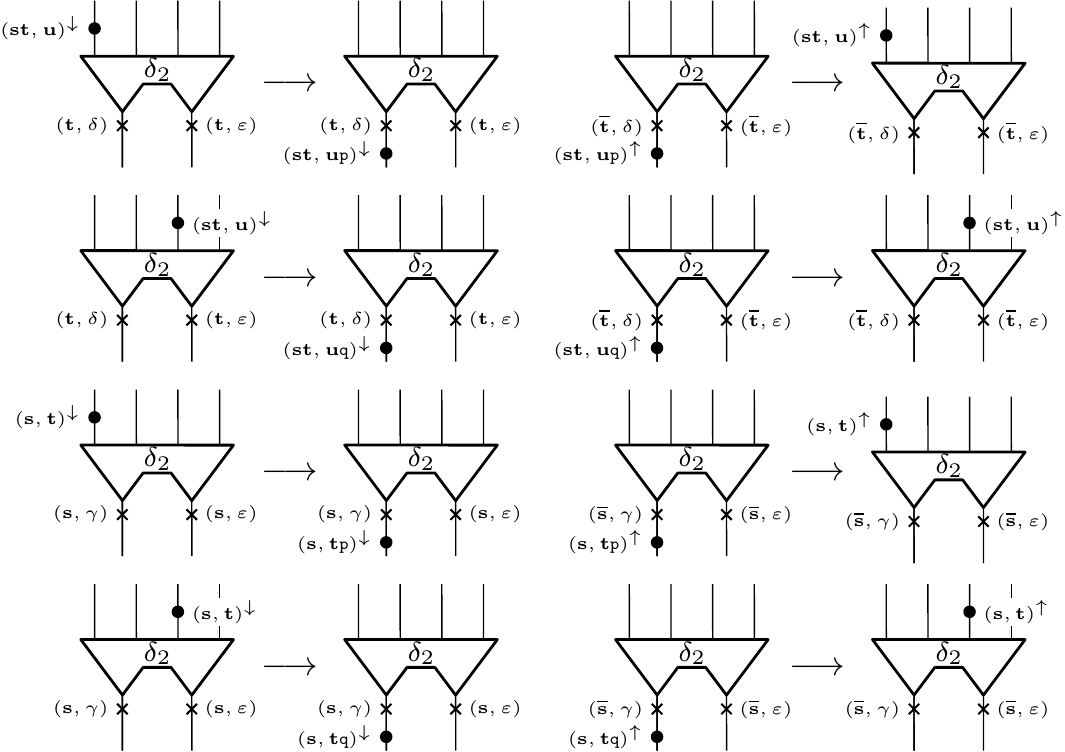}
      \caption{Internal Transition Rules with $\varepsilon$}
      \label{fig:epsiloninternal}
  \end{minipage}}
  \end{center}
\end{figure*}
\begin{figure*}
  \begin{center}
  \fbox{
    \begin{minipage}{.97\textwidth}
      \centering
      \includegraphics[scale=0.9]{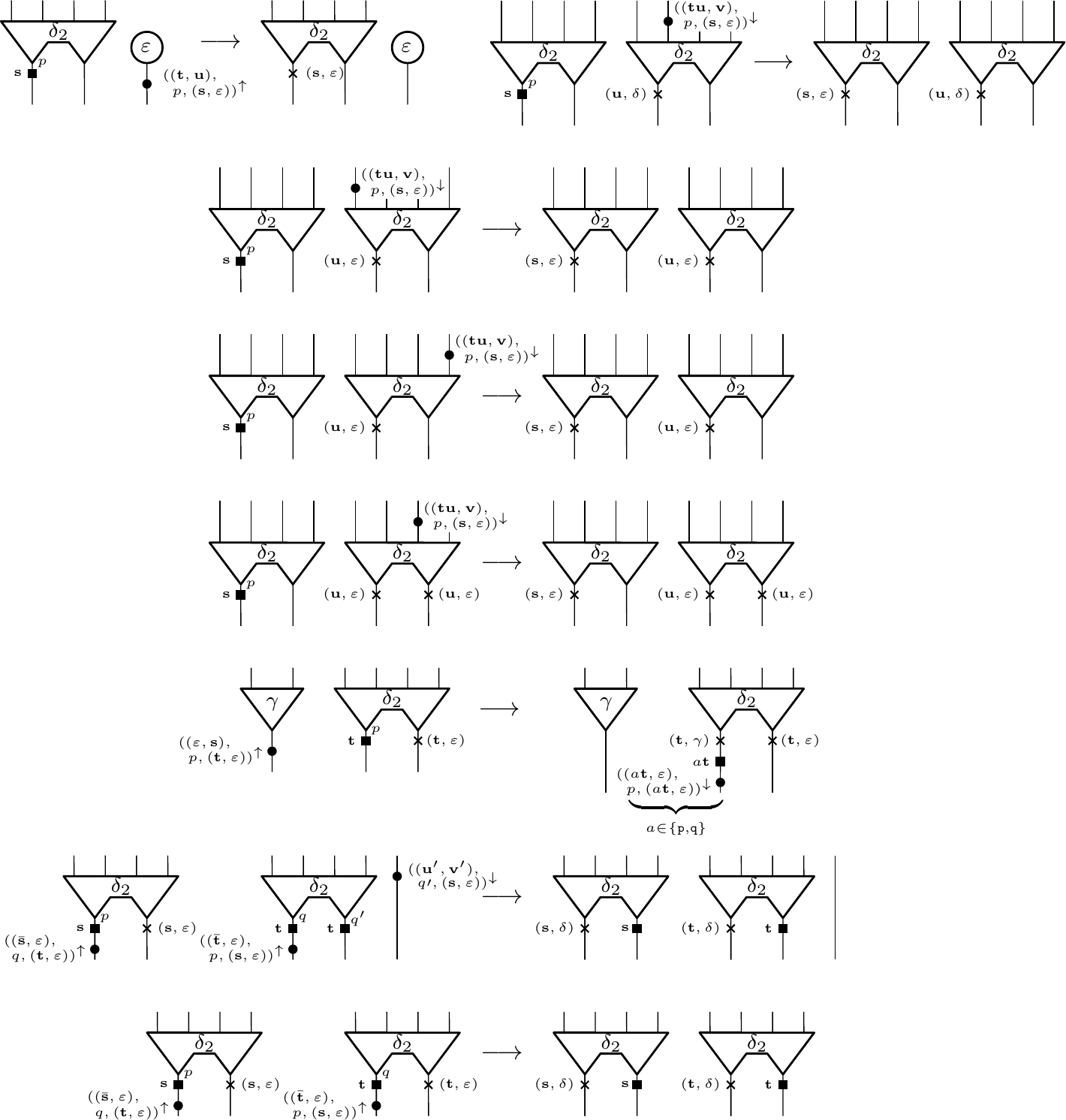}
      \caption{Internal Marriage Rules with $\varepsilon$}
      \label{fig:epsilonmarriage}
  \end{minipage}}
  \end{center}
\end{figure*}
\begin{figure*}
  \begin{center}
  \fbox{
    \begin{minipage}{.97\textwidth}
      \centering
      \includegraphics[scale=0.9]{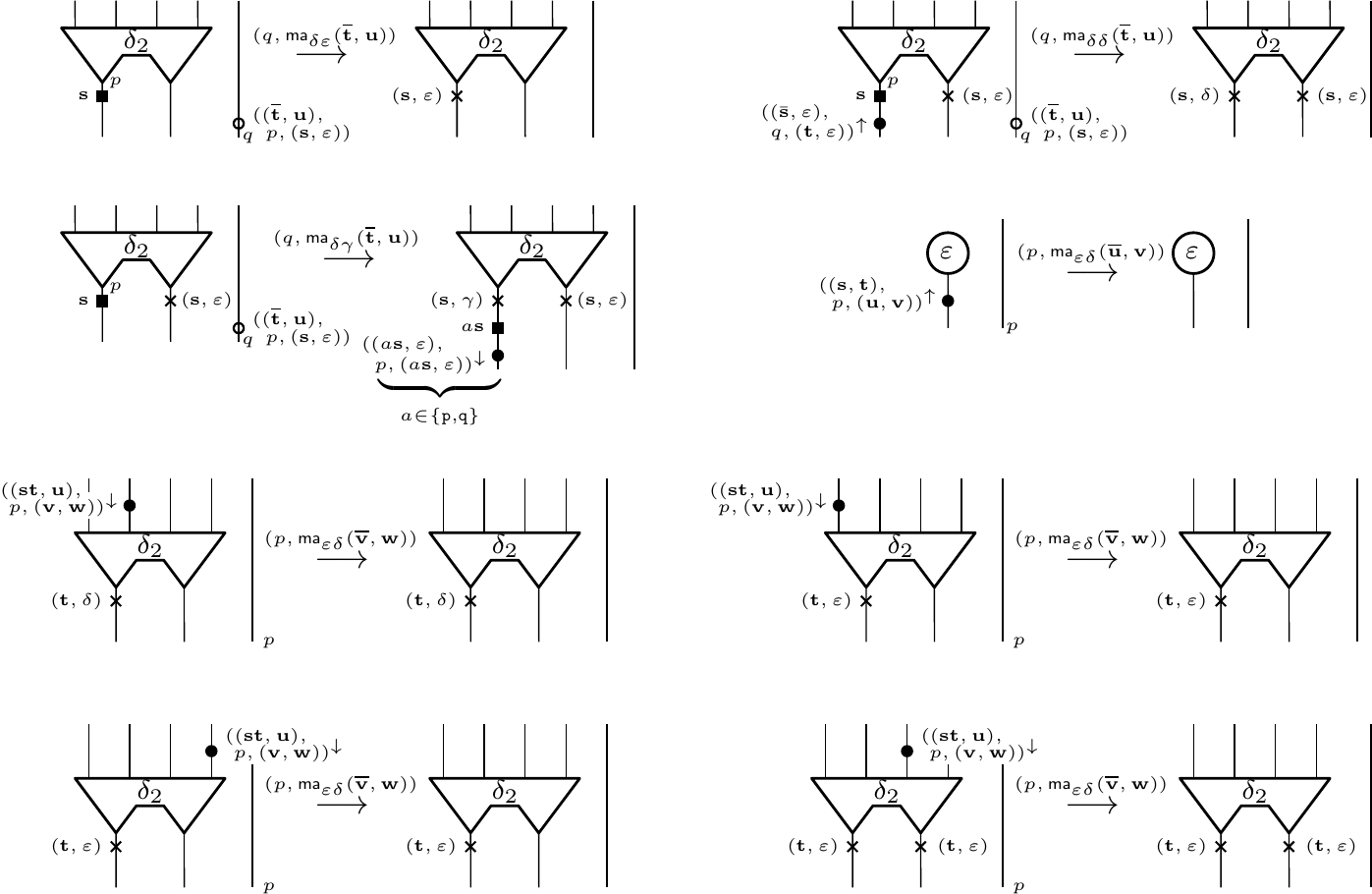}
      \caption{External Marriage Rules with $\varepsilon$}
      \label{fig:epsilonexternal}
  \end{minipage}}
  \end{center}
\end{figure*}

\paragraph*{Soundness.}
The proof of soundness needs to be adapted. In particular, we need
to prove lemmas analogous to Lemma \ref{lemma:gammagammasoundness},
Lemma~\ref{lemma:gammadeltasoundness} and Lemma~\ref{lemma:deltadeltasoundness},
but for interaction rules involving $\varepsilon$ cells. Here they
are:
\begin{lemma}\label{lemma:gammaepsilonsoundness}
  If $\netone\pred{\gamma\varepsilon}\nettwo$, then
  $\tkm{\netone}\approx\tkm{\nettwo}$.
\end{lemma}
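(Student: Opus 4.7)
The plan is to follow closely the strategy of Lemma~\ref{lemma:gammagammasoundness}. First, I would unpack the local situation: the redex in $\netone$ consists of a $\gamma$ cell whose principal port faces an $\varepsilon$ cell, and in $\nettwo$ this redex is replaced by two $\varepsilon$ cells, each attached to the wire where one auxiliary port of $\gamma$ used to be. Crucially, neither a $\gamma$ nor an $\varepsilon$ cell originates marriage tokens (only $\delta_2$ cells do), so the initial states $\isone_{\tkm{\netone}}$ and $\isone_{\tkm{\nettwo}}$ differ only through the internal structure of the redex, never through any token placed inside it.

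I would then define a partial function $f\colon\tkns^\netone\rightharpoonup\tkns^\nettwo$ exactly as in Lemma~\ref{lemma:gammagammasoundness}: matching tokens whose origins lie at free ports are sent to the identically placed matching token in $\nettwo$; marriage tokens that are $\emptyset$-canonical are pushed back to their origins (which, being principal ports of $\delta_2$ cells outside the redex, are common to $\netone$ and $\nettwo$) and reinterpreted as the corresponding marriage token of $\nettwo$; on all other tokens $f$ is undefined. Let $\relone$ be the pointwise extension of $f$ to states. The verification that $\relone$ is a bisimulation proceeds by case analysis on the transition rule. Empty-label internal moves performed inside the redex are absorbed, taking $\sttwo_\nettwo = \stone_\nettwo$ (since $f$ collapses every position on the canonical path to the same origin), while those occurring outside the redex are mirrored one-for-one. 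Input, output, and external-marriage transitions are handled exactly as in Lemma~\ref{lemma:gammagammasoundness}, using that matching tokens are preserved by $f$. In the converse direction, a transition of $\tkm{\nettwo}$ that interacts with one of the two fresh $\varepsilon$ cells is matched in $\tkm{\netone}$ by rebuilding, via Lemma~\ref{lemma:invertibility}, a focused sequence into the corresponding auxiliary port of $\gamma$ followed by the internal $\gamma\varepsilon$ encounter.

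The main obstacle will be checking the coherence of the $\varepsilon$-rules of Figure~\ref{fig:epsiloninternal} against the $\gamma$-movement rules of Figure~\ref{fig:movtrans}. In $\netone$, a marriage token entering the redex at an auxiliary port of $\gamma$ first pushes $\lft$ or $\rgt$ on its first stack, crosses into the principal port, and only then meets the $\varepsilon$ cell; in $\nettwo$ the same token meets the new $\varepsilon$ cell directly, with the un-pushed configuration. The bisimulation closes only because the $\varepsilon$ internal rule is designed to absorb the token irrespective of the topmost $\gamma$-symbol on its first stack, so that the two outcomes agree up to the relation induced by $f$. Once this symmetry is verified for both auxiliary ports of the original $\gamma$, together with the analogous argument for the external killing actions of Figure~\ref{fig:killtrans}, the proof is complete.
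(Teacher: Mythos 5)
The paper states this lemma in the appendix without proof, presenting it only as an analogue of Lemma~\ref{lemma:gammagammasoundness}; your proposal carries out exactly that adaptation (the pointwise relation induced by pushing $\emptyset$-canonical marriage tokens back to their origins, plus the case analysis on internal moves, input/output, and external marriages), so it matches the argument the paper intends. In particular, you correctly isolate the one genuinely new point --- a token reaching the original $\varepsilon$ cell in $\netone$ carries an extra $\lft$/$\rgt$ on its first stack compared with the same token reaching a fresh $\varepsilon$ cell in $\nettwo$ --- and the resolution you propose (that the $\varepsilon$ rules of Figures~\ref{fig:epsiloninternal}--\ref{fig:epsilonexternal} act independently of the $\gamma$-stack) is precisely the property the design must have for the bisimulation to close, consistent with the $\varepsilon$ cell's role as an unconditional eraser.
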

\begin{lemma}\label{lemma:epsilonepsilonsoundness}
  If $\netone\pred{\varepsilon\varepsilon}\nettwo$, then
  $\tkm{\netone}\approx\tkm{\nettwo}$.
\end{lemma}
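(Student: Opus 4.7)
The proof plan follows the same template as Lemma~\ref{lemma:gammagammasoundness}, but is substantially simpler because an $\varepsilon\varepsilon$ redex is a disconnected subnet. Since $\varepsilon$ has coarity $1$ and arity $0$, two $\varepsilon$ cells connected through their principal ports, together with the single wire joining them, share no ports and no wires with the rest of the net. Thus I would first observe that $\netone$ splits as the disjoint union of the redex and a subnet $\netthree$, and that $\nettwo$ is exactly $\netthree$ (the reduction simply erases the isolated component).

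Next I would define a bijection $f:\tkns^{\netone}\rightarrow\tkns^{\nettwo}$ by the natural identification of ports of $\netthree$ inside $\netone$ with ports of $\nettwo$. The key observation is that no reachable state of $\tkm{\netone}$ ever contains a token placed on one of the two $\varepsilon$ cells or on the connecting wire: neither $\varepsilon$ cell is a $\delta_2$ cell, so the initial state has no token there; no internal moving rule can emit a token onto an $\varepsilon$ cell since $\varepsilon$ cells have no auxiliary ports and the connecting wire has no cell on either side feeding tokens into it; and no external rule applies, since the relevant ports are not free. Hence $f$ restricted to reachable tokens is well-defined, and $\isone_\netone$ and $\isone_\nettwo$ differ only by tokens lying outside $\netthree$, namely not at all.

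I would then extend $f$ pointwise to a relation $\relone$ on states and verify by straightforward case analysis that $\relone$ is a strong bisimulation: for each transition type (internal move, internal marriage, input, output, external marriage, $\mathsf{kill}$ and $\mathsf{cokill}$), the rule fires on tokens of $\netthree$ on both sides identically, since the $\varepsilon\varepsilon$ component is inert. Symmetrically, every transition of $\tkm{\nettwo}$ lifts trivially to $\tkm{\netone}$.

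No genuine obstacle is expected here: unlike Lemma~\ref{lemma:gammagammasoundness}, where one has to argue that a token traversing the $\gamma\gamma$ redex in $\netone$ corresponds to a token directly jumping across the rewired wires in $\nettwo$, in the present case tokens simply cannot reach the redex at all, so the bisimulation is essentially the identity and yields strong bisimilarity (which in particular implies weak bisimilarity, as required).
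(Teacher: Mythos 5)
Your proof is correct: the paper in fact states this lemma without giving a proof, and your argument --- observing that an $\varepsilon\varepsilon$ redex is an isolated component sharing no ports with the rest of the net, that no token of $\tkm{\netone}$ can ever be placed on or reach it (the initial state only populates $\delta$ principal ports and no moving, marriage, or external rule can feed a token into a disconnected, free-port-less subnet), and hence that the identity correspondence on the remaining tokens is a strong, hence weak, bisimulation --- is exactly the natural instantiation of the template of Lemma~\ref{lemma:gammagammasoundness} that the appendix implicitly appeals to. I see no gap.
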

\begin{lemma}\label{lemma:deltaepsilonsoundness}
  If $\netone\pred{\delta\varepsilon}\nettwo$, then
  there is $\istwo_\netone$ such that $\isone_\netone\red\istwo_\netone$ and
  $(\fmsts{\tkns^\netone},\red,\istwo_\netone)\approx\tkm{\nettwo}$.
\end{lemma}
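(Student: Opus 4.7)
The plan is to mirror the proof pattern of Lemma~\ref{lemma:gammadeltasoundness} and Lemma~\ref{lemma:deltadeltasoundness}: the machine for $\netone$ must first ``commit'' to the specific $\delta\varepsilon$ interaction by performing the corresponding internal marriage transition, after which the resulting machine ought to be weakly bisimilar to $\tkm{\nettwo}$. First I would unfold the situation: $\netone$ contains a $\delta_2$ cell whose first principal port faces the principal port of an $\varepsilon$ cell, while the other principal port and the two auxiliary ports of $\delta_2$ are connected to some context. The internal $\delta\varepsilon$ marriage rule from Figure~\ref{fig:epsilonmarriage} then lets $\isone_\netone$ evolve in one step to a state $\istwo_\netone$ in which the single status tokens sitting at the two principal ports of the $\delta_2$ cell are upgraded to married status tokens recording the partner type $\varepsilon$, the dual marriage token emitted by the $\varepsilon$ cell is consumed, and new marriage and status tokens are installed on the (former) auxiliary ports as virtual copies of the $\delta_2$ cell.

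Second, I would define a pointwise relation $\relone$ between the tokens of $\netone$ after the marriage and those of $\nettwo$. A status token of $\netone$ whose topmost $\gamma$-stack symbol is $\lft$ (respectively $\rgt$) is related to the corresponding status token on the ``left'' (respectively ``right'') virtual copy, which in $\nettwo$ has been absorbed into one of the two $\varepsilon$ cells produced by the reduction. Marriage and matching tokens whose origins sit at principal ports are related if those origins are related as above; those originating at free ports are related to their identical copies in $\nettwo$. Writing $\stthree$ for the bag of single status tokens with empty stack installed by the marriage step, I would then set
\[
\relone' \;=\; \{\,(\stone_\netone,\stone_\nettwo) \mid (\stone_\netone-\stthree,\stone_\nettwo)\in\relone,\ \stone_\netone \text{ and } \stone_\nettwo \text{ are canonical}\,\},
\]
and verify that $\relone'$ is a weak bisimulation witnessing $(\fmsts{\tkns^\netone},\red,\istwo_\netone)\approx\tkm{\nettwo}$.

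Third, for the bisimulation check I would proceed by case analysis on the transition type, reusing the layout of the proof of Lemma~\ref{lemma:gammadeltasoundness}. Internal moves of marriage tokens through the virtual copies in $\netone$ are simulated by the corresponding moves along the wires leading into the freshly introduced $\varepsilon$ cells in $\nettwo$; input and output actions are handled literally using the focused-sequence invertibility of Lemma~\ref{lemma:invertibility}; and external marriage actions are distinguished by the nature of the partner involved, now including $\delta\varepsilon$ and $\varepsilon\delta$ marriages from Figure~\ref{fig:epsilonexternal}. In each case the married status tokens installed at the initial marriage guarantee that the virtual copies of $\delta_2$ in $\netone$ route tokens in exactly the way the new $\varepsilon$ cells in $\nettwo$ do, so that the relation $\relone'$ is preserved along each transition and in both directions.

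The main obstacle I anticipate is the treatment of marriage tokens that enter an auxiliary port of the former $\delta_2$ and then attempt to cross it: in $\nettwo$ such a token arrives directly at an $\varepsilon$ cell and is handled by the $\varepsilon$-involving internal rules of Figure~\ref{fig:epsiloninternal}, whereas in $\netone$ the same token must reach a $\delta_2$ principal port that is now occupied by a married status token pointing to $\varepsilon$. Verifying that these two scenarios produce the same observable behaviour, and that no spurious actions arise from the fact that $\varepsilon$ has no auxiliary ports and thus acts as a sink (so some internal transitions on the $\nettwo$ side are simply absent while on the $\netone$ side certain configurations must be ``silent''), is the delicate part of the argument; everything else then reduces to a routine case analysis closely parallel to the one carried out for $\gamma\delta$.
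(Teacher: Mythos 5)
Your top-level strategy is the paper's: let $\isone_\netone$ first perform the internal $\delta\varepsilon$ marriage to reach $\istwo_\netone$, then exhibit a weak bisimulation between the re-pointed machine $(\fmsts{\tkns^\netone},\red,\istwo_\netone)$ and $\tkm{\nettwo}$. However, you have built the bisimulation relation on the wrong template. The paper's proof constructs the relation ``in the same way as Lemma~\ref{lemma:deltadeltasoundness}'', i.e.\ the $\delta\delta$ case, whereas you import the machinery of Lemma~\ref{lemma:gammadeltasoundness}: virtual left/right copies of the $\delta_2$ cell, status tokens discriminated by whether their topmost symbol is $\lft$ or $\rgt$, and a bag $\stthree$ of freshly installed status tokens to be subtracted. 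That machinery exists in the $\gamma\delta$ case only because a $\gamma$ cell has two auxiliary ports, so the marriage virtually \emph{duplicates} the $\delta_2$ cell and tokens must record which copy they belong to. An $\varepsilon$ cell has arity $0$: a $\delta\varepsilon$ interaction \emph{erases} rather than duplicates --- this is precisely why the appendix notes that $\delta_1$ cells must be admitted, the reduct containing a residual $\delta_1$ cell rather than two copies of anything. There are therefore no virtual copies, no left/right discrimination by the topmost stack symbol, and no ``two $\varepsilon$ cells absorbing the copies'' for your relation to refer to. The relation that actually works is the annihilation-style one of the $\delta\delta$ case: matching tokens related naturally, canonical marriage tokens originating at free ports related to their natural counterparts, and the bisimulation taken as $\{(\stone_\netone\uplus\istwo_\netone,\stone_\nettwo)\mid(\stone_\netone,\stone_\nettwo)\in\relone\}$.

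A second, related slip: you assert that \emph{both} principal ports of the $\delta_2$ cell acquire married status tokens recording the partner type $\varepsilon$. Only the principal port actually facing the $\varepsilon$ cell becomes married (compare the $\delta\delta$ case, where $\istwo_\netone$ has two married and two single status tokens: the non-interacting principal ports stay single). The other principal port of the $\delta_2$ cell must keep its single status token, because in $\nettwo$ it survives as the principal port of a $\delta_1$ cell and must remain able to marry arbitrary partners. Marking it as married to $\varepsilon$ would disable, in $(\fmsts{\tkns^\netone},\red,\istwo_\netone)$, marriage transitions that $\tkm{\nettwo}$ can still perform, and the relation would then fail to be a bisimulation in the direction from $\tkm{\nettwo}$ back to the re-pointed machine.
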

\begin{proof}
  We are in the following situation:
  \begin{center}
    \includegraphics[scale=1.0]{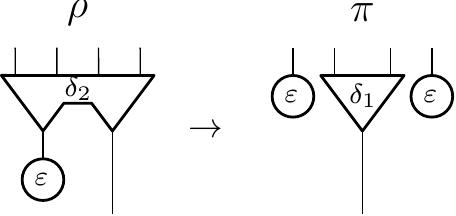}
  \end{center}
  Let $\istwo_\netone$ be a state which is obtained by performing a
  $\delta\varepsilon$ marriage in $\isone_\netone$. Then we define a bisimulation
  relation in the same way as Lemma~\ref{lemma:deltadeltasoundness}.
\end{proof}

\paragraph*{Adequacy.}
Finally, also the adequacy proof needs to be adapted, but only as for
some of the technical lemmas, the overall structure of the proof
remaining essentially unaltered. In particular, whenever we repeatedly
apply $\gamma\gamma$ in the proofs, we instead apply the rules
$\{\gamma\gamma, \gamma\varepsilon, \varepsilon\varepsilon\}$.  This
set of rules is also confluent and terminating, because the number of
cells (in particular $\gamma$ cells) is decreasing.

\end{document}